\theoremstyle{plain}
\newtheorem{thm}{\protect\theoremname}
\theoremstyle{definition}
\newtheorem{defn}[thm]{\protect\definitionname}
\theoremstyle{remark}
\newtheorem{rem}[thm]{\protect\remarkname}
\theoremstyle{plain}
\newtheorem{lem}[thm]{\protect\lemmaname}
\theoremstyle{plain}
\newtheorem{fact}[thm]{\protect\factname}
\theoremstyle{plain}
\newtheorem{prop}[thm]{\protect\propositionname}
\date{}
\providecommand{\definitionname}{Definition}
\providecommand{\factname}{Fact}
\providecommand{\lemmaname}{Lemma}
\providecommand{\propositionname}{Proposition}
\providecommand{\remarkname}{Remark}
\providecommand{\theoremname}{Theorem}
\begin{document}
\global\long\def\goinf{\rightarrow\infty}%
\global\long\def\gozero{\rightarrow0}%
\global\long\def\bra{\langle}%
\global\long\def\ket{\rangle}%
\global\long\def\union{\cup}%
\global\long\def\intersect{\cap}%
\global\long\def\abs#1{\left|#1\right|}%
\global\long\def\norm#1{\left\Vert #1\right\Vert }%
\global\long\def\floor#1{\left\lfloor #1\right\rfloor }%
\global\long\def\ceil#1{\left\lceil #1\right\rceil }%
\global\long\def\expect{\mathbb{E}}%
\global\long\def\e{\mathbb{E}}%
\global\long\def\r{\mathbb{R}}%
\global\long\def\p{\mathbb{P}}%
\global\long\def\n{\mathbb{N}}%
\global\long\def\q{\mathbb{Q}}%
\global\long\def\c{\mathbb{C}}%
\global\long\def\z{\mathbb{Z}}%
\global\long\def\grad{\nabla}%
\global\long\def\t{^{\prime}}%
\global\long\def\all{\forall}%
\global\long\def\eps{\varepsilon}%
\global\long\def\quadvar#1{V_{2}^{\pi}\left(#1\right)}%
\global\long\def\cal#1{\mathcal{#1}}%
\global\long\def\cross{\times}%
\global\long\def\del{\nabla}%
\global\long\def\parx#1{\frac{\partial#1}{\partial x}}%
\global\long\def\pary#1{\frac{\partial#1}{\partial y}}%
\global\long\def\parz#1{\frac{\partial#1}{\partial z}}%
\global\long\def\part#1{\frac{\partial#1}{\partial t}}%
\global\long\def\partheta#1{\frac{\partial#1}{\partial\theta}}%
\global\long\def\parr#1{\frac{\partial#1}{\partial r}}%
\global\long\def\curl{\nabla\times}%
\global\long\def\rotor{\nabla\times}%
\global\long\def\one{\mathbf{1}}%
\global\long\def\Hom{\text{Hom}}%
\global\long\def\pr#1{\text{Pr}\left[#1\right]}%
\global\long\def\almost{\mathbf{\approx}}%
\global\long\def\tr{\text{Tr}}%
\global\long\def\var{\mathrm{Var}}%
\global\long\def\cov{\text{Cov}}%
\global\long\def\onenorm#1{\left\Vert #1\right\Vert _{1}}%
\global\long\def\twonorm#1{\left\Vert #1\right\Vert _{2}}%
\global\long\def\Inj{\mathfrak{Inj}}%
\global\long\def\inj{\mathsf{inj}}%
\global\long\def\lone{\mathrm{L}^{1}}%
\global\long\def\ltwo{\mathrm{L}^{2}}%
\global\long\def\elone{\mathrm{L}^{1}}%
\global\long\def\eltwo{\mathrm{L}^{2}}%
\global\long\def\Inf{\mathrm{Inf}}%
\global\long\def\i{\mathrm{I}}%
\global\long\def\sign{\mathrm{sign}}%
\global\long\def\tensor{\otimes}%
\global\long\def\ans{\mathrm{ans}}%
\global\long\def\axis{\Delta_{\dagger}}%

\global\long\def\g{\mathfrak{\cal G}}%
\global\long\def\f{\mathfrak{\cal F}}%

\title{Noise sensitivity from fractional query algorithms and the axis-aligned
Laplacian}
\author{Renan Gross\thanks{Weizmann Institute of Science. Email: renan.gross@weizmann.ac.il.
Supported by the Adams Fellowship Program of the Israel Academy of
Sciences and Humanities.}}
\maketitle
\begin{abstract}
We introduce the notion of classical fractional query algorithms,
which generalize decision trees in the average-case setting, and can
potentially perform better than them. We show that the limiting run-time
complexity of a natural class of these algorithms obeys the non-linear
partial differential equation $\min_{k}\partial^{2}u/\partial x_{k}^{2}=-2$,
and that the individual bit revealment satisfies the Schramm-Steif
bound for Fourier weight, connecting noise sensitivity with PDEs.
We discuss relations with other decision tree results.
\end{abstract}
\tableofcontents{}

\section{Introduction}

\subsection{Decision trees and Boolean functions}

A decision tree is an adaptive algorithm for determining the value
of a function $f:\left\{ -1,1\right\} ^{n}\to\r$ given an unknown
input $x\in\left\{ -1,1\right\} ^{n}$. At each step, the algorithm
(possibly randomly) chooses an index $i\in\left[n\right]$, and queries
the value of the bit $x_{i}$. We do not require that the algorithm
always calculate $f\left(x\right)$ exactly: it can stop running and
output some value in $\r$ even if it has not queried enough bits
to fix the value of $f\left(x\right)$. In general, the goal of the
decision tree is to read as few bits as possible. Due to their simplicity
as a computational model, decision trees have been studied extensively,
especially in the \emph{worst-case} setting, where the complexity
of a tree is defined as the maximum expected number of queries it
makes over all inputs $x\in\left\{ -1,1\right\} ^{n}$; see the excellent
survey by Buhrman and de Wolf \cite{buhrman_de_wolf_survey} for an
exposition. 

In this paper we will investigate the \emph{average-case} setting,
where the input $x$ is drawn from some distribution $\mu$ on $\left\{ -1,1\right\} ^{n}$.
Two common complexity measures in this case are the expected number
of queries that the tree makes, and the maximum probability that it
reads any particular bit. In this setting, the complexity of a decision
tree is tied with central notions in the analysis of Boolean functions,
such as their variance and influences (see Section \ref{subsec:background_boolean}
for a review of Boolean functions). One result in this vein is given
by Schramm and Steif \cite{ss_quantitative_noise_sensitivity}, and
was originally used to show quantitative noise sensitivity for percolation
crossing events. It relates the probability of the algorithm to query
a bit (called \emph{revealment})\emph{ }to the Fourier mass at level
$k$ of the function:
\begin{thm}[Theorem 1.8 in \cite{ss_quantitative_noise_sensitivity}]
\label{thm:ss_inequality}Let $f:\left\{ -1,1\right\} ^{n}\to\r$
have Fourier representation $f\left(x\right)=\sum_{S\subseteq\left[n\right]}\hat{f}\left(S\right)\prod_{i\in S}x_{i}$,
and let $T$ be a decision tree calculating $f$ when the input is
uniform. Set $\delta=\max_{i}\p\left[\text{\ensuremath{T} reads bit \ensuremath{i}}\right]$.
Then for every $k\in\n$, the Fourier coefficients of $f$ satisfy
\begin{equation}
\sum_{\abs S=k}\hat{f}\left(S\right)^{2}\leq\delta k\norm f_{2}^{2}.\label{eq:ss_10_original}
\end{equation}
\end{thm}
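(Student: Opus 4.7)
The plan is the standard bit-flipping plus Cauchy-Schwarz approach. View the randomized tree $T$ as a distribution over deterministic trees $T_\omega$, and write $J_\omega(x) \subseteq [n]$ for the (random) set of bits read on input $x$; by assumption $\pr{i \in J(x)} \leq \delta$ for every $i$, and $T_\omega(x) = f(x)$ identically so that $\norm{T}_2 = \norm{f}_2$.

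The first step is the identity
\[
\hat{f}(S) = \e\!\left[T(x)\,\chi_S(x)\,\one\!\left[S \subseteq J(x)\right]\right].
\]
For a deterministic $T_\omega$, flipping a bit $i \notin J_\omega(x)$ changes neither the trajectory nor the output; pairing $x$ with its $i$th-neighbour for some unread $i \in S$ flips the sign of $\chi_S$ and cancels every contribution to $\e[f(x)\chi_S(x)]$ coming from inputs with $S \not\subseteq J_\omega(x)$.

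Next, introduce the auxiliary random function
\[
h_\omega(x) := \sum_{\abs{S}=k,\, S\subseteq J_\omega(x)} \hat{f}(S)\,\chi_S(x),
\]
which has the clean interpretation $h_\omega(x) = \e[f^{=k}(x) \mid \text{trajectory of } T]$ (conditioning on the trajectory fixes the bits in $J$ and averages over the rest). Summing the identity over $\abs{S}=k$ gives $\sum_{\abs{S}=k} \hat{f}(S)^2 = \e[T(x)\,h_\omega(x)]$, and Cauchy-Schwarz then yields $\bigl(\sum_{\abs{S}=k} \hat{f}(S)^2\bigr)^2 \leq \norm{f}_2^2 \cdot \e[h_\omega(x)^2]$, reducing the theorem to the second-moment estimate $\e[h_\omega(x)^2] \leq k\delta \sum_{\abs{S}=k} \hat{f}(S)^2$.

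Bounding $\e[h^2]$ is the main obstacle. Expanding into pairs $(S,S')$ of size-$k$ sets, the diagonal contribution ($S=S'$) is $\sum_{\abs{S}=k} \hat{f}(S)^2\,\pr{S \subseteq J} \leq \delta \sum_{\abs{S}=k} \hat{f}(S)^2$ via $\pr{S \subseteq J} \leq \min_{i \in S}\pr{i \in J} \leq \delta$, accounting for one factor of $\delta$. The cross terms $\hat{f}(S)\hat{f}(S')\,\e[\chi_{S \triangle S'}(x)\,\one[S \cup S' \subseteq J]]$ for $S\neq S'$ are the crux: the bit-flipping trick cannot cancel them, since $S \cup S' \subseteq J$ already forces every bit of $S \triangle S'$ to be read. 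To extract the additional factor of $k-1$, I would use the decomposition
\[
h_\omega(x) = \frac{1}{k}\sum_{i} \one[i \in J_\omega(x)]\,g_{i,\omega}(x),\qquad g_{i,\omega}(x) := \sum_{\abs{S}=k,\,i \in S,\,S \subseteq J_\omega(x)} \hat{f}(S)\,\chi_S(x),
\]
and apply Cauchy-Schwarz index by index, exploiting that each $S$ of size $k$ contributes to exactly $k$ pivots. Making this bookkeeping yield the clean factor $k$---without picking up spurious $\binom{n}{k}$ or $\abs{J}$ terms, and while remaining valid for general real-valued $f$ where $\e[T^2\one[i \in J]]$ is not automatically controlled by $\delta\norm{f}_2^2$---is the central technical difficulty; an alternative path is induction on $k$, reducing the level-$k$ estimate to a level-$(k-1)$ one after conditioning on the tree's first query.
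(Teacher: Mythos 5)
Your setup is the right one and matches the classical route: the identity $\hat f(S)=\e\bigl[T(x)\chi_S(x)\one[S\subseteq J(x)]\bigr]$ via bit-flipping, the conditional-expectation function $h_\omega(x)=\sum_{\abs S=k,\,S\subseteq J_\omega(x)}\hat f(S)\chi_S(x)$, and the Cauchy--Schwarz reduction to the second-moment estimate $\e[h^2]\leq k\delta\sum_{\abs S=k}\hat f(S)^2$. But that estimate is the entire content of the theorem --- both the factor $\delta$ and the factor $k$ live there --- and you do not prove it. You correctly observe that the off-diagonal terms $\hat f(S)\hat f(S')\,\e[\chi_{S\triangle S'}(x)\one[S\cup S'\subseteq J]]$ cannot be killed by bit-flipping (the indicator already forces every bit of $S\triangle S'$ to be read), you correctly note that applying Cauchy--Schwarz to your pivot decomposition $h=\frac1k\sum_i\one[i\in J]g_i$ produces a spurious factor of $\abs J$ rather than $k$, and you then stop, labelling the step ``the central technical difficulty'' and gesturing at an induction on $k$ without carrying it out. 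As it stands this is a correct reduction plus an unproved lemma, not a proof.

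For comparison, the paper closes exactly this gap in the proof of Theorem \ref{thm:fractional_ss_10} (whose $0$-error specialization is the statement at hand), and it does so without ever separating diagonal from off-diagonal pairs. Writing $g=f^{=k}$, it introduces the interpolation $g_x(y)=\sum_S\hat g(S)\prod_{i\in S}(\sqrt{1-x_i^2}\,y_i+x_i)$ so that $h=g(X(\tau))=\widehat{g_{X(\tau)}}(\emptyset)$; Parseval in $y$ gives $\e[h^2]=\e\norm{g_{X(\tau)}}_2^2-\e\sum_{\abs S>0}\widehat{g_{X(\tau)}}(S)^2$, the first term equals $\norm g_2^2$ exactly because $\e\prod_{i\in A}X_i(\tau)=0$ by optional stopping (this is where the cross terms die), and the Weierstrass inequality $\prod_{i\in S}(1-X_i(\tau)^2)\geq1-\sum_{i\in S}X_i(\tau)^2$ turns the remainder into $\sum_{\abs S=k}\hat g(S)^2\sum_{i\in S}\e X_i(\tau)^2\leq k\delta\norm g_2^2$ --- each set $S$ pays the sum of the revealments of its $k$ elements, which is the clean source of the factor $k$. (The original Schramm--Steif argument instead decomposes $\e[h^2]$ as the sum of squared increments of the martingale $M_j=\e[g\mid\mathcal F_j]$ over query times, with each $S$ contributing at its at most $k$ query times.) Either of these would complete your argument; if you want to salvage your own decomposition, the martingale-increment version is the one that organizes the bookkeeping by query time rather than by pivot index and avoids the $\abs J$ loss.
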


Two vectors $x,y\in\left\{ -1,1\right\} ^{n}$ are said to be $\rho$-correlated
if $x$ and $y$ are uniform in $\left\{ -1,1\right\} ^{n}$ and $\e\left[x_{i}y_{i}\right]=\rho$
for all $i\in\left[n\right]$. A sequence of functions $f_{m}:\left\{ -1,1\right\} ^{n_{m}}\to\left\{ -1,1\right\} $
is said to be noise sensitive with respect to noise $\eps_{m}>0$,
if, when $x,y\in\left\{ -1,1\right\} ^{n_{m}}$ are $\left(1-\eps_{m}\right)$-correlated,
then $\lim_{m\to\infty}\e\left[f\left(x\right)f\left(y\right)\right]-\e\left[f\left(x\right)\right]^{2}=0$.
For monotone functions, if $\delta\left(f_{m}\right)\to0$, then Theorem
\ref{thm:ss_inequality}, together with a theorem of Benjamini, Kalai
and Schramm connecting Fourier coefficients and noise sensitivity
\cite[Theorem 1.5]{bks_noise_sensitivity_of_boolean_functions}, gives
quantitative bounds on how small the noise $\eps_{m}$ can be, depending
on how quickly $\delta\left(f_{m}\right)\to0$. 

Another useful inequality was given by O'Donnell, Saks, Schramm and
Steif \cite{osss_every_decision_tree_has_an_influential_variable},
and relates the revealment probabilities to the function's influences.
The influence $\Inf_{i}\left(f\right)$ of the $i$-th bit is the
probability that flipping the $i$-th bit changes $f$'s value when
the input is uniform. The theorem states:
\begin{thm}[Theorem 1.1 in \cite{osss_every_decision_tree_has_an_influential_variable}]
\label{thm:osss_inequality}Let $f:\left\{ -1,1\right\} ^{n}\to\left\{ -1,1\right\} $.
Let $T$ be a decision tree calculating $f$ when the input is uniform
and set $\delta_{i}=\p\left[\text{\ensuremath{T} reads bit \ensuremath{i}}\right]$.
Then 
\begin{equation}
\var\left(f\right)\leq\sum_{i=1}^{n}\delta_{i}\Inf_{i}\left(f\right).\label{eq:original_osss_inequality}
\end{equation}
\end{thm}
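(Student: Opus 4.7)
The plan is to prove the OSSS inequality via a Doob martingale adapted to the decision tree's query order. Let $X$ be uniform on $\{-1,1\}^n$, and (taking $T$ deterministic without loss of generality) let $b_1, \ldots, b_K$ denote the sequence of (index, bit-value) pairs revealed by $T$ on $X$, with $K$ the bounded random stopping time. Set $M_j = \e[f(X) \mid b_1, \ldots, b_j]$. Since $T$ computes $f$, $M_K = f(X)$ almost surely, so the orthogonality of martingale differences gives
\[
\var(f) = \e\bigl[(M_K - \e[f])^2\bigr] = \sum_{j \ge 1} \e\bigl[(M_j - M_{j-1})^2\bigr].
\]

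To bound each squared increment by a conditional influence, condition on the path $P_{j-1}$ through step $j-1$. The next queried index $i_j$ is a deterministic function of $P_{j-1}$, and $X_{i_j}$ is a fresh uniform bit, so a direct calculation gives $M_j - M_{j-1} = \tfrac{X_{i_j}}{2}(a_+ - a_-)$, with $a_\pm = \e[f \mid P_{j-1}, X_{i_j} = \pm 1]$. Applying Jensen's inequality to $a_+ - a_-$ written as a single conditional expectation, and using $f \in \{-1,1\}$ so that $(f(x) - f(x^{\oplus i}))^2 = 4\cdot\one_{f(x)\neq f(x^{\oplus i})}$, yields
\[
(M_j - M_{j-1})^2 \le \pr{f(X') \neq f((X')^{\oplus i_j}) \mid P_{j-1}(X') = P_{j-1}(X)}
\]
for an independent uniform copy $X'$. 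Summing over $j$ and reindexing by the uniquely determined queried coordinate $i$, we arrive at
\[
\var(f) \le \sum_{i=1}^n \pr{i \text{ queried on } X,\ P_{<i}(X') = P_{<i}(X),\ f(X') \neq f((X')^{\oplus i})},
\]
where $P_{<i}$ denotes the path revealed just before $T$ queries $i$.

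The final step is to bound each $i$-term by $\delta_i \Inf_i(f)$. Grouping by the specific (index,value)-path $p$ preceding the query of $i$ and using independence of $X, X'$, the $i$-summand equals
\[
\sum_p 2^{-|p|}\,\pr{X'_p = p,\ f(X') \neq f((X')^{\oplus i})} = 2^{-n}\sum_p 2^{-|p|}\,\bigl|A_i \cap \{x : x_p = p\}\bigr|,
\]
where $A_i = \{x : f(x) \neq f(x^{\oplus i})\}$ satisfies $|A_i| = 2^n \Inf_i(f)$. The trivial containment $|A_i \cap \{x_p = p\}| \le |A_i|$ and the identity $\sum_p 2^{-|p|} = \delta_i$ (summing over all paths leading to $T$ querying $i$) collapse the bound to $\Inf_i(f) \cdot \delta_i$, completing the proof. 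I expect the main obstacle to be the combinatorial bookkeeping in this last step --- reindexing from $j$ to $i$ and setting up the joint $(X, X')$ distribution cleanly --- but once that is in place, the rest is essentially mechanical.
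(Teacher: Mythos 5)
Your martingale decomposition and the Jensen step are sound as far as they go: they give
\[
\e\left[\left(M_{j}-M_{j-1}\right)^{2}\right]\le\e_{X}\left[\one\left[j\text{-th query made}\right]\cdot\p\left[f\left(X'\right)\neq f\left(\left(X'\right)^{\oplus i_{j}}\right)\mid P_{j-1}\left(X'\right)=P_{j-1}\left(X\right)\right]\right],
\]
and by the tower property the right-hand side equals $\p\left[i_{j}\text{ queried and pivotal for }X\right]$. The gap is in your next displayed inequality, where you replace this expectation of a \emph{conditional} probability by the \emph{joint} probability $\p\left[i\text{ queried on }X,\ P_{<i}\left(X'\right)=P_{<i}\left(X\right),\ f\left(X'\right)\neq f\left(\left(X'\right)^{\oplus i}\right)\right]$. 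The joint probability carries an extra factor $\p\left[X'_{p}=p\right]=2^{-\left|p\right|}$ per path and is therefore \emph{smaller} than the quantity you actually controlled; the asserted inequality goes the wrong way and is in fact false. Concretely, let $f\left(x_{1},x_{2},y\right)=y$ and let $T$ query $x_{1},x_{2},y$ in that order: then $\var\left(f\right)=1$, while your right-hand side is $\tfrac{1}{4}$ (the only nonzero term is $i=y$, whose consistency event has probability $\tfrac{1}{4}$). It is precisely this spurious $2^{-\left|p\right|}$ that, recombined with $\sum_{p}2^{-\left|p\right|}=\delta_{i}$ in your last step, manufactures the product $\delta_{i}\Inf_{i}\left(f\right)$.

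What the $\mathrm{L}^{2}$ route legitimately yields is $\var\left(f\right)\le\sum_{i}\p\left[i\text{ queried and pivotal}\right]\le\sum_{i}\min\left(\delta_{i},\Inf_{i}\left(f\right)\right)$, which is genuinely weaker than OSSS and cannot be upgraded termwise: for the two-bit OR with the natural tree, bit $2$ is queried if and only if it is pivotal, so $\p\left[2\text{ queried and pivotal}\right]=\tfrac{1}{2}>\delta_{2}\Inf_{2}\left(f\right)=\tfrac{1}{4}$, and your corrected total bound is $1$ versus the (tight) OSSS value $\var\left(f\right)=\tfrac{3}{4}$. The product $\delta_{i}\Inf_{i}\left(f\right)$ requires the resampled input to be independent of the event that the tree queries $i$; the actual OSSS argument achieves this by telescoping $\abs{f\left(x\right)-f\left(y\right)}$ (first powers, not squares) along a hybrid sequence in which the unqueried coordinates are replaced by a fresh uniform input. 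This is exactly the $\mathrm{L}^{1}$-versus-$\mathrm{L}^{2}$ obstruction described in open question 2 of Section \ref{subsec:open_questions}. Note also that the paper does not prove Theorem \ref{thm:osss_inequality} itself --- it is quoted from the OSSS paper --- and only establishes the weaker degree-dependent Theorem \ref{thm:fractional_osss} by a Fourier argument, so there is no in-paper proof to compare against.
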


Inequality (\ref{eq:original_osss_inequality}) can be used to show
that functions which have small decision trees (i.e. make a small
number of queries) must have an influential variable. 

The above two theorems show why it is beneficial to find decision
trees which query each bit with as small probability as possible.
However, a classical theorem by Benjamini, Schramm and Wilson \cite{bsw_balanced_boolean_functions}
states that the revealment $\delta$ cannot be too small.
\begin{thm}[Theorem 2 in \cite{bsw_balanced_boolean_functions}]
\label{thm:bsw_inequality}Let $f:\left\{ -1,1\right\} ^{n}\to\left\{ -1,1\right\} $.
Let $T$ be a decision tree calculating $f$ which can err with probability
$w$. Let $\delta=\max_{i}\p\left[\text{\ensuremath{T} reads bit \ensuremath{i}}\right]$.
Then 
\[
w\geq\frac{1}{8}\var\left(f\right)-\frac{1}{4}n\delta^{2}.
\]
\end{thm}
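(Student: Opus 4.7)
My plan is to combine the OSSS inequality (Theorem~\ref{thm:osss_inequality}) with a simple structural observation about decision trees, and then transfer the variance bound from the tree's output to $f$.

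First, I view the tree $T$ as the $\{-1,1\}$-valued Boolean function it computes (trivially, for deterministic $T$). Applying OSSS to $T$ regarded as exactly computing itself yields
\[
\var(T)\le\sum_{i=1}^{n}\delta_{i}\Inf_{i}(T).
\]

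Next, I will show $\Inf_{i}(T)\le\delta_{i}$ for every $i$: on inputs $x$ and $x^{(i)}$, the tree takes identical sequences of queries as long as it has not queried bit $i$, so $T(x)\neq T(x^{(i)})$ forces the tree to have queried bit $i$ on $x$. Combined with the previous step, this gives $\var(T)\le\sum_{i}\delta_{i}^{2}\le n\delta^{2}$.

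Third, I transfer the bound from $T$ to $f$. Since both functions are $\{-1,1\}$-valued and $\Pr[T\neq f]\le w$, we have $|\e T-\e f|\le 2w$. Using $\var(g)=1-(\e g)^{2}$ for $\pm1$ valued $g$ and the factorization $(\e T)^{2}-(\e f)^{2}=(\e T-\e f)(\e T+\e f)$, we get $|\var(f)-\var(T)|\le 4w$, hence
\[
\var(f)\le n\delta^{2}+4w,
\]
which rearranges to $w\ge\var(f)/4-n\delta^{2}/4$ and implies the stated inequality.

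The main obstacle is that the cited OSSS inequality assumes the tree computes $f$ exactly. For deterministic $T$ this is fine, since $T$ exactly computes its own output function. For randomized trees one must either condition on the coins and derandomize (losing at most a constant factor), or use an extension of OSSS to real-valued functions applied to $\bar T(x)=\e_{r}T_{r}(x)\in[-1,1]$; the path argument above still yields $\Inf_{i}(\bar T)\le\delta_{i}$ in that setting.
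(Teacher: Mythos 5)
Your route is genuinely different from the one the paper takes. The paper (following \cite{bsw_balanced_boolean_functions}, and as reproduced for Theorem \ref{thm:fractional_bsw}) runs the algorithm on two independent uniform inputs $x,y$, bounds $\p\left[S_{x}\intersect S_{y}\neq\emptyset\right]\leq\sum_{i}\delta_{i}^{2}$ by a union bound over the two independent runs, and observes that disjoint certificates force $f\left(x\right)=f\left(y\right)$; the errors enter only through $\p\left[T\left(x\right)\neq f\left(x\right)\right]$. You instead combine OSSS applied to the function the tree itself computes with the observation $\Inf_{i}\left(T\right)\leq\delta_{i}$, then transfer the variance bound from $T$ to $f$ at a cost of $4w$. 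For a \emph{deterministic} tree every step of yours is correct: $T$ computes itself exactly, the path argument gives $\Inf_{i}\left(T\right)\leq\delta_{i}$, and the transfer $\abs{\var\left(f\right)-\var\left(T\right)}\leq4w$ is right for $\pm1$-valued functions. You even get the stronger constant $w\geq\frac{1}{4}\var\left(f\right)-\frac{1}{4}n\delta^{2}$, consistent with the known fact that the OSSS route improves the zero-error BSW bound from $\delta\geq\sqrt{\var\left(f\right)/2n}$ to $\delta\geq\sqrt{\var\left(f\right)/n}$.

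The gap is in the randomized case, which is the setting of the BSW theorem (decision trees in this paper may randomize their queries). Your first fix, conditioning on the coins $r$ and averaging, does not lose ``at most a constant factor'': for each fixed $r$ you get $\var\left(f\right)\leq\sum_{i}\delta_{i}\left(r\right)^{2}+4w_{r}$, and averaging produces $\sum_{i}\e_{r}\left[\delta_{i}\left(r\right)^{2}\right]$, which Jensen bounds from \emph{below} by $\sum_{i}\delta_{i}^{2}$, not from above. In the worst case $\e_{r}\left[\delta_{i}\left(r\right)^{2}\right]$ is of order $\delta_{i}$ (e.g.\ a tree that with probability $\delta$ reads everything and otherwise reads nothing), so this route only yields the useless bound $\var\left(f\right)\leq\sum_{i}\delta_{i}+4w$. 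Your second fix is also not justified as stated: the randomized tree does not compute $\bar{T}\left(x\right)=\e_{r}T_{r}\left(x\right)$ on any execution, so OSSS (which requires the tree to output the function's value) does not apply to $\bar{T}$; moreover the identity $\var\left(g\right)=1-\left(\e g\right)^{2}$ used in your transfer step fails for $\left[-1,1\right]$-valued $g$. This is precisely what the certificate argument sidesteps: the distribution $\Gamma_{x}$ absorbs the algorithm's internal randomness, and the quantity $\sum_{i}\delta_{i}^{2}$ arises from the independence of the two runs rather than from squaring revealments within a single run.
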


\subsection{Our results}

Our main results are summarized as follows.
\begin{itemize}
\item Generalizing the notion decision trees, we define \emph{fractional
query algorithms}, where a query does not return the value of the
bit, but rather reveals some information on what the bit is likely
to be. Repeated fractional queries reveal more information, turning
the bits into time-dependent processes $X_{i}\left(t\right)$ taking
values in $\left[-1,1\right]$. This class of algorithms contains
the class of decision trees, but can potentially contain algorithms
which have better run-time complexity. Our framework allows us to
formulate the problem of decision tree complexity in the language
of optimal control theory.
\item We prove Theorem \ref{thm:ss_inequality} for fractional query algorithms,
and show that weaker versions of Theorem \ref{thm:osss_inequality}
and Theorem \ref{thm:bsw_inequality} also apply.
\item We show that for a natural class of fractional query algorithms, the
limiting run-time complexity obeys a partial differential equation
corresponding to a simple dynamic programming principle. 
\end{itemize}

\subsubsection{Fractional query algorithms}

Let $\mu$ be a product measure on $\left\{ -1,1\right\} ^{n}$. There
are two equivalent views for average-case decision tree inputs. In
the first, an input $x$ is chosen according to $\mu$, and the decision
tree simply queries its bits. In the second, no input is chosen beforehand,
and whenever the decision tree queries a bit, that bit is randomly
set to $\pm1$ according to $\mu$. In this case, the input to the
decision tree can be seen as a stochastic process $X\left(t\right)\in\left\{ -1,*,1\right\} ^{n}$.
The process starts at $X\left(0\right)=\left(*,\ldots,*\right)$,
where the $*$'s indicate that none of the bits have been read yet.
At time $t$, the decision tree picks a (possibly random) index $i_{t}$
to query, and the $i_{t}$-th coordinate is set to $\pm1$ with probabilities
according to $\mu$. The process stops at time $\tau$, once the partial
input $X\left(\tau\right)$ determines $f$ (or once the algorithm
guesses the value of $f$, if it is allowed to make errors). The probability
of reading a bit is given by $\e\left[X_{i}\left(\tau\right)^{2}\right]$
($*$'s are treated as $0$), and the expected number of queries $C$
is equal to $C=\e\sum_{i=1}^{n}X_{i}\left(\tau\right)^{2}$. 

Building on the second view, in our framework the random input bits
do not have to be completely queried, but rather can be queried fractionally.
At time $t$, the algorithm chooses a bit $i_{t}$ and makes an $\eps$-query
on it; this causes the $i_{t}$-th coordinate to be randomly updated:
$X_{i_{t}}\left(t+1\right)=X_{i_{t}}\left(t\right)\pm\eps$, each
with probability $1/2$. Thus, the input stochastic process $X\left(t\right)$
does not have to take values in $\left\{ -1,*,1\right\} ^{n}$ as
is the case for decision trees, but rather can take arbitrary values
in the continuous cube, $X\left(t\right)\in\left[-1,1\right]^{n}$.
Like ordinary decision trees, the process continues in this way until
some time $\tau$, when the algorithm decides that it has enough information
to guess the value of $f$ from the partial input $X\left(\tau\right)$.
The idea behind this type of algorithm is that by using partial queries,
the algorithm can get a sense of what the input will be, and avoid
completely revealing bits which are likely to be useless. When the
cost of making a partial query is appropriately defined, the complexity
of fractional query algorithms yields bounds on properties of Boolean
functions in a similar fashion to decision trees. We give the formal
definitions below.
\begin{defn}[Axis-aligned jump process]
\label{def:axis_aligned_process}Let $\eps=2^{-k}$ for some integer
$k>0$. An \emph{axis-aligned jump process} with jump size $\eps$
is a discrete-time martingale $X\left(t\right)\in\left[-1,1\right]^{n}\intersect2^{-k}\z^{n}$
defined as follows\footnote{Forcing the jumps to be of size $\eps$ makes the analysis of these
algorithms simpler. Allowing the jumps to be arbitrarily small moves
the analysis into the continuous time domain, which introduces problems
of measurability. See also Section \ref{subsec:open_questions}. We
will later take the limit of $\eps\to0$, in a controlled fashion.}. At time $t$, if $X\left(t\right)\notin\left\{ -1,1\right\} ^{n}$,
a direction $i_{t}\in\left[n\right]$ with $X_{i_{t}}\left(t\right)\in\left(-1,1\right)$
is chosen according to a\emph{ direction choosing strategy} $S$ (see
below for more details). The new position $X\left(t+1\right)$ is
then updated in a martingale fashion: for all $j\neq i_{t}$, we have
$X_{j}\left(t+1\right)=X_{j}\left(t\right)$, and for $i_{t}$ we
have $X_{i_{t}}\left(t+1\right)=X_{i_{t}}\left(t+1\right)\pm\eps$,
each with probability $1/2$. 

If $X\left(t\right)\in\left\{ -1,1\right\} ^{n}$, then $X\left(t\right)$
stays put, i.e. $X\left(t+1\right)=X\left(t\right)$. We denote this
final value as $X\left(\infty\right)$. It is not hard to see that
$X\left(\infty\right)$ distributes according to the product measure
whose mean is $X\left(0\right)$.
\end{defn}

\begin{rem}
There are several types of direction choosing strategies. The simplest
type are deterministic Markov strategies, where the direction in which
to go is only determined by the current position. We can thus write
$i_{t}=S\left(X\left(t\right)\right)$, for some function $S:\left[-1,1\right]^{n}\to\left[n\right]$.
In the most general setting, we can let the chosen direction depend
explicitly on the time $t$, on the history of the trajectory, and
on additional randomness which is independent of future decisions.
We can also allow ``lazy'' strategies, which sometimes do not move
$X\left(t\right)$ at all (this will be useful for later analysis). 
\end{rem}

\begin{defn}[Axis-aligned query algorithm]
\label{def:axis_aligned_algorithms}Let $X\left(t\right)$ be an
axis-aligned jump process with $X\left(0\right)=x_{0}$. An \emph{axis-aligned
query algorithm} is a triple $Q=\left(X\left(t\right),\tau,A\right)$,
where $\tau$ is an $X\left(t\right)$-adapted stopping time, and
$A:\left[-1,1\right]^{n}\to\r$ is some function. The output of the
algorithm is $A\left(X\left(\tau\right)\right)$. We treat $X\left(\infty\right)$
as the input to the function $f$ under the product distribution whose
mean is $x_{0}$, and say that the algorithm has 0 error if $A\left(X\left(\tau\right)\right)=f\left(X\left(\infty\right)\right)$
almost surely. The individual bit revealments are given by 
\begin{equation}
\delta_{i}:=\e\left[\left(X_{i}\left(\tau\right)-X_{i}\left(0\right)\right)^{2}\right]=\var\left(X_{i}\left(\tau\right)\right),\label{eq:revealment_of_axis_aligned_algorithm}
\end{equation}
and the total cost of algorithm is given by 
\begin{equation}
C\left(x_{0},Q\right):=\sum_{i=1}^{n}\delta_{i}.\label{eq:cost_of_axis_aligned_algorithm}
\end{equation}
We write $C\left(Q\right)=C\left(0,Q\right)$ for the cost of the
algorithm on the uniform measure. We denote the set of all $0$-error
axis-aligned algorithms by $\mathcal{S}$, and the set of all $0$-error
decision trees by $\mathcal{D}$. Decision trees are just axis-aligned
algorithms, where the jump size is $\eps=1$. 
\end{defn}

For an example of how the process $X\left(t\right)$ might look like,
see Figure \ref{fig:or_sample_paths}, which shows some sample paths
of the best axis-aligned algorithm for calculating the OR function
on two bits. 

\begin{figure}[H]
\includegraphics[scale=0.33]{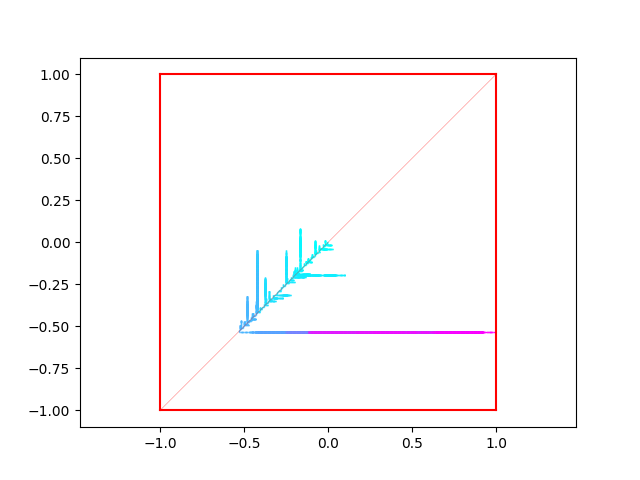}\includegraphics[scale=0.33]{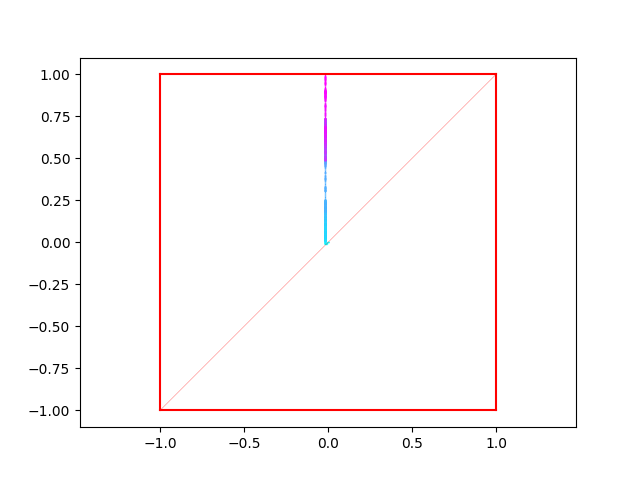}\includegraphics[scale=0.33]{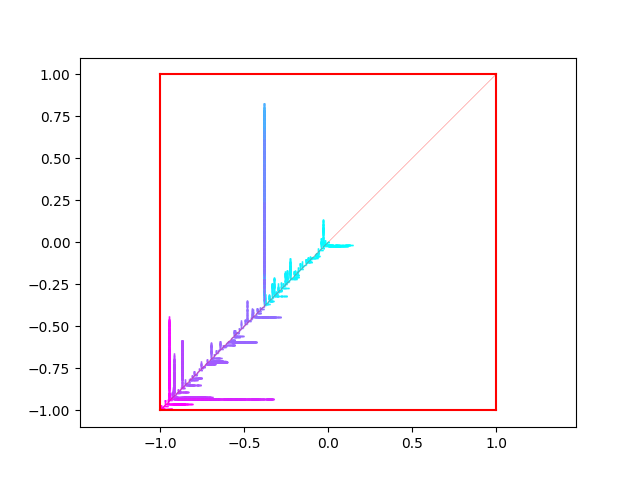}\caption{\label{fig:or_sample_paths}As will be shown in Section \ref{sec:or_example},
the fastest axis-aligned algorithm for computing the $n$-bit OR function
always updates the largest bit. Here are three sample paths for this
algorithm, with $n=2$ and $\protect\eps=2^{-7}$. The first two images
correspond to a function value of $1$, while the last image corresponds
to a value of $-1$.}
\end{figure}

\begin{rem}
The Fourier representation of $f:\left\{ -1,1\right\} ^{n}\to\r$
as a multilinear polynomial $f\left(x\right)=\sum_{S\subseteq\left[n\right]}\hat{f}\left(S\right)\prod_{i\in S}x_{i}$
allows us to extend $f$'s domain to the continuous cube $\left[-1,1\right]^{n}$.
This gives two natural choices for the output function $A$:
\begin{enumerate}
\item $A\left(x\right)=f\left(x\right)$. For any given stopping time $\tau$,
this choice minimizes the $\mathrm{L}^{2}$ error $\e\left[\left(f\left(X\left(\infty\right)\right)-A\left(X\left(\tau\right)\right)\right)^{2}\right]$
(to see this, differentiate the error by the Fourier coefficients
of $A$).
\item $A\left(x\right)=\sign\left(f\left(x\right)\right)$ (with arbitrary
$\pm1$ values when $f\left(x\right)=0$). When $f$ takes only the
values $\pm1$ on $\left\{ -1,1\right\} ^{n}$, for any given stopping
time $\tau$, this choice minimizes the error probability $\p\left[A\left(X\left(\tau\right)\right)\neq f\left(X\left(\infty\right)\right)\right]$.
\end{enumerate}
\end{rem}

By mimicking their coordinate-by-coordinate updates, axis-aligned
algorithms are a natural generalization of decision trees. However,
our revealment results apply to a larger class of algorithms, defined
below, which use more general processes to model their input. For
example, the process $X\left(t\right)$ can be driven by Brownian
motion, where only one coordinate can move at a given time. These
processes are common in optimal control theory (see e.g. \cite{mandelbaum_shepp_vanderbei_optimal_switching_between_brownian_motion}).
\begin{defn}[Fractional query algorithm]
\label{def:fractional_query_algorithm}Let $\mu$ be a product measure
with mean $x_{0}$ from which the inputs are drawn. A \emph{fractional
query algorithm} on inputs distributed by $\mu$ is a triplet $Q:=\left(X\left(t\right),\tau,A\right)$
such that:
\begin{enumerate}
\item $X\left(t\right)\in\left[-1,1\right]^{n}$ is a martingale, $X\left(0\right)=x_{0}$,
and $X\left(\infty\right)\sim\mu$. The time parameter can be either
discrete or continuous.
\item $\tau$ is an $X\left(t\right)$-adapted stopping time. 
\item $A$ is a function $A:\left[-1,1\right]^{n}\to\r$. 
\item \label{enu:linear_multivariates_are_martingales_condition}For every
$S\subseteq\left[n\right]$, the stochastic process $\left(\prod_{i\in S}X_{i}\left(t\right)\right)_{t}$
is a martingale. 
\end{enumerate}
The output of the algorithm is $A\left(X\left(\tau\right)\right)$.
The revealment and cost of fractional query algorithms are the same
as in equations (\ref{eq:revealment_of_axis_aligned_algorithm}) and
(\ref{eq:cost_of_axis_aligned_algorithm}). We denote the set of all
fractional query algorithms by $\mathcal{Q}$. 
\end{defn}

\begin{rem}
\label{rem:remark_on_quadratic_variation}Using the fact that the
expected value of the quadratic variation of a martingale is equal
to its variance, we can also write the cost as
\[
C\left(x_{0},Q\right)=\e\sum_{i=1}^{n}\left[X_{i}\right]_{\tau},
\]
where $\left[X_{i}\right]_{t}$ is the quadratic variation of $X_{i}$
at time $t$. When $X\left(t\right)$ is a discrete-time process,
this means that the cost of the algorithm is given by the expected
sum of squares of its jumps. 
\end{rem}

\begin{rem}
Condition (\ref{enu:linear_multivariates_are_martingales_condition})
ensures that the individual coordinates posses ``enough independence'',
and is needed for one of our theorems later on. It is satisfied for
all axis-aligned processes. 
\end{rem}

\subsubsection{Revealment bounds}

The class of fractional query algorithms $\mathcal{Q}$ contains the
class of decision trees $\mathcal{D}$, and so in general, for every
$i\in\left[n\right]$, $\inf_{Q\in\mathcal{Q}}\delta_{i}\leq\inf_{D\in\mathcal{D}}\delta_{i}$.
We show that several classical results concerning decision tree revealment
hold also for fractional query algorithms. We start with Theorem \ref{thm:ss_inequality}
by Schramm and Steif.
\begin{thm}
\label{thm:fractional_ss_10}Let the inputs be drawn from the uniform
distribution. Let $f:\left\{ -1,1\right\} ^{n}\to\r$, let $\left(X\left(t\right),\tau,f\right)$
be a fractional query algorithm for $f$ with $\mathrm{L}^{2}$ error
bounded by $\eps$:
\[
\e\left[\left(f\left(X\left(\infty\right)\right)-f\left(X\left(\tau\right)\right)\right)^{2}\right]\leq\eps^{2}.
\]
Denote $\delta=\max_{i\in\left[n\right]}\e\left[X_{i}\left(\tau\right)^{2}\right]$.
Then for every $k\in\left[n\right]$, 
\[
\sum_{\abs S=k}\hat{f}\left(S\right)^{2}\leq\left(\sqrt{\e\left(f\left(X\left(\tau\right)\right)\right)^{2}}\sqrt{k\delta}+\eps\right)^{2}.
\]
In particular, if $\left(X\left(t\right),\tau,f\right)$ has $0$
error, we recover Schramm and Steif's original bound:
\[
\sum_{\abs S=k}\hat{f}\left(S\right)^{2}\leq\delta k\norm f_{2}^{2}.
\]
\end{thm}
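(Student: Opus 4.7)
My plan is to split $\hat f(S)$ into a main term plus a stopping-error term, bound each separately, and combine via Minkowski in $\ell^{2}$. Since condition~(\ref{enu:linear_multivariates_are_martingales_condition}) makes $\chi_{S}(X(t))$ a martingale, $\e[\chi_{S}(X(\infty))\mid\mathcal F_{\tau}]=\chi_{S}(X(\tau))$, and with $g:=f(X(\tau))$ and $h:=f(X(\infty))-g$ one gets the identity
\[
\hat f(S)\;=\;\e[g\,\chi_{S}(X(\tau))]\;+\;\e[h\,\chi_{S}(X(\infty))]\;=:\;M_{S}+E_{S}.
\]
For the error term I would define $\tilde h(y):=\e[h\mid X(\infty)=y]$, a function on $\{-1,1\}^{n}$, so that $E_{S}=\hat{\tilde h}(S)$; Jensen gives $\|\tilde h\|_{2}^{2}\leq\e[h^{2}]\leq\eps^{2}$, and Parseval then yields $\sum_{|S|=k}E_{S}^{2}\leq\eps^{2}$.

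The heart of the argument is the bound $\sum_{|S|=k}M_{S}^{2}\leq k\delta\,\e[g^{2}]$; in the zero-error case this is exactly Schramm and Steif's original inequality, so whatever I do must recover their argument as a special case. Write $M_{S}=\hat G(S)$ for $G(y):=\e[g\mid X(\infty)=y]$, so the claim is really an upper bound on the level-$k$ Fourier mass of $G$. The naive estimate $M_{S}^{2}\leq\e[g^{2}]\,\e[\chi_{S}(X(\tau))^{2}]$, summed over $|S|=k$, only gives $\e[g^{2}]\,\e[e_{k}(X(\tau)^{2})]$, which can be as large as $\binom{n}{k}\delta\,\e[g^{2}]$ rather than the needed $k\delta\,\e[g^{2}]$; and because $\{\chi_{S}(X(\tau))\}_{|S|=k}$ are not orthogonal as random variables for $k\geq 2$ (condition~(\ref{enu:linear_multivariates_are_martingales_condition}) only forces each individual $\e[\chi_{T}(X(\tau))]$ to vanish for $T\neq\emptyset$), a direct Bessel inequality also fails. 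The strategy is therefore to mimic Schramm--Steif at the level of martingale differences: expand $\chi_{S}(X(\tau))=\sum_{t<\tau}\Delta_{S}(t)$ and $g=\e[g]+\sum_{t}\xi_{t}$ with $\xi_{t}=\e[g\mid\mathcal F_{t+1}]-\e[g\mid\mathcal F_{t}]$, giving $M_{S}=\sum_{t}\e[\xi_{t}\Delta_{S}(t)]$, and then apply Cauchy--Schwarz with the summation over $|S|=k$ performed \emph{before} the summation over time. In the axis-aligned case only those $S$ containing the updated coordinate $i_{t}$ have nonzero $\Delta_{S}(t)$, and a careful accounting of how each time step contributes to $\sum_{|S|=k}\e[\Delta_{S}(t)^{2}\mid\mathcal F_{t}]$ produces the combinatorial factor $k$ (from counting $k$-sets that can be completed by a single newly-touched coordinate) rather than $\binom{n}{k}$, analogously to Schramm--Steif's book-keeping. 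This is the step I expect to be the main obstacle.

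Combining the two pieces via Minkowski in $\ell^{2}(\{|S|=k\})$,
\[
\sqrt{\sum_{|S|=k}\hat f(S)^{2}}\;\leq\;\sqrt{\sum_{|S|=k}M_{S}^{2}}\;+\;\sqrt{\sum_{|S|=k}E_{S}^{2}}\;\leq\;\sqrt{\e[g^{2}]}\,\sqrt{k\delta}+\eps,
\]
and squaring gives the theorem. The $0$-error specialization then uses $g=f(X(\infty))$ almost surely, so $\e[g^{2}]=\|f\|_{2}^{2}$ and $\eps=0$.
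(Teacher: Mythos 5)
Your outer scaffolding is sound: the decomposition $\hat f(S)=M_{S}+E_{S}$ via the optional stopping theorem applied to $\chi_{S}(X(t))$ (legitimate by condition (\ref{enu:linear_multivariates_are_martingales_condition})), the bound $\sum_{|S|=k}E_{S}^{2}\leq\eps^{2}$ via conditional Jensen and Parseval, and the final Minkowski step are all correct, and this coefficient-wise route is a perfectly viable (slightly different) packaging of the paper's argument, which instead tests the level-$k$ part $g=\sum_{|S|=k}\hat f(S)\chi_{S}$ against $f(X(\tau))$ and divides by $\norm g_{2}$ at the end. But the entire content of the theorem sits in the one inequality you defer, $\sum_{|S|=k}M_{S}^{2}\leq k\delta\,\e[g^{2}]$, and what you offer for it does not close the gap. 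The martingale-difference book-keeping you sketch is tied to discrete time and to axis-aligned updates (``only those $S$ containing the updated coordinate $i_{t}$ have nonzero $\Delta_{S}(t)$''), whereas the theorem is claimed for all fractional query algorithms: the time parameter may be continuous (e.g.\ Brownian-driven coordinates), and Definition \ref{def:fractional_query_algorithm} does not force one coordinate to move at a time, so the combinatorial factor $k$ cannot be extracted by counting which $k$-sets are ``completed'' at a given step. You have correctly identified the obstacle but not overcome it.

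The paper's device for this step is worth internalizing because it sidesteps the time-dynamics entirely and works only with the terminal law of $X(\tau)$. For a level-$k$ homogeneous $g$ and $x\in[-1,1]^{n}$, define $g_{x}(y)=\sum_{S}\hat g(S)\prod_{i\in S}\bigl(\sqrt{1-x_{i}^{2}}\,y_{i}+x_{i}\bigr)$, so that $g(x)=\widehat{g_{x}}(\emptyset)$. Parseval gives $g(x)^{2}=\norm{g_{x}}_{2}^{2}-\sum_{|S|>0}\widehat{g_{x}}(S)^{2}$; taking $x=X(\tau)$ and using $\e\prod_{i\in A}X_{i}(\tau)=0$ for $A\neq\emptyset$ one finds $\e\norm{g_{X(\tau)}}_{2}^{2}=\sum_{S}\hat g(S)^{2}$, while the top-level coefficients satisfy $\widehat{g_{X(\tau)}}(S)^{2}=\hat g(S)^{2}\prod_{i\in S}(1-X_{i}(\tau)^{2})$, and the Weierstrass inequality $\prod_{i\in S}(1-a_{i})\geq 1-\sum_{i\in S}a_{i}$ yields
\begin{equation*}
\e\bigl[g(X(\tau))^{2}\bigr]\;\leq\;\sum_{|S|=k}\hat g(S)^{2}\sum_{i\in S}\e\bigl[X_{i}(\tau)^{2}\bigr]\;\leq\;k\delta\,\norm g_{2}^{2},
\end{equation*}
which is where the factor $k$ actually comes from. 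Applying this with $g$ replaced by $G_{k}(y)=\sum_{|S|=k}M_{S}\chi_{S}(y)$ and Cauchy--Schwarz on $\sum_{|S|=k}M_{S}^{2}=\e[f(X(\tau))G_{k}(X(\tau))]$ would give exactly the bound your proof needs. Without some substitute for this interpolation (or an honest execution of the martingale-difference computation valid for general continuous-time, non-axis-aligned processes), your argument is incomplete.
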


Next, using the computations in the proof of the above theorem, we
show that when restricted to bounded-degree functions, a weaker version
of the OSSS-inequality (\ref{eq:original_osss_inequality}) applies
for fractional query algorithms.
\begin{thm}
\label{thm:fractional_osss}Let the inputs be drawn from the uniform
distribution. Let $f:\left\{ -1,1\right\} ^{n}\to\r$ have degree
bounded by $d$, let $\left(X\left(t\right),\tau,f\right)$ be a $0$-error
fractional query algorithm for $f$, and let $\delta_{i}=\e\left[X_{i}\left(\tau\right)^{2}\right]$.
Then 
\begin{equation}
\var\left(f\right)\leq d\sum_{i}\delta_{i}\e\norm{\partial_{i}f}_{2}^{2}.\label{eq:slightly_stronger_osss_special_case}
\end{equation}
\end{thm}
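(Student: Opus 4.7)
The plan is to decompose $\var(f)$ by Fourier level, apply at each level the same quadratic-variation bound that underlies Theorem~\ref{thm:fractional_ss_10}, and combine the levels with Cauchy--Schwarz; the summation over $d$ levels is what introduces the factor $d$.

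First, since replacing $f$ by $f-\e f$ leaves $\var(f)$, $\Inf_i(f)$, $\delta_i$, and the $0$-error property of the algorithm unchanged, we may assume $\hat f(\emptyset)=0$, so $\twonorm{f}^{2}=\var(f)$. Write $\chi_S(x):=\prod_{i\in S}x_i$, let $g_k=\sum_{\abs S=k}\hat f(S)\chi_S$ be the level-$k$ projection, and set $\Inf_i^{=k}(f):=\twonorm{\partial_i g_k}^{2}=\sum_{\abs S=k,\,i\in S}\hat f(S)^{2}$, so that $\var(f)=\sum_{k=1}^{d}\twonorm{g_k}^{2}$ and $\Inf_i(f)=\sum_{k=1}^{d}\Inf_i^{=k}(f)$. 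Condition~(\ref{enu:linear_multivariates_are_martingales_condition}) makes each $\chi_S(X(t))$, and hence $f(X(t))$ and $g_k(X(t))$, a martingale. Running the optional-stopping argument from the proof of Theorem~\ref{thm:fractional_ss_10} with $g_k$ in place of a single character, together with $0$-error, yields the Parseval identity $\twonorm{g_k}^{2}=\e[f\cdot g_k]=\e[f(X(\tau))\,g_k(X(\tau))]$; Cauchy--Schwarz combined with the submartingale bound $\e f(X(\tau))^{2}\le\twonorm{f}^{2}=\var(f)$ then gives
$$\twonorm{g_k}^{4}\;\le\;\var(f)\cdot\e\bigl[g_k(X(\tau))^{2}\bigr].$$

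The central step is the coordinate-wise estimate
$$\e\bigl[g_k(X(\tau))^{2}\bigr]\;\le\;\sum_{i}\delta_i\,\Inf_i^{=k}(f),$$
which is exactly the quadratic-variation computation from the proof of Theorem~\ref{thm:fractional_ss_10} but tracked per coordinate. Since $g_k(X(0))=0$ for $k\ge 1$, the left-hand side equals the expected quadratic variation $\e[g_k]_\tau$ of the martingale $g_k(X(t))$ stopped at $\tau$; by condition~(\ref{enu:linear_multivariates_are_martingales_condition}) this splits according to the direction $i_t$ moved at each step, and the $i$-th contribution is controlled by the optional-stopping identity applied to processes of the form $X_i(t)^{2}\,\chi_S(X(t))$ with $i\notin S$. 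The crucial fact enabling this is that $\partial_i g_k$ does not depend on $x_i$, so its value is unchanged during $i$-moves, and the identity converts the coordinate-$i$ portion of the QV into an expression bounded in terms of $\delta_i=\e X_i(\tau)^{2}$ and $\twonorm{\partial_i g_k}^{2}=\Inf_i^{=k}(f)$.

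Combining, $\twonorm{g_k}^{2}\le\sqrt{\var(f)\sum_i\delta_i\,\Inf_i^{=k}(f)}$, and summing over $k=1,\dots,d$ and applying Cauchy--Schwarz over the $d$ levels,
$$\var(f)\;=\;\sum_{k=1}^{d}\twonorm{g_k}^{2}\;\le\;\sqrt{d}\cdot\sqrt{\var(f)\cdot\sum_{i}\delta_i\,\Inf_i(f)},$$
so dividing by $\sqrt{\var(f)}$ and squaring produces the claimed inequality. The hard part will be the coordinate-wise quadratic-variation estimate: an adaptive direction chooser could a priori correlate the event $\{i_t=i\}$ with large values of $(\partial_i g_k(X(t)))^{2}$, and the key observation is that $\partial_i g_k$ is fixed during $i$-moves, so the $X_i(t)^{2}\chi_S(X(t))$ martingale identity neutralizes this correlation and recovers a clean $\delta_i\,\Inf_i^{=k}(f)$ bound.
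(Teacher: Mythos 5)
Your proposal follows the paper's proof almost line for line: reduce to $\e f=0$, decompose into level projections $g_k$, derive $\twonorm{g_k}^{4}\le\twonorm f_2^{2}\sum_i\delta_i\sum_{S\ni i,\abs S=k}\hat f(S)^2$ from the machinery of Theorem \ref{thm:fractional_ss_10}, sum over the $d$ levels, and apply Cauchy--Schwarz to $\bigl(\sum_k\twonorm{g_k}^2\bigr)^2$ to extract the factor $d$. That skeleton is exactly right, and the per-level, per-coordinate estimate $\e\bigl[g_k(X(\tau))^2\bigr]\le\sum_i\delta_i\sum_{S\ni i,\abs S=k}\hat f(S)^2$ that you identify as the crux is precisely what equations (\ref{eq:middle_bound_on_g}) and (\ref{eq:late_bound_on_g}) in the paper's proof of Theorem \ref{thm:fractional_ss_10} already establish, so citing that proof closes the argument.

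The one place you depart from the paper is your proposed \emph{re-derivation} of that central estimate, and as sketched it does not work for the class of algorithms the theorem covers. You bound $\e[g_k]_\tau$ by ``splitting according to the direction $i_t$ moved at each step,'' but a general fractional query algorithm (Definition \ref{def:fractional_query_algorithm}) is not axis-aligned: condition (\ref{enu:linear_multivariates_are_martingales_condition}) only asserts that the monomials $\prod_{i\in S}X_i(t)$ are martingales, and nothing forces one coordinate to move at a time (the paper explicitly allows, e.g., Brownian-driven coordinates), so ``the direction moved at step $t$'' is undefined. Moreover the processes $X_i(t)^2\chi_S(X(t))$ you invoke are not martingales (only their compensated versions are), and the claim that this identity ``neutralizes'' the adaptive correlation between $\{i_t=i\}$ and $(\partial_i g_k(X(t)))^2$ is asserted rather than proved. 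The paper avoids all of this by working only at the stopping time: it writes $g(x)=\widehat{g_x}(\emptyset)$ for the interpolation $g_x(y)=\sum_S\hat g(S)\prod_{i\in S}\bigl(\sqrt{1-x_i^2}\,y_i+x_i\bigr)$, applies Parseval to $g_{X(\tau)}$, uses $\e\prod_{i\in A}X_i(\tau)=0$ (which is where condition (\ref{enu:linear_multivariates_are_martingales_condition}) actually enters), and finishes with the Weierstrass inequality $1-\prod_{i\in S}\bigl(1-X_i(\tau)^2\bigr)\le\sum_{i\in S}X_i(\tau)^2$. Replace your quadratic-variation paragraph with that argument (or simply with the citation to (\ref{eq:middle_bound_on_g})--(\ref{eq:late_bound_on_g})) and the proof is complete.
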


This improves Corollary 3.4 in \cite{osss_every_decision_tree_has_an_influential_variable},
which had the same result for decision trees but with the leading
constant $d$ replaced by ``maximum depth of a decision tree'' (the
depth is always larger than the degree). Proving Theorem \ref{thm:osss_inequality}
in full generality remains an open problem; see Section \ref{subsec:open_questions}.

Finally, we remark that for $0$-error algorithms, the bound in Theorem
\ref{thm:bsw_inequality} holds also for fractional query algorithms. 
\begin{thm}
\label{thm:fractional_bsw}Let $f:\left\{ -1,1\right\} ^{n}\to\left\{ -1,1\right\} $
and let $\left(X\left(t\right),\tau,f\right)$ be a fractional query
algorithm which calculates $f$ exactly. Let $\delta=\max_{i}\e\left[X_{i}\left(\tau\right)^{2}\right]$.
Then 
\[
\delta\geq\sqrt{\frac{\var\left(f\right)}{2n}.}
\]
\end{thm}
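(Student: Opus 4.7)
The plan is to exploit a pointwise algebraic identity forced by the zero-error condition, together with the martingale structure from condition~(\ref{enu:linear_multivariates_are_martingales_condition}).

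First, since $A=f$ and the algorithm has zero error, $f(X(\tau))=f(X(\infty))\in\{-1,+1\}$ almost surely, so $|f(X(\tau))|=1$ a.s. The multilinear extension of $f$ is affine in each coordinate, hence its supremum over $[-1,1]^n$ is attained at a vertex and $|f|\le 1$ on the whole cube. Therefore, for each $i$ with $|X_i(\tau)|<1$, the linear slice $t\mapsto f(X_1(\tau),\ldots,X_{i-1}(\tau),t,X_{i+1}(\tau),\ldots,X_n(\tau))$ is a linear function on $[-1,1]$ bounded by $1$ which attains $\pm 1$ at the interior point $t=X_i(\tau)$; it must therefore be constant, giving $\partial_i f(X(\tau))=0$. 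This yields the pointwise a.s.\ identity $(1-X_i(\tau)^2)\,\partial_i f(X(\tau))=0$ and, since $|\partial_i f|\le 1$ on the cube, the pointwise bound $|\partial_i f(X(\tau))|\le X_i(\tau)^2$.

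Next, condition~(\ref{enu:linear_multivariates_are_martingales_condition}) makes $\partial_i f(X(t))$ a martingale (because $\partial_i f$ is multilinear in the coordinates other than $i$), so by optional stopping $\e[\partial_i f(X(\tau))]=\hat f(\{i\})$. Jensen's inequality together with the pointwise bound then gives $|\hat f(\{i\})|\le\e[X_i(\tau)^2]=\delta_i$, whence $\hat f(\{i\})^2\le\delta_i^2$ and the level-$1$ Fourier weight satisfies $\sum_i\hat f(\{i\})^2\le n\delta^2$.

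The main step is to extend this from level $1$ to the full variance. I would iterate the slicing argument: for any nonempty $S\subseteq[n]$, whenever $|X_i(\tau)|<1$ for all $i\in S$, induction on $|S|$ shows that the restriction of $f$ to the $S$-coordinates (with the others fixed at $X_{-S}(\tau)$) is a multilinear polynomial on $[-1,1]^{|S|}$ bounded by $1$ attaining $\pm 1$ at an interior point, hence is constant. This produces the family of a.s.\ identities $\prod_{i\in S}(1-X_i(\tau)^2)\,\partial_S f(X(\tau))=0$ for every nonempty $S$. I would then write $\var(f)=\e\bigl[[M]_\tau\bigr]$ for $M(t):=f(X(t))$ and decompose $\e\bigl[[M]_\tau\bigr]=\sum_i\e[D_i^2]$, where $D_i$ collects the contributions to $M(\tau)-M(0)$ coming from the direction-$i$ jumps, and try to bound each $\e[D_i^2]\le 2\delta_i^2$ using an It\^{o}-type integration-by-parts $\partial_i f(X(\tau))\,X_i(\tau)=D_i+E_i$ in which $D_i$ and $E_i$ are orthogonal martingale transforms. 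Summing over $i$ would then give $\var(f)\le 2\sum_i\delta_i^2\le 2n\delta^2$.

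The main obstacle will be closing the gap between the immediate consequence $\e[D_i^2]\le\delta_i$ of the $|S|=1$ identity alone and the required $\e[D_i^2]\le 2\delta_i^2$; the extra factor of $\delta_i$ has to come from using the higher-level identities along the whole trajectory of $X(\cdot)$---controlling the submartingale $(\partial_i f(X(t)))^2$ at intermediate times and the cross-coordinate term $E_i$---rather than only evaluating at the stopping time $\tau$.
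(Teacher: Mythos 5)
Your level-$1$ computation is correct, and your slicing observation is exactly the right structural fact: iterated, it shows that almost surely $f$ restricted to the coordinates with $\abs{X_i(\tau)}<1$ is constant, i.e.\ the fully revealed set $R=\{i:X_i(\tau)^2=1\}$ contains an $f(X(\infty))$-certificate. But the proof has a genuine gap at precisely the point you flag yourself: you never establish $\e[D_i^2]\le 2\delta_i^2$, and without it you control only the level-$1$ Fourier weight, not $\var(f)=\sum_{S\neq\emptyset}\hat f(S)^2$. The proposed route is also shaky on its own terms: for a general fractional query algorithm (Definition \ref{def:fractional_query_algorithm}) several coordinates may move simultaneously, so the quadratic variation of $f(X(t))$ does not decompose as $\sum_i \e[D_i^2]$ without cross terms, and even for axis-aligned processes the claimed integration-by-parts producing orthogonal pieces $D_i,E_i$ with $\delta_i^2$ (rather than $\delta_i$) scaling is not carried out. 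As written, the argument proves only $\sum_{\abs{S}=1}\hat f(S)^2\le n\delta^2$, which is strictly weaker than the theorem.

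The fix is to use your slicing identity combinatorially rather than analytically, which is what the paper does. Since $R$ contains a certificate, the algorithm induces for each input $x$ a distribution $\Gamma_x$ on $f(x)$-certificates with $\p[i\in S_x]\le\e[X_i(\tau)^2]=\delta_i$. Then run the Benjamini--Schramm--Wilson intersection argument: for independent uniform $x,y$, $\p[\abs{S_x\cap S_y}>0]\le\e\abs{S_x\cap S_y}=\sum_i\p[i\in S_x]^2\le n\delta^2$, and on the event $S_x\cap S_y=\emptyset$ the hybrid point agreeing with $x$ on $S_x$ and with $y$ elsewhere forces $f(x)=f(y)$; hence $1\le\p[f(x)=f(y)]+n\delta^2=1-\tfrac{1}{2}\var(f)+n\delta^2$, which rearranges to the claim. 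No stochastic calculus is needed beyond the observation that zero error forces full revelation of a certificate.
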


In fact, this theorem stems from a similar result about function certificates
(see Section \ref{subsec:proof_of_bsw}), whose proof is basically
identical to that in \cite{bsw_balanced_boolean_functions}. This
gives hope that fractional query algorithm can achieve better revealment
bounds if they are allowed to make errors.

The proofs of the above theorems are found in Section \ref{sec:revealment_bounds}.

\subsubsection{Limits of axis-aligned algorithms}

When considering axis-aligned algorithms, it is natural to ask what
happens in the limit of $\eps\to0$. By Remark \ref{rem:remark_on_quadratic_variation},
for a fixed $\eps$, each step taken by the process increases the
cost by $\eps^{2}$, and so the task is to minimize the expected runtime.
Intuitively, we can extend the notion of axis-aligned processes to
start at any $x\in\left[-1,1\right]^{n}$, rather than just dyadic
starting positions. If we define $u_{\eps}\left(x\right):\left[-1,1\right]^{n}\to\r$
to be the cost of the best axis-aligned algorithm which starts at
$x$, then $u_{\eps}$ should satisfy 
\begin{equation}
u_{\eps}\left(x\right)=\min_{i}\frac{u_{\eps}\left(x+\eps e_{i}\right)+u_{\eps}\left(x-\eps e_{i}\right)}{2}+\eps^{2},\label{eq:dynamic_programming_in_intro}
\end{equation}
where $e_{i}$ is the unit vector in direction $i$. Taking $\eps\to0$,
the underlying processes $X\left(t\right)$ should converge to a continuous
time process, whose cost $u\left(x\right)$ at every starting point
$x$ is the best possible among all axis-aligned algorithms. Equation
(\ref{eq:dynamic_programming_in_intro}) suggests that $u$ should
satisfy $\min_{k}\frac{\partial^{2}}{\partial x_{k}^{2}}u=-2$ (see
Remark \ref{rem:intuition_behind_axis_aligned_laplacian}).

This intuition can be made precise in the framework of viscosity solutions
to partial differential equations (see Section \ref{subsec:background_viscosity}.
The main difficulty is that we have no assurance about the differentiability
of neither $u_{\eps}$ nor $u$). This is the main content of Section
\ref{sec:axis_aligned_processes_and_laplacian}, which culminates
in the following theorem.
\begin{thm}
\label{thm:intro_axis_aligned_laplacian}Let $f:\left\{ -1,1\right\} ^{n}\to\left\{ -1,1\right\} $
be non-constant, and let $u\left(x\right)=\lim_{\eps\to0}\inf_{Q_{\eps}}C\left(x,Q_{\eps}\right)$,
where $\eps$ is of the form $2^{-k}$ and the infimum is taken over
all axis-aligned algorithms with jump size $\eps$ starting at $x$.
Then $u$ is the unique continuous viscosity solution to the Dirichlet
boundary-value problem 
\[
\begin{cases}
\axis u=-2 & x\in\left(-1,1\right)^{n}\\
u=F & x\in\partial\left[-1,1\right]^{n},
\end{cases}
\]
where $\axis$ is the nonlinear operator $\min_{k}\frac{\partial^{2}}{\partial x_{k}^{2}}$,
and $F$ is obtained by recursively solving the Dirichlet problem
on the $\left(n-1\right)$ dimensional facets for the appropriate
restrictions of $f$. 
\end{thm}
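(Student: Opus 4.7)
The plan is to identify the discrete cost function $u_\eps(x):=\inf_{Q_\eps}C(x,Q_\eps)$ with a discrete Bellman equation and pass to the continuum limit via the method of half-relaxed limits from viscosity solution theory.

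First I would establish the dynamic programming principle \eqref{eq:dynamic_programming_in_intro} for $u_\eps$ on the interior of the dyadic lattice $2^{-k}\z^n\cap(-1,1)^n$. This follows from the Markovian structure of axis-aligned processes together with the observation that zero-error stopping is impossible at any truly interior point: the support of the conditional product measure at an $X(\tau)\in(-1,1)^n$ is all of $\{-1,1\}^n$, on which $f$ is non-constant. The boundary data $F$ is built by induction on dimension: on any face $\{x_i=\pm1\}$, the process is frozen in direction $i$ and the problem reduces to the $(n-1)$-dimensional axis-aligned problem for the restriction $f|_{x_i=\pm1}$. The base case $n=1$ yields $u(x)=1-x^2$ for the non-constant bit function, and induction produces a continuous $F$ on $\partial[-1,1]^n$. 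Uniform boundedness $0\le u_\eps\le n$ is immediate from any brute-force decision tree.

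Next, define the half-relaxed limits $\overline u(x):=\limsup_{\eps\to0,\,y\to x}u_\eps(y)$ and $\underline u(x):=\liminf_{\eps\to0,\,y\to x}u_\eps(y)$, and show that $\overline u$ is a viscosity subsolution and $\underline u$ a viscosity supersolution of $\axis u=-2$ in $(-1,1)^n$. This follows the standard Barles--Souganidis template: at a strict interior local maximum of $\overline u-\phi$ at some $x_0$ (with $\phi$ smooth), one selects a subsequence of lattice points $x_m\to x_0$ with $\eps_m\to0$ at which $u_{\eps_m}-\phi$ attains a discrete local maximum. The maximum property gives $u_{\eps_m}(x_m\pm\eps_m e_k)-u_{\eps_m}(x_m)\le\phi(x_m\pm\eps_m e_k)-\phi(x_m)$ for each $k$; summing, dividing by $\eps_m^2$, taking the minimum over $k$, and inserting the DPP \eqref{eq:dynamic_programming_in_intro} yields $\min_k\phi_{x_kx_k}(x_0)\ge-2$, which is the required viscosity subsolution inequality. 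The supersolution property for $\underline u$ is symmetric.

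The remaining ingredients are the boundary identities $\overline u=\underline u=F$ on $\partial[-1,1]^n$, and a comparison principle for $\axis u=-2$. For the boundary, I would construct explicit axis-aligned paraboloid barriers and use the martingale identity $\delta_i=\var(X_i(\tau))$ to bound the expected remaining cost of collapsing a nearly-frozen coordinate by $O(\delta)$ in the perpendicular distance $\delta$; combined with the inductive continuity of $F$ this forces both envelopes to agree with $F$ at the boundary. The main obstacle is the comparison principle: the operator $\axis=\min_k\partial^2_{kk}$ is degenerate elliptic and concave in the Hessian, but it detects only the diagonal entries in the standard basis rather than the full spectrum, so rotationally invariant Pucci-style proofs do not apply verbatim. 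I would pursue the Crandall--Ishii doubling-of-variables approach, leveraging the monotonicity and concavity in the Hessian to pass the usual pointwise comparison through the coordinate-wise minimum direction by direction. Once comparison is in hand, subsolution $\le$ supersolution forces $\overline u\le\underline u$ on $[-1,1]^n$; combined with the trivial $\underline u\le\overline u$, this produces a continuous $u=\overline u=\underline u$ that is by construction the unique viscosity solution, and agrees with the pointwise limit $\lim_{\eps\to0}u_\eps$ along $\eps=2^{-k}$.
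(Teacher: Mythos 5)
Your proposal is essentially correct, and it reaches the theorem by a recognizably different route from the paper on the convergence side. The paper first proves that $u_{\eps}$ converges \emph{pointwise monotonically} (via a coupling showing $u_{\eps/2}\leq u_{\eps}$), then invests its longest argument in an inductive coupling proof that each $u_{\eps}$ is $L_{n}$-Lipschitz with $L_{n}=\frac{n+3}{2}$, upgrades to uniform convergence by Arzela--Ascoli, and only then verifies the viscosity property for the actual limit $u$ (the subsolution half needs only the one-sided dynamic programming inequality for $u$; the supersolution half uses uniform convergence to track the minimizers $x_{\eps}\to x_{0}$). You instead run the Barles--Souganidis half-relaxed-limit scheme, which dispenses with the Lipschitz estimate and lets convergence fall out of the comparison principle at the end; your test-function computation at a discrete extremum is the same Taylor-expansion-into-the-DPP argument the paper uses, and your comparison principle is the same Crandall--Ishii doubling of variables (the paper follows Lu--Wang: degenerate ellipticity of $\axis$ in the Hessian plus positive homogeneity $\axis(\alpha u)=\alpha\axis u$ to perturb to a strict inequality, which is where the hypothesis $f\equiv-2<0$ enters --- note it is this monotonicity and homogeneity, not concavity, that the argument actually needs). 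What your route buys is economy: no Arzela--Ascoli, no a priori modulus of continuity in the interior. What it costs is that the entire burden shifts to the boundary identities $\overline{u}=\underline{u}=F$ on $\partial\left[-1,1\right]^{n}$, which you only sketch; this is precisely where the paper's hardest work lives (Item 2 of its Lipschitz induction couples a process started at an interior point with one started at its projection onto a facet, in both directions, to get the $O\left(\norm{x-y}_{1}\right)$ bound your barrier argument would have to reproduce), so you should expect that step to carry most of the technical weight rather than the comparison principle you flag as the main obstacle.
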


As a possible application of this theorem, consider the class of transitive
Boolean functions (see Section \ref{subsec:background_boolean} for
a definition). For such functions, the individual bit revealments
can be made equal under the uniform distribution, and a query algorithm
$Q$ yields $\delta=C\left(Q\right)/n$. Theorem \ref{thm:fractional_ss_10}
then gives 
\[
\sum_{\abs S=k}\hat{f}\left(S\right)^{2}\leq\frac{u\left(0\right)}{n}k\norm f_{2}^{2},
\]
allowing us to obtain noise sensitivity bounds by solving PDEs. 

Using this framework, in Section \ref{sec:or_example} we calculate
$u\left(x\right)$ exactly for the OR function between two bits, and
show that as $n\to\infty$, fractional query algorithms need to query
only $1$ bit in expectation in order to calculate the OR function
on $n$ bits (whereas decision trees need to query $2$ bits in expectation).
This is the best separation we have found so far between decision
trees and fractional query algorithms; we do not know whether an asymptotic
separation is possible (see Section \ref{subsec:open_questions}).
We also propose a heuristic, which might prove useful in analyzing
recursive functions. 

\subsubsection{Fractional random-turn games}

Finally, in Section \ref{sec:fractional_random_turn_games} we show
that random-turn games, introduced by Peres, Schramm, Sheffield and
Wilson in \cite{peres_schramm_sheffield_wilson_random_turn_hex},
yield natural axis-aligned algorithms. However, these algorithms do
not necessarily minimize the cost $C\left(Q\right)$. 

\subsection{Open questions\label{subsec:open_questions}}
\begin{enumerate}
\item We have so far been able to neither prove nor disprove that fractional
query algorithms can run asymptotically faster than decision trees.
Does there exist a global constant $M$, so that for every function
$f:\left\{ -1,1\right\} ^{n}\to\left\{ -1,1\right\} $, $\inf_{Q\in\mathcal{Q}}C\left(Q\right)\geq M\inf_{D\in\mathcal{D}}\left(D\right)$?
We find both possible answers to be exciting: if fractional query
algorithms can run asymptotically faster (say, on transitive monotone
functions), then Theorem \ref{thm:fractional_ss_10} gives improved
noise sensitivity estimates; and if fractional query algorithms are
equivalent to decision trees, then Theorem \ref{thm:intro_axis_aligned_laplacian}
allows us to obtain decision tree lower bounds by solving partial
differential equations.
\item Show that for functions $f:\left\{ -1,1\right\} ^{n}\to\left\{ -1,1\right\} $,
Theorem \ref{thm:osss_inequality} applies to fractional query algorithms.
The problem, as is alluded to in \cite{osss_every_decision_tree_has_an_influential_variable},
is that the inequality (\ref{eq:original_osss_inequality}) is at
heart an $\mathrm{L}^{1}$ bound, while the fractional query costs
are in essence $\mathrm{L}^{2}$ variables: since $X\left(\tau\right)$
is a martingale, the cost $\e X_{i}\left(\tau\right)^{2}$ incurred
by bit $i$ is equal to the expected quadratic variation $\e\left[X_{i}\right]_{\tau}$.
It is possible to define instead the cost by the total variation of
$X_{i}\left(\tau\right)$, and get an analogous inequality for this
alternate cost, but this would undesirably rule out a large class
of processes $X\left(t\right)$ which have infinite total variation
(such as Brownian motion). 
\item Can Theorem \ref{thm:bsw_inequality} be extended to fractional query
algorithms which are allowed to make errors?
\item Let $\mathcal{S}$ be the set of axis-aligned algorithms. Show that
these are the best algorithms possible, i.e. that $\inf_{Q\in\mathcal{Q}}C\left(Q\right)=\inf_{S\in\mathcal{S}}C\left(S\right)$.
\item Show that when the boundary conditions are recursively defined by
Boolean functions as in Theorem \ref{thm:intro_axis_aligned_laplacian},
the limiting value function $u$ of axis-aligned algorithms is twice
continuously differentiable (thereby eliminating the need to use viscosity
solutions, and making life easier for the working mathematician).
\item Show that for axis-aligned algorithms with jump size $\eps\to0$,
there is a sense in which the underlying process $X\left(t\right)$
converges to Brownian motion, where only one coordinate can move at
a time (see e.g. \cite{jacka_warren_windridge_minimizing_time_to_a_decision,mandelbaum_shepp_vanderbei_optimal_switching_between_brownian_motion}
for an exact formulation of this framework). 
\end{enumerate}

\subsection{Related fractional work}

In the context of axis-aligned algorithms, Jacka, Warren and Windridge
\cite{jacka_warren_windridge_minimizing_time_to_a_decision} inspect
the minimum time for continuous processes to compute the value of
$3$-majority. While this can be seen as a problem in optimal control
theory, most work in the literature focus on different payoff models.
See Section 1.1 in \cite{jacka_warren_windridge_minimizing_time_to_a_decision}
for more detail. 

Fractional processes appear in other areas of Boolean analysis. In
\cite{chattopadahyay_hatami_hosseini_lovett_polarizing}, Chattopadhyay,
Hatami, Hosseini and Lovett construct a random walk $X\left(t\right)$
so that 
\begin{equation}
\abs{\e\left[f\left(X\left(\tau\right)\right)-\e f\right]}\leq\eps\label{eq:polarizing_prgs}
\end{equation}
for functions with bounded Fourier tails. In some sense, they approach
the problem from a different viewpoint than us: in order for (\ref{eq:polarizing_prgs})
to be meaningful, they require that $\e\left[X_{i}\left(\tau\right)^{2}\right]$
is large, whereas our theorems are useful when $\e\left[X_{i}\left(\tau\right)^{2}\right]$
is small. This stresses the difference between noise sensitive functions
and functions with bounded Fourier tails. 

Axis-aligned processes can be seen as a type of single player ``tug-of-war''
game. The influential work by Peres, Schramm, Sheffield and Wilson
\cite{peres_schramm_sheffield_wilson_tug_of_war} considers a two-player
version. There, a dynamic programming equation also leads to a PDE,
involving the infinity Laplacian $\Delta_{\infty}$ rather than the
axis-aligned Laplacian $\axis$.

Algorithms which make fractional queries have already been investigated
in the quantum setting, by allowing gates to impart fractional phases
(or, equivalently, by running a Hamiltonian for a small amount of
time). It has been shown that if errors are allowed, quantum fractional
query algorithms are not much more powerful than quantum discrete
query algorithms. See e.g. \cite{simulating_sparse_hamiltonians_quantum,kothari_thesis}.

The process $X\left(t\right)$ can be seen as a noisy version of the
true input $X\left(\infty\right)$, where more confidence is given
to coordinates with larger absolute value. A different model based
on noisy inputs is given by Ben-David and Blais in \cite{ben_david_blais_noisy_oracle}.
They consider decision trees which can make noisy queries, and pay
a cost of $\gamma^{2}$ in order to get a random bit $\tilde{x}_{i}\in\left\{ -1,1\right\} $
that is $\gamma$-correlated with the true input $x_{i}$. 

\subsection{Acknowledgments}

We thank Ronen Eldan and Shachar Lovett for comments and discussions. 

\section{Background and notation}

We denote the standard basis of $\r^{n}$ by $\left\{ e_{1},\ldots,e_{n}\right\} $.
For a set $\Omega\subseteq\r^{n}$, we denote its boundary by $\partial\Omega$.

\subsection{Boolean functions\label{subsec:background_boolean}}

For a general introduction to Boolean functions, see \cite{odonnell_analysis_of_boolean_functions};
in what follows, we provide a brief overview of the required background
and notation.

A distribution $\mu$ over $\left\{ -1,1\right\} ^{n}$ is a product
measure if, for $x\sim\mu$, the bits $x_{i}$ are all independent.
Denote by $\mu_{1/2}$ the uniform measure on $\left\{ -1,1\right\} ^{n}$.
For every function $f:\left\{ -1,1\right\} ^{n}\to\r$, its expectation
and variance are given by 
\[
\e f=\e_{y\sim\mu_{1/2}}f\left(y\right)\text{ and }\var\left(f\right)=\e\left(f-\e f\right)^{2}.
\]
Every function $f:\left\{ -1,1\right\} ^{n}\to\r$ may be uniquely
written as a sum of monomials:
\begin{equation}
f\left(y\right)=\sum_{S\subseteq\left[n\right]}\hat{f}\left(S\right)\prod_{i\in S}y_{i},\label{eq:definition_of_fourier_decomposition}
\end{equation}
where $\left[n\right]=\left\{ 1,\ldots,n\right\} $, and $\hat{f}\left(S\right)$
are known as the Fourier coefficients. By Parseval's identity, the
squared norm of a function is given by 
\[
\norm f_{2}^{2}:=\e\left[f^{2}\right]=\sum_{S\subseteq\left[n\right]}\hat{f}\left(S\right)^{2},
\]
while the variance of a function is given by the sum of its Fourier
weights of level greater than $0$: 
\[
\var\left(f\right)=\sum_{S\neq\emptyset}\hat{f}\left(S\right)^{2}.
\]
Equation (\ref{eq:definition_of_fourier_decomposition}) may be used
to extend a function's domain from the discrete hypercube $\left\{ -1,1\right\} ^{n}$
to real space $\r^{n}$. We call this the \emph{harmonic extension},
and denote it also by $f$. Under this notation, $f\left(0\right)=\e f$.
The derivative of a function $f$ in direction $i$ is defined as
\[
\partial_{i}f\left(y\right)=\frac{f\left(y^{i\to1}\right)-f\left(y^{i\to-1}\right)}{2},
\]
where $y^{i\to a}$ has $a$ at coordinate $i$, and is identical
to $y$ at all other coordinates. The Fourier representation of the
derivative is given by 
\begin{equation}
\partial_{i}f=\sum_{S\ni i}\hat{f}\left(S\right)\prod_{j\in S,j\neq i}x_{j}.\label{eq:fourier_representation_of_derivative}
\end{equation}
A function is called \emph{monotone} if $f\left(x\right)\leq f\left(y\right)$
whenever $x_{i}\leq y_{i}$ for all $i\in\left[n\right]$. A function
is called \emph{transitive} if for all $i,j\in\left[n\right]$, there
exists a permutation $\pi$ on the $n$ bits with $\pi\left(i\right)=j$
such that for all $x\in\left\{ -1,1\right\} ^{n}$, $f\left(x\right)=f\left(\pi\left(x\right)\right)$. 

\subsection{Viscosity solutions\label{subsec:background_viscosity}}
\begin{defn}
For a twice-differentiable function $u:\r^{n}\to\r$, define the \emph{axis-aligned
Laplacian} $\axis$ by 
\[
\axis u=\min_{k}\frac{\partial^{2}u}{\partial x_{k}^{2}}.
\]
In this work, we will need to apply the operator $\axis$ to functions
$u$ which are continuous, but might not necessarily be twice- or
even once-differentiable. One natural framework for this is that of
\emph{viscosity solutions}. For a good introduction, see \cite{crandall_ishii_lions_users_guide_to_viscosity}.
We will require only the basic definitions.
\end{defn}

\begin{defn}
\label{def:viscosity_solutions}Let $g:\r^{n}\to\r$. The function
$u:\r^{n}\to\r$ is called a 
\begin{enumerate}
\item \emph{\label{enu:viscosity_subsolution_definition}viscosity subsolution}
to the equation $\axis u=g$, if for every smooth function $\varphi:\r^{n}\to\r$
such that $\varphi-u$ has a minimum at point $x_{0}$, we have $\axis\varphi\left(x_{0}\right)\geq g\left(x_{0}\right)$;
\item \emph{viscosity supersolution }to the equation $\axis u=g$,\emph{
}if for every smooth function $\varphi:\r^{n}\to\r$ such that $\varphi-u$
has a maximum at point $x_{0}$, we have $\axis\varphi\left(x_{0}\right)\leq g\left(x_{0}\right)$; 
\item \emph{viscosity solution} to the equation $\axis u=g$ if it is both
a viscosity subsolution and a viscosity supersolution.
\end{enumerate}
\end{defn}

\begin{rem}
Suppose $\varphi\left(x\right)$ is such that $\varphi-u$ has a minimum
at $x_{0}$ but $\axis\varphi\left(x_{0}\right)<g\left(x_{0}\right)$.
Then for $\eps>0$ small enough, the function $\varphi'\left(x\right):=\varphi\left(x\right)+\eps\norm{x-x_{0}}_{2}^{2}$
also satisfies $\axis\varphi'\left(x_{0}\right)<g\left(x_{0}\right)$,
and has a strict minimum at $x_{0}$. When checking whether $u$ is
a viscosity solution to $\axis u=g$, we can therefore restrict ourselves
to test functions $\varphi$ with strict minima / maxima at $x_{0}$. 
\end{rem}

\section{Classical revealment bounds\label{sec:revealment_bounds}}

\subsection{Theorem \ref{thm:fractional_ss_10}}

The proof follows the lines of the original proof by Schramm and Steif.
Some additional calculations must be made when using the language
of fractional query algorithms. 
\begin{proof}[Proof of Theorem \ref{thm:fractional_ss_10}]
Let $g:\left\{ -1,1\right\} ^{n}\to\r$ be defined as $g\left(x\right)=\sum_{\abs S=k}\hat{f}\left(S\right)\prod_{i\in S}x_{i}$.
Let $Z\left(t\right)=f\left(X\left(\infty\right)\right)-f\left(X\left(t\right)\right)$
be the error of the algorithm as a function of time. Since $X\left(\infty\right)$
is uniform on $\left\{ -1,1\right\} ^{n}$, by Plancharel's inequality,
we have, for all $t$,

\[
\norm g_{2}^{2}\overset{\text{Plancharel}}{=}\e\left[g\left(X\left(\infty\right)\right)\left(Z\left(t\right)+f\left(X\left(t\right)\right)\right)\right].
\]
By the Cauchy-Schwarz inequality, for $t=\tau$, 
\begin{align}
\e\left[g\left(X\left(\infty\right)\right)\left(Z\left(\tau\right)\right)\right] & \leq\sqrt{\e g\left(X\left(\infty\right)\right)^{2}}\sqrt{\e Z\left(\tau\right)^{2}}\nonumber \\
 & \leq\norm g_{2}\eps.\label{eq:error_in_SS10}
\end{align}
Since $f\left(X\left(\tau\right)\right)$ is determined by $X\left(\tau\right)$,
we have 
\begin{align}
\e\left[g\left(X\left(\infty\right)\right)f\left(X\left(\tau\right)\right)\right] & =\e\left[\e\left[g\left(X\left(\infty\right)\right)f\left(X\left(\tau\right)\right)\mid X\left(\tau\right)\right]\right]\nonumber \\
 & =\e\left[f\left(X\left(\tau\right)\right)\e\left[g\left(X\left(\infty\right)\right)\mid X\left(\tau\right)\right]\right]\nonumber \\
 & \overset{\text{Property \eqref{enu:linear_multivariates_are_martingales_condition}}}{=}\e\left[f\left(X\left(\tau\right)\right)g\left(X\left(\tau\right)\right)\right]\nonumber \\
 & \overset{\text{Cauchy-Schwarz}}{\leq}\sqrt{\e\left[f\left(X\left(\tau\right)\right)^{2}\right]}\sqrt{\e\left[g\left(X\left(\tau\right)\right)^{2}\right]}.\label{eq:SS10_starting_point}
\end{align}
To bound the term $\e\left[g\left(X\left(\tau\right)\right)^{2}\right]$,
we show how the harmonic extension is related to an interpolation
function. Let $x\in\left[-1,1\right]^{n}$. For any function $h:\left\{ -1,1\right\} ^{n}\to\r$
with Fourier decomposition $h\left(y\right)=\sum_{S}\hat{h}\left(S\right)\prod_{i\in S}y_{i}$,
let $h_{x}:\left\{ -1,1\right\} ^{n}\to\r$ be defined by mapping
$y_{i}\mapsto\sqrt{1-x_{i}^{2}}y_{i}+x_{i}$ for every $i$: 
\begin{equation}
h_{x}\left(y\right)=\sum_{S}\hat{h}\left(S\right)\prod_{i\in S}\left(\sqrt{1-x_{i}^{2}}y_{i}+x_{i}\right).\label{eq:interpolating_function}
\end{equation}
Note that the expected value of $h_{x}$ is given by $\e h_{x}=\widehat{h_{x}}\left(\emptyset\right)=\sum_{S}\hat{h}\left(S\right)\prod_{i\in S}x_{i}$.
This is exactly the same as inputting the vector $x$ into the harmonic
extension of $h$. Thus 
\[
h\left(x\right)=\widehat{h_{x}}\left(\emptyset\right).
\]
Taking $h=g$ and $x=X\left(\tau\right)$ in (\ref{eq:interpolating_function}),
we have 
\begin{align*}
\e\left[g\left(X\left(\tau\right)\right)^{2}\right] & =\e\left[\widehat{g_{X\left(\tau\right)}}\left(\emptyset\right)^{2}\right]\\
 & \overset{\text{Parseval}}{=}\e\left[\norm{g_{X\left(\tau\right)}}_{2}^{2}-\sum_{\abs S>0}\widehat{g_{X\left(\tau\right)}}\left(S\right)^{2}\right].
\end{align*}
By definition of the $2$-norm, 
\begin{align*}
\e\norm{g_{X\left(\tau\right)}}_{2}^{2} & =\e_{X\left(\tau\right)}\e_{y}\left[g_{X\left(\tau\right)}\left(y\right)^{2}\right]\\
 & =\e_{X\left(\tau\right)}\e_{y}\sum_{S,T\subseteq\left[n\right]}\hat{g}\left(S\right)\hat{g}\left(T\right)\prod_{i\in S}\left(\sqrt{1-X_{i}\left(\tau\right)^{2}}y_{i}+X_{i}\left(\tau\right)\right)\prod_{j\in T}\left(\sqrt{1-X_{j}\left(\tau\right)^{2}}y_{j}+X_{j}\left(\tau\right)\right).
\end{align*}
If $i\in S\intersect T$, then 
\[
\e_{y}\left[\left(\sqrt{1-X_{i}\left(\tau\right)^{2}}y_{i}+X_{i}\left(\tau\right)\right)^{2}\right]=\e_{y}\left[\left(1-X_{i}\left(\tau\right)^{2}\right)y_{i}^{2}+X_{i}\left(\tau\right)^{2}-2y_{i}X_{i}\left(\tau\right)\sqrt{1-X_{i}\left(\tau\right)^{2}}\right]\overset{y_{i}^{2}=1,\e y_{i}=0}{=}1.
\]
By independence of the coordinates of $y$, indices in $S\intersect T$
therefore do not contribute to the product $\prod_{i\in S}\left(\ldots\right)\prod_{j\in T}\left(\ldots\right)$,
and we are left with a product of the form $\prod_{i\in S\backslash T}\left(\ldots\right)\prod_{j\in T\backslash S}$$\left(\ldots\right)$.
Since $\e y_{i}=0$ for all $i\in\left[n\right]$ and $\e\prod_{i\in A}X_{i}\left(\tau\right)=0$
for all $A\subseteq\left[n\right]$, the coefficient of $\hat{g}\left(S\right)\hat{g}\left(T\right)$
is $0$ if $S\neq T$, and $1$ if $S=T$. Thus 
\[
\e\norm{g_{X\left(\tau\right)}}_{2}^{2}=\sum_{S\subseteq\left[n\right]}\hat{g}\left(S\right)^{2},
\]
and we get 
\begin{align}
\e\left[g\left(X\left(\tau\right)\right)^{2}\right] & =\sum_{S\subseteq\left[n\right]}\hat{g}\left(S\right)^{2}-\e\sum_{\abs S>0}\widehat{g_{X\left(\tau\right)}}\left(S\right)^{2}\nonumber \\
 & \overset{\text{by definition of g}}{=}\sum_{\abs S=k}\hat{g}\left(S\right)^{2}-\e\sum_{\abs S>0}\widehat{g_{X\left(\tau\right)}}\left(S\right)^{2}\nonumber \\
 & \leq\sum_{\abs S=k}\left(\hat{g}\left(S\right)^{2}-\e\widehat{g_{X\left(\tau\right)}}\left(S\right)^{2}\right).\label{eq:middle_bound_on_g}
\end{align}
Since $g$ itself does not have Fourier coefficients of frequencies
larger than $k$, the only sets $S$ with $\abs S=k$ in the Fourier
decomposition of $g_{X\left(\tau\right)}$ can come from choosing
the factor $\sqrt{1-X_{i}\left(\tau\right)^{2}}$ every time in the
product in (\ref{eq:interpolating_function}). Thus, for $\abs S=k$,
\begin{align}
\hat{g}\left(S\right)^{2}-\widehat{g_{X\left(\tau\right)}}\left(S\right)^{2} & =\hat{g}\left(S\right)^{2}\left(1-\prod_{i\in S}\left(\sqrt{1-X_{i}\left(\tau\right){}^{2}}\right)^{2}\right)\nonumber \\
 & =\hat{g}\left(S\right)^{2}\left(1-\prod_{i\in S}\left(1-X_{i}\left(\tau\right){}^{2}\right)\right)\nonumber \\
 & \overset{\text{Weierstrass inequality}}{\leq}\hat{g}\left(S\right)^{2}\left(1-\left(1-\sum_{i\in S}X_{i}\left(\tau\right){}^{2}\right)\right)\nonumber \\
 & \leq\hat{g}\left(S\right)^{2}\sum_{i\in S}X_{i}\left(\tau\right)^{2}.\label{eq:late_bound_on_g}
\end{align}
Plugging this back into (\ref{eq:middle_bound_on_g}), we get
\begin{align*}
\e\left[g\left(X\left(\tau\right)\right)^{2}\right] & \leq\sum_{\abs S=k}\hat{g}\left(S\right)^{2}\sum_{i\in S}\e X_{i}\left(\tau\right)^{2}\\
 & \leq\sum_{\abs S=k}\hat{g}\left(S\right)^{2}k\delta=\norm g_{2}^{2}k\delta.
\end{align*}
Together with (\ref{eq:error_in_SS10}) and (\ref{eq:SS10_starting_point}),
this yields the desired result.
\end{proof}

\subsection{Theorem \ref{thm:fractional_osss}}
\begin{proof}[Proof of Theorem \ref{thm:fractional_osss}]
Since both the left-hand side and the right-hand side of (\ref{eq:slightly_stronger_osss_special_case})
are invariant to shifts of the type $f\mapsto f+c$, we can assume
without loss of generality that $\e f=0$. 

Let $k\in\n$. As in the proof of Theorem \ref{thm:fractional_ss_10},
define $g_{k}:\left\{ -1,1\right\} ^{n}\to\r$ by 
\[
g_{k}\left(x\right)=\sum_{\abs S=k}\hat{f}\left(S\right)\prod_{i\in S}x_{i}.
\]
By equations (\ref{eq:SS10_starting_point}), (\ref{eq:middle_bound_on_g})
and (\ref{eq:late_bound_on_g}) in the proof of Theorem \ref{thm:fractional_ss_10},
using the fact that $f$ is computed exactly, we have 
\begin{align*}
\left(\norm{g_{k}}_{2}^{2}\right)^{2} & \leq\norm f_{2}^{2}\sum_{\abs S=k}\widehat{g_{k}}\left(S\right)^{2}\sum_{i\in S}\e X_{i}\left(\tau\right)^{2}\\
 & =\norm f_{2}^{2}\sum_{i=1}^{n}\delta_{i}\sum_{S\ni i,\abs S=k}\hat{f}\left(S\right)^{2}.
\end{align*}
Supposing that $f$ is a degree-$d$ polynomial, we can sum the above
for all $k=1,\ldots,d$. Recalling that $\partial_{i}f=\sum_{S\ni i}\hat{f}\left(S\right)\chi_{S\backslash\left\{ i\right\} }$(see
(\ref{eq:fourier_representation_of_derivative})), we get $\norm{\partial_{i}f}_{2}^{2}=\sum_{S\ni i}\hat{f}\left(S\right)^{2}$,
and so
\begin{equation}
\sum_{k=1}^{d}\left(\norm{g_{k}}_{2}^{2}\right)^{2}\leq\norm f_{2}^{2}\sum_{i=1}^{n}\delta_{i}\norm{\partial_{i}f}_{2}^{2}.\label{eq:osss_almost_there}
\end{equation}
The left hand side gives an upper bound to $\norm f_{2}^{4}$: 
\[
\left(\norm f_{2}^{2}\right)^{2}=\left(\sum_{k=1}^{d}\norm{g_{k}}_{2}^{2}\right)^{2}\overset{\text{Cauchy-Schwarz}}{\leq}d\sum_{k=1}^{d}\left(\norm{g_{k}}_{2}^{2}\right)^{2},
\]
and plugging this into (\ref{eq:osss_almost_there}) gives
\[
\norm f_{2}^{2}\leq d\sum_{i=1}^{n}\delta_{i}\norm{\partial_{i}f}_{2}^{2}.
\]
Since we assume that $\e f=0$, we have $\var\left(f\right)=\norm f_{2}^{2}$,
and the result follows. 
\end{proof}

\subsection{Theorem \ref{thm:fractional_bsw}\label{subsec:proof_of_bsw}}

The proof does not use the fact that the processes are allowed to
take fractional values in $\left[-1,1\right]^{n}$, and instead uses
a result about certificates. For $x\in\left\{ -1,1\right\} ^{n}$,
an $f\left(x\right)$-certificate is a minimal set $S\subseteq\left[n\right]$
such that for all $y\in\left\{ -1,1\right\} ^{n}$ with $x_{i}=y_{i}$
for $i\in S$, we have $f\left(x\right)=f\left(y\right)$. In other
words, the value of $f$ does not change if we change the bits outside
of $S$. Theorem \ref{thm:fractional_bsw} is actually a special case
of the following average-case certificate complexity result, whose
proof follows that of Theorem \ref{thm:bsw_inequality} in \cite{bsw_balanced_boolean_functions}.
\begin{lem}
For every $x\in\left\{ -1,1\right\} ^{n}$, let $\Gamma_{x}$ be a
probability distribution on $2^{\left[n\right]}$ supported on the
set of $f\left(x\right)$-certificates, and let $S_{x}\sim\Gamma_{x}$.
Let $\delta_{i}=\p_{x\sim\mathrm{Unif}\left\{ -1,1\right\} ^{n}}\left[i\in S_{x}\right]$
and $\delta=\max_{i}\delta_{i}$. Then 
\[
\delta\geq\sqrt{\frac{\var\left(f\right)}{2n}}.
\]
\end{lem}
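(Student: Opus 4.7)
The plan is to use an independent two-copy argument in the spirit of Benjamini--Schramm--Wilson, adapted to the certificate setting. Draw $x,z\in\left\{ -1,1\right\} ^{n}$ independently and uniformly, and independently of each other draw $S_{x}\sim\Gamma_{x}$ and $S_{z}\sim\Gamma_{z}$. Since $f$ takes values in $\left\{ -1,1\right\} $, a direct computation gives
\[
\var(f)=1-\left(\e f\right)^{2}=2\,\p\left[f(x)\neq f(z)\right],
\]
so it is enough to upper-bound $\p\left[f(x)\neq f(z)\right]$ in terms of the revealments $\delta_{i}$.

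The structural heart of the argument is the claim that $f(x)\neq f(z)$ forces $x$ and $z$ to disagree on at least one coordinate in the intersection $S_{x}\cap S_{z}$. I would prove this by contrapositive: assume $x_{i}=z_{i}$ for every $i\in S_{x}\cap S_{z}$, and define a hybrid $y\in\left\{ -1,1\right\} ^{n}$ by $y_{i}=z_{i}$ for $i\in S_{z}$ and $y_{i}=x_{i}$ otherwise. Then $y$ agrees with $z$ on $S_{z}$, so the certificate property of $S_{z}$ yields $f(y)=f(z)$; moreover $y$ agrees with $x$ on $S_{x}$, because on $S_{x}\setminus S_{z}$ this is immediate from the construction, while on $S_{x}\cap S_{z}$ one has $y_{i}=z_{i}=x_{i}$ by the assumption. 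The certificate property of $S_{x}$ then gives $f(y)=f(x)$, so $f(x)=f(z)$.

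Granted this, a union bound produces
\[
\p\left[f(x)\neq f(z)\right]\leq\sum_{i}\p\left[i\in S_{x},\,i\in S_{z},\,x_{i}\neq z_{i}\right].
\]
Writing $\alpha_{i}=\p\left[x_{i}=1,\,i\in S_{x}\right]$ and $\beta_{i}=\p\left[x_{i}=-1,\,i\in S_{x}\right]$ so that $\alpha_{i}+\beta_{i}=\delta_{i}$, the independence of $(x,S_{x})$ and $(z,S_{z})$ together with the fact that they are identically distributed reduces the $i$-th summand to $2\alpha_{i}\beta_{i}$. AM--GM gives $2\alpha_{i}\beta_{i}\leq\delta_{i}^{2}/2$, and summing while using $\delta_{i}\leq\delta$ yields
\[
\tfrac{1}{2}\var(f)=\p\left[f(x)\neq f(z)\right]\leq\tfrac{1}{2}\sum_{i}\delta_{i}^{2}\leq\tfrac{1}{2}n\delta^{2},
\]
from which $\delta\geq\sqrt{\var(f)/n}$ follows (slightly stronger than the stated $\sqrt{\var(f)/(2n)}$).

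The main obstacle is the structural claim about $S_{x}\cap S_{z}$. A naive union bound over just $S_{x}$ only gives $\p[f(x)\neq f(z)]\leq\sum_{i}\p[i\in S_{x},\,x_{i}\neq z_{i}]=\tfrac{1}{2}\sum_{i}\delta_{i}$, which yields the far weaker bound $\var(f)\leq n\delta$; the hybrid construction is precisely what lets us pull a factor of $\delta$ out of \emph{both} certificates and extract the needed quadratic dependence on $\delta$.
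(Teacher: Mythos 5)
Your proposal is correct and follows essentially the same two-copy, certificate-intersection argument as the paper: the paper draws independent uniform $x,y$, builds the identical hybrid point, and bounds $\p\left[f\left(x\right)\neq f\left(y\right)\right]\leq\p\left[\abs{S_{x}\cap S_{y}}>0\right]\leq\sum_{i}\delta_{i}^{2}\leq n\delta^{2}$, which combined with $\p\left[f\left(x\right)\neq f\left(y\right)\right]=\tfrac{1}{2}\var\left(f\right)$ gives the stated bound. Your refinement --- requiring not merely a common certificate coordinate but a disagreement there, so that each summand becomes $2\alpha_{i}\beta_{i}\leq\delta_{i}^{2}/2$ rather than $\delta_{i}^{2}$ --- is also correct and yields the slightly stronger conclusion $\delta\geq\sqrt{\var\left(f\right)/n}$.
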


\begin{proof}
Let $x,y\in\left\{ -1,1\right\} ^{n}$ be independent and uniformly
random. Let $N=\abs{S_{x}\intersect S_{y}}$. Then 
\[
\p\left[N>0\right]\leq\e\left[N\right]=\sum_{i=1}^{n}\delta_{i}^{2}\leq n\delta^{2}.
\]
Let $z\in\left\{ -1,1\right\} ^{n}$ be defined by 
\[
z_{i}=\begin{cases}
x_{i} & i\in S_{x}\\
y_{i} & \text{otherwise.}
\end{cases}
\]
Clearly, $f\left(x\right)=f\left(z\right)$, since both $x$ and $z$
are identical on the same $f\left(x\right)$-certificate $S_{x}$.
If $N=0$, then $z$ is also identical with $y$ on $S_{y}$, and
so $f\left(y\right)=f\left(z\right)$, implying that $f\left(x\right)=f\left(y\right)$.
We then have
\begin{align*}
1 & =\p\left[N=0\right]+\p\left[N>0\right]\\
 & \leq\p\left[f\left(x\right)=f\left(y\right)\right]+n\delta^{2}\\
 & =\p\left[f\left(x\right)=1\right]^{2}+\left(1-\p\left[f\left(x\right)=1\right]\right)^{2}+n\delta^{2}\\
 & =1-\frac{1}{2}\var\left(f\right)+n\delta^{2},
\end{align*}
yielding the result. 
\end{proof}
\begin{proof}[Proof of Theorem \ref{thm:fractional_bsw}]
If a fractional query algorithm calculates $f$ exactly, it must
completely reveal all the bits of some $f\left(X\left(\infty\right)\right)$
certificate. Hence, for every $x\in\left\{ -1,1\right\} ^{n}$, every
fractional query algorithm induces a probability distribution $\Gamma_{x}$
on $f\left(x\right)$ certificates, and $\e\left[X_{i}\left(\tau\right)^{2}\right]\geq\p_{x\sim\mathrm{Unif}\left\{ -1,1\right\} ^{n}}\left[i\in S_{x}\right]$. 
\end{proof}

\section{\label{sec:axis_aligned_processes_and_laplacian}Dynamic programming
and the axis-aligned Laplacian}

In this section, we show that the cost $u_{\eps}$ of axis-aligned
query algorithms converges as $\eps\to0$ (Proposition \ref{prop:limiting_cost_of_axis_aligned}).
We show that that $u$ is Lipschitz, and so the convergence is uniform
(Lemma \ref{lem:u_eps_are_lipschitz} and Lemma \ref{lem:uniform_convergence}).
This allows us to prove that $\axis u=-2$ (Theorem \ref{thm:obeying_axis_aligned_laplacian}).
Uniqueness follows from a general comparison principle on the axis-aligned
Laplacian (Theorem \ref{thm:comparison_principle}).

We first redefine axis-aligned processes to allow starting at non-dyadic
points $x\in\left[-1,1\right]^{n}$. In order to keep the process
inside $\left[-1,1\right]^{n}$ at all times, and still preserve its
martingale nature, this forces some changes to the update rule near
the boundary of the cube.
\begin{defn}[Axis-aligned jump process]
\label{def:general_axis_aligned_process}Let $\eps>0$. An \emph{axis-aligned
jump process} with jump size $\eps$ is a discrete-time martingale
$X\left(t\right)\in\left[-1,1\right]^{n}$. At time $t$, if $X\left(t\right)\notin\left\{ -1,1\right\} ^{n}$,
a direction $i_{t}\in\left[n\right]$ with $X_{i_{t}}\left(t\right)\in\left(-1,1\right)$
is chosen according to a\emph{ direction choosing strategy} $S$.
Let $a_{t},b_{t}$ be defined as follows.
\begin{enumerate}
\item If the distance between $X_{i_{t}}\left(t\right)$ to both endpoints
$\left\{ -1,1\right\} $ is at least $\eps$, i.e. $X_{i_{t}}\left(t\right)+\eps<1$
and $X_{i_{t}}\left(t\right)-\eps>-1$, then $a_{t}=X_{i_{t}}\left(t\right)-\eps$
and $b_{t}=X_{i_{t}}\left(t\right)+\eps$.
\item If $X_{i_{t}}\left(t\right)+\eps>1$, then $a_{t}=1-2\eps$ and $b_{t}=1$.
\item If $X_{i_{t}}\left(t\right)-\eps<-1$, then $a_{t}=-1$ and $b_{t}=-1+2\eps$. 
\end{enumerate}
The new position $X\left(t+1\right)$ is then updated in a martingale
fashion: for all $j\neq i_{t}$, we have $X_{j}\left(t+1\right)=X_{j}\left(t\right)$,
and for $i_{t}$ we have 
\[
X_{i_{t}}\left(t+1\right)=\begin{cases}
a_{t} & \text{with probability }\frac{b_{t}-X_{i_{t}}\left(t\right)}{2\eps}\\
b_{t} & \text{with probability }\frac{X_{i_{t}}\left(t\right)-a_{t}}{2\eps}.
\end{cases}
\]

\end{defn}

We will deal only with $0$-error axis-aligned query algorithms. We
therefore always set $\tau=\inf\left\{ t>0\mid f\left(X\left(t\right)\right)\in\left\{ -1,1\right\} \right\} $
and $A\left(x\right)=f\left(x\right)$. Each direction choosing strategy
$S$ thus determines an axis-aligned query algorithm. 

For a given strategy $S$, denote its expected cost when starting
at $X\left(0\right)=x$ by $C\left(x,S\right)=\sum_{i=1}^{n}\var\left(X_{i}\left(\tau\right)\right)$.
For $x\in\left[-1,1\right]^{n}$, let $u_{\eps}:\left[-1,1\right]^{n}\to\r$
be the best possible cost over all axis-aligned algorithms:
\[
u_{\eps}\left(x\right)=\inf_{S}C\left(x,S\right).
\]
Under strategy $S$, the cost when starting at $x\in\left[-1,1\right]^{n}$
can be broken up into two parts: the cost for taking a single step,
given by $\e\norm{X\left(1\right)-x}_{2}^{2}$, and the cost of continuing
the strategy starting at $X\left(1\right)$, given by $\e C\left(X\left(1\right),S'\right)$,
where $S'$ is some other strategy related to $S$, which incorporates
the knowledge that $t=1$ and $X\left(0\right)=x$. Thus
\[
C\left(x,S\right)=\e\left[\norm{X\left(1\right)-X\left(0\right)}_{2}^{2}+C\left(X\left(1\right),S'\right)\right].
\]
The right-hand side is a convex combination of choices of directions,
and the strategy $S$ is always improved by picking the minimal direction.
It is then clear that in order to minimize the cost, it is enough
to consider only deterministic Markov strategies. Since each Markov
strategy only has finitely many directions to choose from, there must
in fact exist an optimal strategy.
\begin{fact}
There is a Markov strategy $S$ such that for every $x\in\left[-1,1\right]^{n}$,
$u_{\eps}\left(x\right)=C\left(x,S\right)$. For $x\in\left[-1+\eps,1-\eps\right]^{n}$,
we have 
\begin{equation}
u_{\eps}\left(x\right)=\min_{i}\frac{u_{\eps}\left(x+\eps e_{i}\right)+u_{\eps}\left(x+\eps e_{i}\right)}{2}+\eps^{2}.\label{eq:dynamic_programming_equation}
\end{equation}
\end{fact}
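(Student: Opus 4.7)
The plan is to cast the problem as a finite-state stochastic shortest path Markov decision process and invoke classical dynamic programming. First I would fix a starting point $x_{0} \in [-1,1]^{n}$ and note that under any strategy the process is confined to the finite lattice $L_{x_{0}} := (x_{0} + \eps\z^{n}) \cap [-1,1]^{n}$, possibly augmented by a few points near the faces where the adjusted endpoints $a_{t},b_{t}$ of Definition \ref{def:general_axis_aligned_process} land. At each non-terminal state there are at most $n$ available actions, and the stopping set $\mathcal{T} := \{ y \in L_{x_{0}} : f(y) \in \{-1,1\} \}$ is non-empty since it contains $L_{x_{0}} \cap \{-1,1\}^{n}$. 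To see that $u_{\eps}$ is finite, observe that the strategy which cycles through coordinates $i$ with $X_{i}(t) \in (-1,1)$ makes each coordinate a bounded martingale on $[-1,1]$ hitting $\{-1,1\}$ in finite expected time, so $\tau < \infty$ almost surely and $u_{\eps}(x_{0}) \leq \sum_{i} \var(X_{i}(\tau)) \leq n$.

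Next I would derive one direction of the Bellman equation by splitting at $t=1$. For any strategy $S$ and any $x \in [-1+\eps,1-\eps]^{n}$ with $f(x) \notin \{-1,1\}$, the tower property and the Markov update structure give $C(x,S) = \eps^{2} + \e\,C(X(1), S')$ for the post-time-one continuation $S'$; since $\e\,C(X(1),S') \geq \min_{i} \tfrac{1}{2}(u_{\eps}(x+\eps e_{i}) + u_{\eps}(x-\eps e_{i}))$ (the step from $x$ lands at $x \pm \eps e_{i}$ with equal probability for whichever direction $i$ is chosen at time $0$), taking the infimum over $S$ yields
\[ u_{\eps}(x) \;\geq\; \eps^{2} + \min_{i} \tfrac{1}{2}\left(u_{\eps}(x+\eps e_{i}) + u_{\eps}(x-\eps e_{i})\right). \]
For the reverse inequality and the existence of a single Markov optimizer, define $S^{*}(y)$ to pick any direction $i$ minimizing $u_{\eps}(y+\eps e_{i}) + u_{\eps}(y-\eps e_{i})$ (with the analogous choice near the boundary), and set $V(y) := C(y, S^{*})$. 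Then $V$ satisfies the Bellman fixed-point relation $V = TV$ on $L_{x_{0}}$ with $V = 0$ on $\mathcal{T}$, and $V \geq u_{\eps}$ trivially; the proof concludes by showing that $T$ admits a unique nonnegative fixed point with the given boundary data, which forces $V = u_{\eps}$ and simultaneously yields (\ref{eq:dynamic_programming_equation}) and the existence of the Markov optimizer.

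The main obstacle is the apparent circularity of defining $S^{*}$ in terms of $u_{\eps}$ before the Bellman equation has been established with equality, together with the uniqueness claim for the fixed point of $T$. The cleanest way to handle both is to work operator-theoretically on the finite state space: prove $u_{\eps} \geq T u_{\eps}$ directly, then establish uniqueness of the fixed point of $T$ via value iteration (the iterates $T^{k}\mathbf{0}$ monotonically increase to the minimal fixed point, and positive per-step cost $\eps^{2}$ combined with almost-sure termination gives the needed convergence), which upgrades the inequality to equality and makes any greedy strategy $S^{*}$ optimal from every starting point simultaneously.
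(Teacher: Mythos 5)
Your proposal is correct in outline, and it is worth noting that the paper does not actually supply a proof of this Fact: it is stated without a proof environment, justified only by the informal paragraph preceding it (decompose the cost at $t=1$; the right-hand side is a convex combination over direction choices, so randomization never helps and deterministic Markov strategies suffice; finitely many such strategies on the reachable lattice, hence an optimizer exists). Your plan is the rigorous version of exactly that first-step analysis, packaged as a finite-state stochastic shortest path problem: the easy inequality $u_{\eps}\geq Tu_{\eps}$ by conditioning on the first move, then value iteration $T^{k}\mathbf{0}\uparrow$ to identify $u_{\eps}$ as a fixed point and certify the greedy Markov policy. What your route buys over the paper's is that it actually closes the circularity you correctly identify (the greedy policy is defined via $u_{\eps}$ before the Bellman equality is known), and it produces a single Markov strategy optimal from every starting point simultaneously, which is what the Fact literally asserts and which "finitely many Markov strategies" alone does not give on the uncountable state space $[-1,1]^{n}$.

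Two details deserve care. First, the rollout step that converts $u_{\eps}=Tu_{\eps}$ into $C(x,S^{*})=u_{\eps}(x)$ requires $\e\left[u_{\eps}\left(X\left(k\wedge\tau\right)\right)\right]\to0$, i.e.\ almost-sure termination under the greedy policy, not just under the round-robin policy you use for the finiteness bound; the argument is the same (each coordinate is a bounded martingale with jumps of size at least $\eps$ unless absorbed, so a coordinate updated infinitely often is absorbed a.s., and only finitely many coordinates exist), but it should be stated for arbitrary strategies. Second, you do not actually need uniqueness of the fixed point of $T$, which is the most delicate claim in your sketch; it suffices that the monotone limit of $T^{k}\mathbf{0}$ equals $u_{\eps}$ (finite horizon values increase to the infinite-horizon optimum for nonnegative costs with finitely many actions) and that the greedy rollout then recovers $u_{\eps}$ exactly. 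With those two points made explicit the argument is complete, and it is strictly more careful than what the paper provides.
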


We can construct an equivalent, recursive formulation of the cost
function $u_{\eps}$. In this variant, we are given a function $F:\partial\left[-1,1\right]^{n}\to\r$
defined on the boundary of the hypercube. We again run a process $X\left(t\right)$
while choosing directions using a strategy $S$, but the goal is now
to minimize the running-cost plus the exit-cost $F$. More formally,
let $\sigma=\inf\left\{ t>0\mid X\left(t\right)\in\partial\left[-1,1\right]^{n}\right\} $
be the first time that the process $X\left(t\right)$ hits the boundary
of the hypercube. The cost of the strategy $S$ in this model is 
\[
C_{F}\left(x,S\right)=\sum_{i=1}^{n}\var\left(X_{i}\left(\sigma\right)\right)+\e F\left(X\left(\sigma\right)\right).
\]
When $F$ is defined recursively on the facets of the hypercube as
the best cost among all axis-aligned algorithms, then minimizing $C_{F}\left(x,S\right)$
is equivalent to minimizing $C\left(x,S\right)$ (i.e. start with
$F\left(x\right)=0$ for all $x\in\left\{ -1,1\right\} ^{n}$, then
iteratively set the value of $F$ for all lines, squares, cubes, etc.
by considering axis-aligned algorithms with jump size $\eps$). In
this case, the cost $u_{\eps}$ equals $F$ on the boundary. This
formulation is useful for proving the following lemma.
\begin{lem}
\label{lem:u_eps_are_lipschitz}Let $n>0$ be an integer and let $L_{n}=\frac{n+3}{2}$.
Then $u_{\eps}$ is $L_{n}$-Lipschitz under the one-norm, i.e. $\abs{u_{\eps}\left(x\right)-u_{\eps}\left(y\right)}\leq\frac{n+3}{2}\norm{x-y}_{1}$
for all $x,y\in\left[-1,1\right]^{n}$. 
\end{lem}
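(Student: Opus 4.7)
My plan is to prove the lemma by induction on the dimension $n$, using the recursive Dirichlet formulation of $u_\eps$ described just before the lemma. In this formulation, $u_\eps$ on $[-1,1]^n$ is the optimal cost of an axis-aligned process stopped upon hitting $\partial[-1,1]^n$, with terminal cost $F$ defined recursively as the optimal cost on each $(n-1)$-facet for the restricted function on that facet. By the triangle inequality applied along a coordinate-by-coordinate $\ell^1$-geodesic, it suffices to bound $|u_\eps(x + te_i) - u_\eps(x)|$ by $L_n|t|$ for any $x$, coordinate $i$, and $t$ with $x+te_i \in [-1,1]^n$; by symmetry we may assume $t>0$. The base case $n=1$ is explicit: for non-constant $f$, the only valid algorithm is the 1D random walk, so $u_\eps(s) = 1 - s^2$, which is $2 = L_1$-Lipschitz.

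For $n \geq 2$, set $y = x + te_i$ with $t > 0$. By the inductive hypothesis applied to each facet, the boundary data $F$ is $L_{n-1}$-Lipschitz in the $n-1$ coordinates of its facet. To upper bound $u_\eps(y) - u_\eps(x)$, I would construct a two-phase strategy for the process starting at $y$: in phase~1, perform a 1D random walk in coordinate $i$ alone (freezing the others), stopping when this coordinate hits a chosen pair of targets; in phase~2, continue with the optimal strategy for the reached state, which costs either $u_\eps(x)$ (if coordinate $i$ returned to $x_i$) or a facet value $F(\cdot)$ controlled by the inductive hypothesis. Standard martingale identities (variance of a stopped 1D walk and gambler's ruin) give the phase~1 cost and the hitting probabilities. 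For instance, stopping at $\{x_i,1\}$ produces the bound
\[
u_\eps(y) - u_\eps(x) \leq t(1-x_i) + \tfrac{t}{1-x_i}\bigl(F(x+(1-x_i)e_i) - u_\eps(x)\bigr),
\]
and a symmetric bound comes from stopping at $\{-1,x_i\}$. The reverse direction $u_\eps(x) - u_\eps(y) \leq L_n t$ follows by an analogous strategy starting from $x$.

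The main obstacle is arriving at the precise constant $L_n = L_{n-1} + \tfrac{1}{2} = \tfrac{n+3}{2}$. Using only the trivial bound $F \leq n-1$, the right-hand side above becomes $t\bigl[s + (n-1)/s\bigr]$ with $s = 1-x_i$, which exceeds $L_n t$ outside the range $s \in [(n-1)/2,\,2]$. Closing the bound with the tight constant therefore requires choosing between the two phase-1 walks depending on the position of $x_i$, and exploiting the full inductive $L_{n-1}$-Lipschitz property of $F$ (comparing facet values with nearby interior or boundary values of $u_\eps$) rather than just its trivial bound. Handling the intermediate regime where neither phase-1 walk alone suffices is the delicate part of the proof; the extra $\tfrac{1}{2}$ per dimension in $L_n$ corresponds heuristically to the slope of the 1D cost $1-s^2$ at the endpoints $s = \pm 1$.
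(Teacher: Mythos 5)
Your overall skeleton (induction on the dimension, reduction to single-coordinate differences, the explicit base case $u_\eps(s)=1-s^2$) matches the paper's, but the core of the inductive step is not closed, and you have correctly located where the problem is without resolving it. The two-phase strategy you propose -- run a one-dimensional walk in coordinate $i$ until it hits $\{x_i,1\}$, then continue optimally -- yields a bound of the form $t(1-x_i)+\frac{t}{1-x_i}\bigl(F(x+(1-x_i)e_i)-u_\eps(x)\bigr)$, and to control the second term when $1-x_i$ is small you need $F(x+(1-x_i)e_i)-u_\eps(x)=O(1-x_i)$. That is exactly the Lipschitz comparison between an interior point and its projection onto a facet, which is the statement being proved; the inductive hypothesis only gives Lipschitz control of $F$ \emph{within} each facet, not of the gap between a facet value and a nearby interior value of $u_\eps$. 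So the argument as sketched is circular in precisely the ``intermediate regime'' you flag, and no choice between the two phase-one walks fixes this, since both carry the same $t/(\text{distance to the chosen endpoint})$ factor.

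The paper's proof supplies the missing idea by \emph{not} resolving the coordinate-$i$ discrepancy with a one-dimensional walk. Instead it couples two processes: for the interior point $x$ and its boundary projection $y$, it runs an optimal strategy from one starting point and has the other process copy the identical moves in the coordinates $2,\ldots,n$ (doing nothing, or incurring only a nonnegative extra variance term $\var(X_1(\tau))$, in coordinate $1$). The $\ell^1$ distance between the two processes is then preserved exactly until both simultaneously reach the boundary, the running costs agree up to a term of the correct sign, and the inductive hypothesis (plus a separate elementary case for two boundary points, using $0\le u_\eps\le n$ when they lie on opposite facets) finishes the estimate. The constant $L_n=L_{n-1}+\tfrac12$ then falls out of the one explicit sub-case where the facet value vanishes, via $1-x_1^2\le 2(1-x_1)$ combined with the crude bound $u_\eps\le n-1$ on the boundary, rather than from any delicate balancing of two walks. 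Without this coupling (or an equivalent device that avoids the $1/(1-x_i)$ blow-up), your proof does not go through.
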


\begin{proof}
By induction on the dimension. For $n=1$, if $f$ is constant then
$u_{\eps}=0$. Otherwise, there is only one strategy: move in a martingale
fashion until you reach the endpoints. Thus 
\[
u_{\eps}\left(x\right)=\var\left(X\left(\tau\right)\right)=\e X\left(\tau\right)^{2}-\e X\left(0\right)^{2}=1-x^{2}.
\]
We then have 
\[
\abs{u_{\eps}\left(x\right)-u_{\eps}\left(y\right)}=\abs{1-x^{2}-\left(1-y^{2}\right)}=\abs{x^{2}-y^{2}}=\abs{x-y}\abs{x+y}\leq2\abs{x-y}.
\]
Now assume that the claim is true for any two points on the same $n-1$-dimensional
facet of the hypercube, and let $x,y\in\left[-1,1\right]^{n}$. We
prove the claim for $n$ dimensions by considering the different relations
that $x$ and $y$ can have.
\begin{enumerate}
\item \label{enu:both_points_on_boundary}Suppose both $x,y$ are on the
boundary $\partial\left[-1,1\right]^{n}$. If they are on the same
facet, the claim follows by the induction hypothesis. If they are
on two adjacent facets, we can assume without loss of generality that
$x=\left(1,x_{2},x_{3}\ldots,x_{n}\right)$ and $y=\left(y_{1},1,y_{3},\ldots,y_{n}\right)$.
Then the point $z=\left(1,1,x_{3},\ldots,x_{n}\right)$ shares a facet
with $x$ and with $y$ and satisfies $\norm{x-y}_{1}=\norm{x-z}_{1}+\norm{y-z}_{1}$.
We then have
\begin{align*}
\abs{u_{\eps}\left(x\right)-u_{\eps}\left(y\right)} & \leq\abs{u_{\eps}\left(x\right)-u_{\eps}\left(z\right)}+\abs{u_{\eps}\left(z\right)-u_{\eps}\left(y\right)}\\
 & \leq L_{n-1}\norm{x-z}_{1}+L_{n-1}\norm{z-y}_{1}\\
 & =L_{n-1}\norm{x-y}_{1}.
\end{align*}
If $x$ and $y$ are on opposite facets, then $\norm{x-y}_{1}\geq2$,
and since $0\leq u_{\eps}\leq n$, we have 
\[
\abs{u_{\eps}\left(x\right)-u_{\eps}\left(y\right)}\leq n\leq\frac{n}{2}\norm{x-y}_{1}\leq L_{n}\norm{x-y}_{1}.
\]
\item \label{enu:one_point_inside_one_in_boundary}Suppose that $x\in\left(-1,1\right)^{n}$
and $y$ is its projection onto the boundary of the hypercube in some
direction $j\in\left[n\right]$. Without loss of generality, we can
assume $j=1$ and  $y=\left(1,x_{2},\ldots,x_{n}\right)$. If $u_{\eps}\left(y\right)=0$,
let $T$ be the strategy that always picks direction $1$, and let
$X\left(t\right)$ be the axis-aligned jump process which starts at
$x$ and is moved by $T$. Then 
\begin{align*}
C\left(x,T\right) & =\frac{1+x}{2}u_{\eps}\left(y\right)+\frac{1-x}{2}u_{\eps}\left(-1,x_{2},\ldots x_{n}\right)+1-x_{1}^{2}\\
 & \overset{u_{\eps}\leq n-1\text{ on the boundary}}{\leq}\frac{1-x}{2}\left(n-1\right)+1-x_{1}^{2}\\
 & \overset{\left(\abs x\leq1\right)}{\leq}\frac{n+3}{2}\left(1-x\right)=L_{n}\norm{x-y}_{1}.
\end{align*}
Thus 
\[
\abs{u_{\eps}\left(x\right)-u_{\eps}\left(y\right)}=u_{\eps}\left(x\right)\leq C\left(x,T\right)\leq L_{n}\norm{x-y}_{1}.
\]
If $u_{\eps}\left(y\right)\neq0$, then let $S$ be an optimal strategy,
and let $Y\left(t\right)$ be the corresponding axis-aligned jump
process which starts at $y$. Let $T$ be a strategy which repeats
the choices that $S$ makes on $y$'s facet if $x$ in the interior:
\[
T\left(x_{1},x_{2},\ldots,x_{n}\right)=\begin{cases}
S\left(1,x_{2},\ldots,x_{n}\right) & x\in\left(-1,1\right)^{n}\\
S\left(x\right) & x\in\partial\left[-1,1\right]^{n}.
\end{cases}
\]
Let $X\left(t\right)$ be the corresponding axis-aligned jump process
which starts at $x$. Couple $X\left(t\right)$ and $Y\left(t\right)$
so that they move together, and let $\tau$ be the first time that
$Y\left(t\right)$ hits a facet $j\in2,\ldots,n$. Since $u_{\eps}\left(y\right)\neq0$,
we have $\tau<\infty$. Necessarily, $X\left(t\right)$ will also
hit the same facet at this time (see Figure \ref{fig:x_follows_y}),
and so using the boundary condition formulation, we have $C\left(x,T\right)=\sum_{i=1}^{n}\var\left(X_{i}\left(\tau\right)\right)+\e u_{\eps}\left(X\left(\tau\right)\right)$
and $C\left(y,S\right)=\sum_{i=1}^{n}\var\left(Y_{i}\left(\tau\right)\right)+\e u_{\eps}\left(Y\left(\tau\right)\right)$.
The processes $X\left(t\right)$ and $Y\left(t\right)$ are coupled
and only move in directions $2,\ldots,n$, and so we have 
\[
C\left(x,T\right)-C\left(y,S\right)=\e u_{\eps}\left(X\left(\tau\right)\right)-\e u_{\eps}\left(Y\left(\tau\right)\right).
\]
Since $S$ is optimal but $T$ might be non-optimal, we get 
\begin{align*}
u_{\eps}\left(x\right)-u_{\eps}\left(y\right) & \leq C\left(x,T\right)-C\left(y,S\right)\\
 & =\e u_{\eps}\left(X\left(\tau\right)\right)-\e u_{\eps}\left(Y\left(\tau\right)\right)\\
 & \leq L_{n-1}\e\norm{X\left(\tau\right)-Y\left(\tau\right)}_{1}=L_{n-1}\norm{x-y}_{1}.
\end{align*}
We use a similar technique to get the opposite inequality. Let $S$
be an optimal strategy, and let $X\left(t\right)$ be the corresponding
axis-aligned jump process which starts at $x$. Let $Y\left(t\right)$
start at $y$ and be moved by a strategy $T$ defined as follows.
The process $Y\left(t\right)$ and strategy $T$ are coupled with
$X\left(t\right)$ and $S$ so that whenever $S$ moves $X\left(t\right)$
in directions $2,\ldots,n$, $T$ moves $Y\left(t\right)$ in the
same direction; when $S$ moves $X\left(t\right)$ in direction $1$,
$T$ does nothing. Let $\tau$ be the first time that $X\left(t\right)$
reaches the boundary of the hypercube. After time $\tau$, $T$ moves
$Y\left(t\right)$ optimally. The coordinates $2,\ldots,n$ of $X\left(t\right)$
and $Y\left(t\right)$ are always the same for $t\leq\tau$ (see Figure
\ref{fig:y_follows_x}). Since $S$ is optimal but $T$ might be non-optimal,
we have 
\begin{align*}
u_{\eps}\left(y\right)-u_{\eps}\left(x\right) & \leq C\left(y,T\right)-C\left(x,S\right)\\
 & =\sum_{i=1}^{n}\left(\var\left(Y_{i}\left(\tau\right)\right)-\var\left(X_{i}\left(\tau\right)\right)\right)+\e u_{\eps}\left(Y\left(\tau\right)\right)-\e u_{\eps}\left(X\left(\tau\right)\right)\\
 & =-\var\left(X_{1}\left(\tau\right)\right)+\e u_{\eps}\left(Y\left(\tau\right)\right)-\e u_{\eps}\left(X\left(\tau\right)\right)\\
 & \leq\e u_{\eps}\left(Y\left(\tau\right)\right)-\e u_{\eps}\left(X\left(\tau\right)\right)\overset{\text{Item \eqref{enu:both_points_on_boundary}}}{\leq}L_{n}\e\norm{X\left(\tau\right)-Y\left(\tau\right)}_{1},
\end{align*}
where for the last inequality we use the fact that at time $\tau$
both points are on the boundary of the hypercube. Since $\norm{X\left(\tau\right)-Y\left(\tau\right)}_{1}=1-X_{1}\left(\tau\right)$,
we have $\e\norm{X\left(\tau\right)-Y\left(\tau\right)}_{1}=1-x=\norm{x-y}_{1}$,
and the result follows. 
\begin{figure}
\begin{centering}
\subfloat[\label{fig:x_follows_y}]{\includegraphics[scale=0.5]{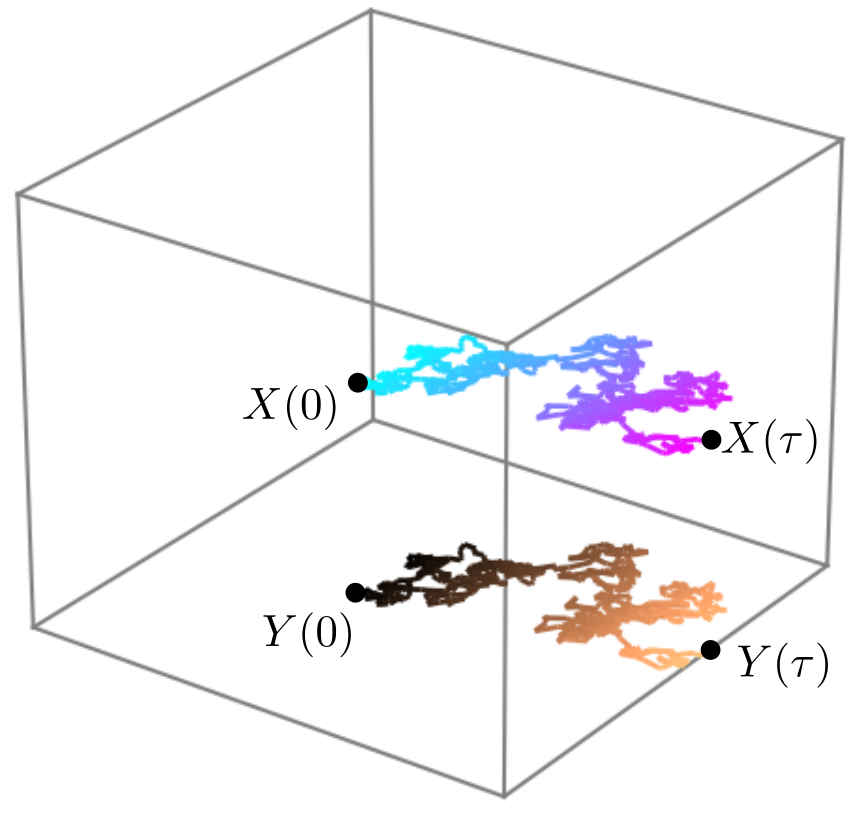}

}\subfloat[\label{fig:y_follows_x}]{\includegraphics[scale=0.5]{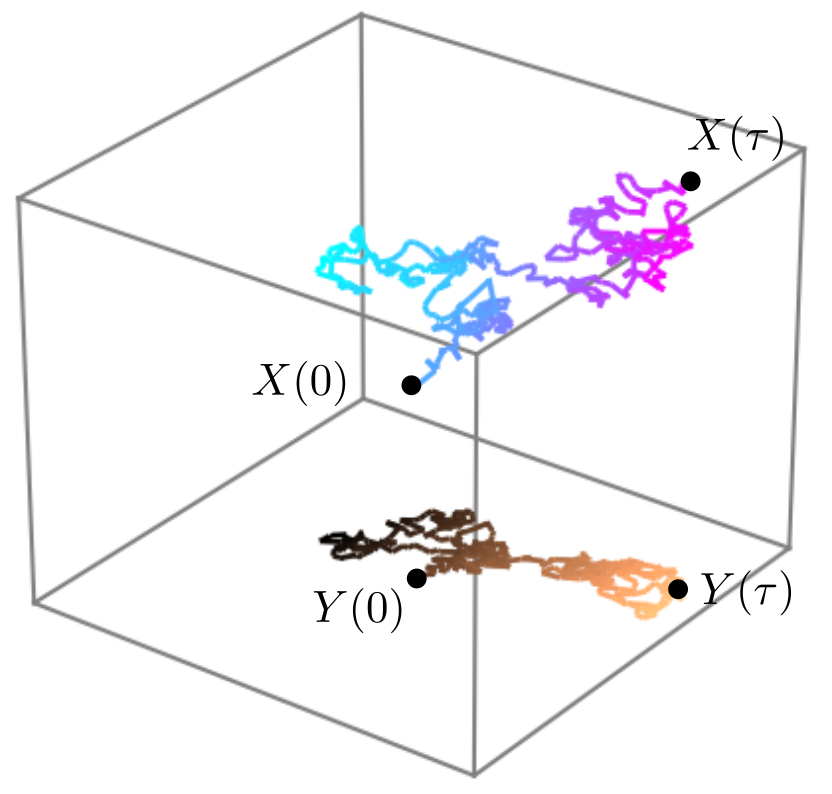}

}
\par\end{centering}
\caption{\ref{fig:x_follows_y}. $X\left(t\right)$ follows the strategy of
$Y\left(t\right)$ until $Y\left(t\right)$ hits a facet. At time
$\tau$, both are on the same facet. \ref{fig:y_follows_x}. $Y\left(t\right)$
follows the strategy of $X\left(t\right)$. At time $\tau$, both
processes are on the boundary. }

\end{figure}
\item Suppose that $x,y\in\left(-1,1\right)^{n}$ differ by only one coordinate.
Let $S$ be an optimal Markov strategy which moves $X\left(t\right)$.
Let $T$ be the following strategy, which is coupled with $X\left(t\right)$:
\[
T=\begin{cases}
S\left(X\left(t\right)\right) & \text{if both \ensuremath{X\left(t\right)} and \ensuremath{Y\left(t\right)} are in the interior of the hypercube}\\
S\left(Y\left(t\right)\right) & \text{otherwise.}
\end{cases}
\]
Let $\tau$ be the first time that either $X\left(t\right)$ or $Y\left(t\right)$
hit the boundary of the hypercube. Up until time $\tau$ they move
together, and so the difference in running cost between them is $0$,
and the distance between them stays constant. Since $S$ is optimal
but $T$ might be non-optimal, and since for $t>\tau$ both processes
move according to an optimal strategy, we have 
\begin{align*}
u_{\eps}\left(y\right)-u_{\eps}\left(x\right) & \leq C\left(y,Y\right)-C\left(x,S\right)\\
 & =\e u_{\eps}\left(Y\left(\tau\right)\right)-\e u_{\eps}\left(X\left(\tau\right)\right)\\
 & \leq L_{n}\e\norm{X\left(\tau\right)-Y\left(\tau\right)}_{1}=L_{n}\norm{x-y}_{1},
\end{align*}
where for the last inequality, we use the fact that at time $\tau$,
either both points are on the same facet, in which case the the desired
inequality follows from the induction hypothesis, or one point is
the projection of the other, in which case the desired inequality
follows from Item (\ref{enu:one_point_inside_one_in_boundary}) above.
Switching the roles of $x$ and $y$ gives the opposite inequality.
\item Finally, for any two general points $x,y\in\left[-1,1\right]^{n}$,
consider the points $z_{i}=\left(x_{1},\ldots,x_{n-i},y_{n-i+1},\ldots,y_{n}\right)$
for $i=1,\ldots,n$. Set also $z_{0}=x$. Then
\begin{align*}
\abs{u_{\eps}\left(x\right)-u_{\eps}\left(y\right)} & \leq\sum_{i=1}^{n}\abs{u_{\eps}\left(z_{i}\right)-u_{\eps}\left(z_{i-1}\right)}\\
 & \leq\sum_{i=1}^{n}L_{n}\abs{z_{i}-z_{i-1}}\\
 & \leq L_{n}\norm{x-y}_{1}.
\end{align*}
 
\end{enumerate}
\end{proof}
\begin{rem}
The Lipschitz constant in Lemma \ref{lem:u_eps_are_lipschitz} is
tight up to an additive constant: consider the function 
\[
f\left(x\right)=\begin{cases}
1 & x_{1}=1\\
\prod_{i=2}^{n}x_{i} & x_{1}=-1.
\end{cases}
\]
Then $u_{\eps}\left(1,0,\ldots,0\right)=0$ while $u_{\eps}\left(-1,0,\ldots,0\right)=n-1$. 
\end{rem}

It is natural to look at the behavior of $u_{\eps}$ as $\eps\to0$.
Is there a ``limiting strategy'' in some sense, that works for smaller
and smaller step-sizes? At the very least, there is a limiting cost:
\begin{prop}
\label{prop:limiting_cost_of_axis_aligned}There exists a function
$u:\left[-1,1\right]^{n}\to\r$ such that for every $x\in\left[-1,1\right]^{n}$,
$u_{\eps}\left(x\right)\to u\left(x\right)$ as $\eps\to0$ dyadically
(i.e. we consider numbers of the form $\eps=2^{-k}$ as $k\to\infty$).
\end{prop}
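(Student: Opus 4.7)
The plan is to establish monotonicity: for every $x \in [-1,1]^n$ and every dyadic $\eps = 2^{-k}$, $u_{\eps/2}(x) \leq u_{\eps}(x)$. Combined with the trivial bound $u_{\eps}(x) \geq 0$, this makes $(u_{2^{-k}}(x))_k$ a non-increasing, non-negative sequence for each fixed $x$, whose pointwise limit $u(x)$ therefore exists. The Lipschitz estimate of Lemma \ref{lem:u_eps_are_lipschitz} is not needed at this stage; it is reserved for the later uniform-convergence step.

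Monotonicity is proved by simulation. Given any axis-aligned algorithm $Q_{\eps}$ with jump size $\eps$ starting at $x$, I would construct a companion $\eps/2$-algorithm $Q_{\eps/2}$ whose cost does not exceed that of $Q_{\eps}$. The simulation proceeds step by step: whenever the virtual $\eps$-process would take a single jump in direction $i$ from a position $Y$ to one of the two targets $\{a, b\}$ specified by Definition \ref{def:general_axis_aligned_process}, the $\eps/2$-process runs an auxiliary axis-aligned random walk in direction $i$ with jumps of size $\eps/2$, starting at $Y_i$, and terminates as soon as the $i$-th coordinate reaches $a$ or $b$. Two applications of optional stopping control the simulation. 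Applied to the $\eps/2$-martingale itself, it shows that the exit distribution on $\{a,b\}$ matches that of $Q_{\eps}$'s corresponding single step, so the simulated process has the same law as $Q_{\eps}$ on the coarse time-scale of completed steps and in particular computes $f$ with zero error. Applied to the squared-deviation martingale, it shows that the expected quadratic variation accumulated by each auxiliary walk equals the variance of its exit position, which in turn matches the single-step variance of $Q_{\eps}$. Summing over all simulated steps identifies the costs of the two algorithms (Remark \ref{rem:remark_on_quadratic_variation}), with possible strict inequality only if $Q_{\eps/2}$ encounters a configuration determining $f$ mid-simulation and terminates early. Taking the infimum over $Q_{\eps}$ yields $u_{\eps/2}(x) \leq u_{\eps}(x)$.

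The main obstacle is the careful handling of the boundary regime. When $Y_i$ lies within $\eps$ of $\pm 1$, the targets $\{a,b\}$ of Definition \ref{def:general_axis_aligned_process} are asymmetric around $Y_i$ and the single-step cost of $Q_{\eps}$ is strictly less than $\eps^2$; the simulating $\eps/2$-walk must itself fit inside $[-1,1]$ and may need to invoke its own boundary rule on its last step or two, while still producing both the correct martingale exit distribution on $\{a,b\}$ and the correctly reduced expected quadratic variation. This reduces to a clean but slightly fussy gambler's-ruin computation, and is the one place where the bookkeeping could hide an error. The rest of the argument is essentially a formal application of optional stopping and the tower property.
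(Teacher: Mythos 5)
Your proposal is correct and follows essentially the same route as the paper: monotonicity $u_{\eps/2}(x)\leq u_{\eps}(x)$ via coupling each single $\eps$-jump with an $\eps/2$-walk run until it exits at the same pair of targets, with optional stopping matching both the exit law and the expected sum of squared jumps, and then convergence from monotone non-negativity. Your extra attention to the asymmetric boundary steps is a detail the paper's (terse) proof glosses over, but it does not change the argument.
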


\begin{proof}
By the recursive construction of $u_{\eps}$, it suffices to prove
this for $x\in\left(-1,1\right)^{n}$. For every such $x$, we have
$u_{\eps/2}\left(x\right)\leq u_{\eps}\left(x\right)$ : an $\eps$-strategy
$S$ can be coupled with an $\eps/2$-strategy in the following manner.
If $S$ picks coordinate $i$ at time $t$, so that $X_{i_{t}}\left(t+1\right)\in\left\{ a,b\right\} $,
the simulating strategy can repeatedly pick coordinate $i$ until
$X_{i}$ takes one of the values $\left\{ a,b\right\} $; the randomness
can be coupled so that the same value is reached. Since the processes
are martingales, the escape probabilities and the expected sum square
of jumps are equal. So $u_{\eps}\left(x\right)$ is non-negative and
monotone decreasing as $\eps=2^{-k}\to0$, and must converge. 
\end{proof}
In fact, the following lemma is a consequence of the Arzela-Ascoli
theorem and Lemma \ref{lem:u_eps_are_lipschitz}:
\begin{lem}
\label{lem:uniform_convergence}$\norm{u_{\eps}-u}_{\infty}\to0$
as $\eps\to0$ dyadically. The function $u$ is $L_{n}$-Lipschitz. 
\end{lem}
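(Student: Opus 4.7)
The plan is to apply the Arzel\`a--Ascoli theorem to the family $\{u_{\eps}\}_{\eps=2^{-k}}$ on the compact set $[-1,1]^n$, and then use pointwise convergence (already guaranteed by Proposition \ref{prop:limiting_cost_of_axis_aligned}) to identify the limit uniquely.

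First I would observe that Lemma \ref{lem:u_eps_are_lipschitz} gives that every $u_{\eps}$ is $L_n$-Lipschitz with respect to the $\ell^1$ norm on $[-1,1]^n$, with a constant $L_n$ that does \emph{not} depend on $\eps$. Since $\ell^1$ and $\ell^2$ norms on $\r^n$ are equivalent up to dimension-dependent factors, the family is also equicontinuous in the usual topology. Boundedness is immediate from $0 \leq u_{\eps} \leq n$ (any strategy pays at most the sum of the variances of the $n$ coordinates, each at most $1$). Hence the family $\{u_{\eps}\}$ is uniformly bounded and equicontinuous on the compact set $[-1,1]^n$.

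Next I would invoke Arzel\`a--Ascoli: every sequence $\eps_k = 2^{-k} \to 0$ has a subsequence along which $u_{\eps_k}$ converges uniformly to some continuous function $v$ on $[-1,1]^n$. But by Proposition \ref{prop:limiting_cost_of_axis_aligned}, $u_{\eps}(x) \to u(x)$ pointwise for every $x \in [-1,1]^n$, so any uniform subsequential limit must coincide with $u$. Since every subsequence of $(u_{\eps_k})$ has a further subsequence converging uniformly to the same limit $u$, the whole sequence converges uniformly, i.e.\ $\norm{u_{\eps}-u}_{\infty} \to 0$ as $\eps \to 0$ dyadically.

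Finally, the $L_n$-Lipschitz property passes to the uniform limit: for any $x,y \in [-1,1]^n$,
\[
\abs{u(x)-u(y)} = \lim_{\eps \to 0} \abs{u_{\eps}(x)-u_{\eps}(y)} \leq L_n \norm{x-y}_1,
\]
so $u$ is itself $L_n$-Lipschitz. There is no real obstacle here; the only point worth checking carefully is that the Lipschitz bound in Lemma \ref{lem:u_eps_are_lipschitz} is truly uniform in $\eps$, which is indeed the case since $L_n = (n+3)/2$ depends only on the dimension.
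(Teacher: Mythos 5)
Your proof is correct and follows exactly the route the paper intends: the paper states the lemma as "a consequence of the Arzel\`a--Ascoli theorem and Lemma \ref{lem:u_eps_are_lipschitz}" without further detail, and your argument (uniform Lipschitz constant $L_n$ independent of $\eps$ giving equicontinuity, uniform boundedness by $n$, subsequential uniform limits identified with $u$ via the pointwise convergence of Proposition \ref{prop:limiting_cost_of_axis_aligned}, and passage of the Lipschitz bound to the limit) is precisely the standard fleshing-out of that citation.
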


The function $u$ can always be bounded from above by considering
strategies for $u_{\eps}$:
\begin{fact}
\label{fact:dynamic_programming_inequality_for_u}For every $x\in\left(-1,1\right)^{n}$,
every dyadic $\alpha>0$ small enough, and every direction $i$, 
\[
u\left(x\right)\leq\frac{u\left(x+\alpha e_{i}\right)+u\left(x-\alpha e_{i}\right)}{2}+\alpha^{2}.
\]
\end{fact}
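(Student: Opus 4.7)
The plan is to deduce the continuum-scale dynamic programming inequality from the one-step version (equation (\ref{eq:dynamic_programming_equation})) for $u_\eps$, by describing a specific (possibly suboptimal) axis-aligned strategy that simulates a single $\alpha$-step in direction $i$ using many smaller $\eps$-steps, and then passing to the limit with Lemma \ref{lem:uniform_convergence}.

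Fix $x \in (-1,1)^n$, a direction $i \in [n]$, and a dyadic $\alpha > 0$ small enough that $x + \alpha e_i$ and $x - \alpha e_i$ both lie in $(-1,1)^n$. For any dyadic $\eps > 0$ with $\eps \mid \alpha$, define the following strategy starting at $X(0) = x$: at each step repeatedly pick direction $i$ until the $i$-th coordinate first hits either $x_i + \alpha$ or $x_i - \alpha$, and then afterwards follow an optimal $\eps$-strategy for the remaining process. Let $\sigma$ denote the first-passage time of the $i$-th coordinate to the set $\{x_i - \alpha, x_i + \alpha\}$. During $[0, \sigma]$ only the $i$-th coordinate moves, and $X_i(t)$ is a symmetric $\eps$-step random walk on the grid $\{x_i - \alpha, x_i - \alpha + \eps, \ldots, x_i + \alpha\}$ started at $x_i$. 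By optional stopping, $\Pr[X(\sigma) = x \pm \alpha e_i] = 1/2$; and because $X_i$ is a martingale, the expected sum of squares of its jumps up to $\sigma$ equals $\var(X_i(\sigma)) = \alpha^2$.

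By Remark \ref{rem:remark_on_quadratic_variation}, the expected cost incurred up to time $\sigma$ is therefore exactly $\alpha^2$; the expected continuation cost after $\sigma$ is $\tfrac{1}{2}u_\eps(x + \alpha e_i) + \tfrac{1}{2}u_\eps(x - \alpha e_i)$. Since $u_\eps(x)$ is the infimum over all strategies, this constructed strategy yields the upper bound
\[
u_\eps(x) \leq \frac{u_\eps(x + \alpha e_i) + u_\eps(x - \alpha e_i)}{2} + \alpha^2.
\]
Letting $\eps \to 0$ along dyadics that divide $\alpha$ and invoking the uniform convergence $u_\eps \to u$ from Lemma \ref{lem:uniform_convergence}, we obtain the desired inequality.

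The argument contains no real obstacle: the only point requiring care is that iterating the one-step inequality (\ref{eq:dynamic_programming_equation}) directly along a coordinate-$i$ line would produce a telescoping sum that is in principle more delicate to control, which is why the martingale / optional-stopping formulation (costing the entire first-passage in one shot via the quadratic-variation identity) is cleaner. Once this is in hand, the passage to the limit is routine given Lemma \ref{lem:uniform_convergence}.
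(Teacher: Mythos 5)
Your proposal is correct and follows essentially the same route as the paper: run the $\eps$-walk in direction $i$ until first hitting $x\pm\alpha e_{i}$ (incurring expected cost $\alpha^{2}$ by the quadratic-variation identity, with equal exit probabilities), then continue optimally, and pass to the limit $\eps\to0$. The only cosmetic difference is that you invoke the uniform convergence of Lemma \ref{lem:uniform_convergence}, whereas the paper only needs the monotone pointwise bound $u\leq u_{\eps}$ together with pointwise convergence.
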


\begin{proof}
Let $0<\eps\leq\alpha$ be dyadic, and consider the following strategy
for $u_{\eps}$ when starting at $x$: always choose to go in direction
$i$ until you reach either $x-\alpha e_{i}$ or $x+\alpha e_{i}$
(we assume $\alpha$ is small enough so that both points are in the
interior $\left(-1,1\right)^{n}$); afterwards, continue optimally.
There is equal probability of hitting either $x-\alpha e_{i}$ or
$x+\alpha e_{i}$, and the running cost for doing so is $\alpha^{2}$.
Thus
\begin{align*}
u\left(x\right) & \leq u_{\eps}\left(x\right)\\
 & \leq\frac{u_{\eps}\left(x+\alpha e_{i}\right)+u_{\eps}\left(x+\alpha e_{i}\right)}{2}+\alpha^{2}.
\end{align*}
Taking the limit $\eps\to0$ gives the result.
\end{proof}
\begin{rem}
\label{rem:intuition_behind_axis_aligned_laplacian}The intuition
behind the operator $\axis=\min_{k}\frac{\partial^{2}}{\partial x_{k}^{2}}$
is as follows. Consider the dynamic programming equation (\ref{eq:dynamic_programming_equation}):
\[
u_{\eps}\left(x\right)=\min_{i}\frac{u_{\eps}\left(x+\eps e_{i}\right)+u_{\eps}\left(x+\eps e_{i}\right)}{2}+\eps^{2}.
\]
Since there are only finitely many indices, there is an index $k$
which appears infinitely many times as $\eps\to0$. For this particular
$k$, taking the limit $\eps\to0$, we have 
\[
-2=\lim_{\eps\to0}\frac{u_{\eps}\left(x+\eps e_{k}\right)-2u_{\eps}\left(x\right)+u_{\eps}\left(x+\eps e_{k}\right)}{\eps^{2}}.
\]
If we could replace $u_{\eps}$ in the above equation by $u$, and
if we knew that $u$ was twice-differentiable, the right hand side
would equal the second derivative of $u$, and we would get $\frac{\partial^{2}u}{\partial x_{k}^{2}}=-2$.
Since the index $k$ was chosen as the minimum, we would hope to reach
the following partial differential equation for $u$: 
\[
\axis u:=\min_{k}\frac{\partial^{2}u}{\partial x_{k}^{2}}=-2.
\]
\end{rem}

\begin{thm}
\label{thm:obeying_axis_aligned_laplacian}Either $u\equiv0$, or
it satisfies $\axis u=-2$ for all $x\in\left(-1,1\right)^{n}$ in
the viscosity sense. 
\end{thm}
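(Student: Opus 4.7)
The plan is to split the statement into three easy pieces: first dispose of the constant case, then verify the subsolution and supersolution properties separately using the dynamic programming information we have accumulated for $u_\eps$ and $u$. Note that if $f$ is constant then $u_\eps \equiv 0$ for every $\eps$, hence $u \equiv 0$; conversely, since the harmonic extension of a non-constant $\pm 1$-valued Boolean function lies in the open interval $(-1,1)$ at every interior point of the cube, the stopping time $\tau$ is strictly positive there, so the dynamic programming recurrence (\ref{eq:dynamic_programming_equation}) genuinely holds at every $x \in [-1+\eps,1-\eps]^n$. So I will assume $f$ is non-constant and prove that $u$ is a viscosity solution of $\axis u = -2$ on $(-1,1)^n$.

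For the \textbf{subsolution} property, let $\varphi$ be smooth and suppose $\varphi - u$ attains a local (strict) minimum at $x_0 \in (-1,1)^n$. Then for every direction $i$ and every small dyadic $\alpha > 0$, the minimum property gives $u(x_0 \pm \alpha e_i) \le u(x_0) + \varphi(x_0 \pm \alpha e_i) - \varphi(x_0)$. Averaging and substituting into Fact~\ref{fact:dynamic_programming_inequality_for_u} yields
\[
0 \le \frac{\varphi(x_0 + \alpha e_i) - 2\varphi(x_0) + \varphi(x_0 - \alpha e_i)}{2} + \alpha^2.
\]
Dividing by $\alpha^2/2$ and sending $\alpha \to 0$ dyadically gives $\partial^2_i \varphi(x_0) \ge -2$; since $i$ was arbitrary, $\axis \varphi(x_0) \ge -2$.

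For the \textbf{supersolution} property I must use the full dynamic programming equality (\ref{eq:dynamic_programming_equation}) for $u_\eps$, since Fact~\ref{fact:dynamic_programming_inequality_for_u} only provides an upper bound on $u$. Suppose $\varphi - u$ has a strict local maximum at $x_0$. By the uniform convergence $u_\eps \to u$ from Lemma~\ref{lem:uniform_convergence}, for each small dyadic $\eps$ the function $\varphi - u_\eps$ attains a local maximum at some $x_\eps$ in a fixed small neighborhood of $x_0$, and $x_\eps \to x_0$. For $\eps$ small enough $x_\eps \in [-1+\eps,1-\eps]^n$, so (\ref{eq:dynamic_programming_equation}) applies at $x_\eps$ and is realized by some direction $i^* = i^*(\eps) \in [n]$. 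The local maximum property at $x_\eps$ gives $u_\eps(x_\eps \pm \eps e_{i^*}) \ge u_\eps(x_\eps) + \varphi(x_\eps \pm \eps e_{i^*}) - \varphi(x_\eps)$; averaging these and substituting into (\ref{eq:dynamic_programming_equation}) yields
\[
\frac{\varphi(x_\eps + \eps e_{i^*}) - 2\varphi(x_\eps) + \varphi(x_\eps - \eps e_{i^*})}{\eps^2} \le -2.
\]
Since $i^*$ takes finitely many values, pigeonhole lets me pass to a subsequence on which $i^* = k$ is constant. Sending $\eps \to 0$ on this subsequence, the left-hand side converges to $\partial^2_k \varphi(x_0)$, so $\partial^2_k \varphi(x_0) \le -2$ and therefore $\axis \varphi(x_0) \le -2$.

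The main obstacle I anticipate is the supersolution half, for two reasons: (i) it requires the equality in the dynamic programming recursion, and hence forces me to work at the $u_\eps$ level and transfer the result to $u$ via uniform convergence; (ii) I must extract a definite direction $k$ in the limit, which requires the standard pigeonhole-plus-subsequence argument and the verification that $x_\eps$ stays in the domain where (\ref{eq:dynamic_programming_equation}) is valid. The standard reduction (remarked after Definition~\ref{def:viscosity_solutions}) of passing from non-strict to strict extrema by adding $\eta\|x - x_0\|_2^2$ to $\varphi$ is what ensures that $x_\eps$ actually converges to $x_0$ rather than drifting off.
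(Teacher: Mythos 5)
Your proposal is correct and follows essentially the same route as the paper: the subsolution half via Fact~\ref{fact:dynamic_programming_inequality_for_u} plus a second-order Taylor expansion, and the supersolution half by working at the $u_\eps$ level with the dynamic programming equality, locating near-extremizers $x_\eps\to x_0$ via uniform convergence and strictness, pigeonholing a fixed direction $k$, and passing to the limit. The only differences are cosmetic: you dispose of the constant case explicitly and compress the standard argument that $x_\eps\to x_0$, which the paper writes out in full.
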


The proof uses standard techniques (see e.g \cite[Chapter 3]{blanc_rossi_game_theory_and_pde}),
and relies on the uniform convergence property proved above. 
\begin{proof}
We start by showing that $u$ is a viscosity subsolution (recall Definition
\ref{def:viscosity_solutions}). Let $x_{0}\in\left(-1,1\right)^{n}$,
and let $\varphi:\left[-1,1\right]^{n}\to\r$ be a smooth function
so that $\varphi-u$ has a minimum at $x_{0}$, with $\varphi\left(x_{0}\right)=u\left(x_{0}\right)$.
Let $k$ be the direction for which $\frac{\partial^{2}}{\partial x_{k}^{2}}\varphi\left(x_{0}\right)$
is minimal. By Fact \ref{fact:dynamic_programming_inequality_for_u},
for every dyadic $\alpha$ small enough we have
\begin{align*}
\varphi\left(x_{0}\right) & =u\left(x_{0}\right)\\
 & \overset{\text{Fact \ref{fact:dynamic_programming_inequality_for_u}}}{\leq}\frac{u\left(x_{0}+\alpha e_{k}\right)+u\left(x_{0}-\alpha e_{k}\right)}{2}+\alpha^{2}\\
 & \leq\frac{\varphi\left(x_{0}+\alpha e_{k}\right)+\varphi\left(x_{0}-\alpha e_{k}\right)}{2}+\alpha^{2}\\
 & =\varphi\left(x_{0}\right)+\alpha^{2}\frac{1}{2}\frac{\partial^{2}}{\partial x_{k}^{2}}\varphi\left(x_{0}\right)+o\left(\alpha^{2}\right)+\alpha^{2}.
\end{align*}
Rearranging, dividing by $\alpha^{2}$, and taking the limit $\alpha\to0$
gives
\[
\frac{\partial^{2}}{\partial x_{k}^{2}}\varphi\left(x_{0}\right)\geq-2.
\]
Since $k$ was chosen to be the direction for which $\frac{\partial^{2}}{\partial x_{k}^{2}}\varphi\left(x_{0}\right)$
is minimal, we get that 
\[
\axis\varphi\left(x_{0}\right)\geq-2,
\]
which means that $u$ is a subsolution to the Dirichlet problem. Note
that we did not uniform convergence of $u_{\eps}$ to $u$ to prove
this; we only used the monotone pointwise convergence of $u_{\eps}$
to $u$, which is used in Fact \ref{fact:dynamic_programming_inequality_for_u}.
The dynamic programming equation for $u_{\eps}$ makes it relatively
easy to bound from above. 

For showing that $u$ is a supersolution, let $\varphi:\left[-1,1\right]^{n}\to\r$
be a smooth function so that $u-\varphi$ has a strict minimum at
$x_{0}$. Denote $\psi_{\eps}=u_{\eps}-\varphi$, and $\psi=\lim_{\eps\to0}\psi_{\eps}=u-\varphi$.
By Lemma \ref{lem:uniform_convergence}, we also have uniform convergence
of $\psi_{\eps}$ to $\psi$:
\[
\norm{\psi_{\eps}-\psi}_{\infty}=\norm{u_{\eps}-\varphi-\left(u-\varphi\right)}_{\infty}=\norm{u_{\eps}-u}_{\infty}\to0.
\]
Let $x_{\eps}=\mathrm{argmin}_{x\in\left[-1,1\right]^{n}}\left\{ \psi_{\eps}\right\} $;
the minimum exists since $\psi_{\eps}$ is continuous for every $\eps$.
By definition, for all $x\in\left[-1,1\right]^{n}$, 
\begin{equation}
u_{\eps}\left(x\right)-\varphi\left(x\right)\geq u_{\eps}\left(x_{\eps}\right)-\varphi\left(x_{\eps}\right).\label{eq:the_mythical_x_eps}
\end{equation}
We'll now show that $x_{\eps}\to x_{0}$, where $x_{0}$ is the minimizer
of $\psi\left(x\right)=u\left(x\right)-\varphi\left(x\right)$. Suppose
not. Since $\left[-1,1\right]^{n}$ is compact, there exists a subsequence,
which we still call $x_{\eps}$, which converges to some $z\neq x_{0}$.
Now, since $x_{0}$ is a strict minimizer of $\psi$, denote 
\begin{equation}
0<\Delta:=\psi\left(z\right)-\psi\left(x_{0}\right).\label{eq:definition_of_psi_delta}
\end{equation}
Since $\psi_{\eps}\to\psi$ uniformly, for small enough $\eps$ we
have: 
\begin{equation}
\psi_{\eps}\left(x_{0}\right)<\psi\left(x_{0}\right)+\frac{\Delta}{3}\label{eq:psi_is_smaller_than_something}
\end{equation}
and 
\begin{equation}
\psi_{\eps}\left(x_{\eps}\right)>\psi\left(x_{\eps}\right)-\frac{\Delta}{3}.\label{eq:psi_is_larger_than_something}
\end{equation}
Since $\psi$ is continuous and $x_{\eps}\to z$, for small enough
$\eps$ we have 
\begin{equation}
\psi\left(x_{\eps}\right)>\psi\left(z\right)-\frac{\Delta}{3}.\label{eq:psi_and_z}
\end{equation}
Now, on one hand,
\begin{equation}
\psi_{\eps}\left(x_{\eps}\right)\overset{\eqref{eq:psi_is_larger_than_something}}{>}\psi\left(x_{\eps}\right)-\frac{\Delta}{3}\overset{\eqref{eq:psi_and_z}}{>}\psi\left(z\right)-\frac{2\Delta}{3}.\label{eq:psi_almost_there}
\end{equation}
On the other hand, 
\[
\psi_{\eps}\left(x_{0}\right)\overset{\eqref{eq:psi_is_smaller_than_something}}{<}\psi\left(x_{0}\right)+\frac{\Delta}{3}\overset{\eqref{eq:definition_of_psi_delta}}{=}\psi\left(z\right)-\Delta+\frac{\Delta}{3}=\psi\left(z\right)-\frac{2\Delta}{3}\overset{\eqref{eq:psi_almost_there}}{<}\psi_{\eps}\left(x_{\eps}\right).
\]
Thus $\psi_{\eps}\left(x_{0}\right)<\psi_{\eps}\left(x_{\eps}\right)$,
a contradiction to the minimality of $x_{\eps}$.

For small enough $\eps$, since $x_{0}\in\left(-1,1\right)^{n}$ and
$x_{\eps}\to x_{0}$, we have that $x_{\eps}\pm\eps e_{i}\in\left(-1,1\right)^{n}$
for all directions $i\in\left[n\right]$. By the dynamic programming
equation (\ref{eq:dynamic_programming_equation}), for every such
$\eps$, the value of function $u_{\eps}$ evaluated at $x_{\eps}$
is given by 
\[
-\eps^{2}=\min_{i}\frac{u_{\eps}\left(x_{\eps}+\eps e_{i}\right)+u_{\eps}\left(x_{\eps}-\eps e_{i}\right)}{2}-u_{\eps}\left(x_{\eps}\right).
\]
Since there are only finitely many directions $i\in\left[n\right]$,
there is some direction $k\in\left[n\right]$ which appears infinitely
many times as the minimizer in the above equation. Restricting ourselves
just to those $\eps$'s, we have 
\begin{align}
-\eps^{2} & =\frac{u_{\eps}\left(x_{\eps}+\eps e_{k}\right)+u_{\eps}\left(x_{\eps}-\eps e_{k}\right)}{2}-u_{\eps}\left(x_{\eps}\right)\nonumber \\
 & =\frac{1}{2}\left(u_{\eps}\left(x_{\eps}+\eps e_{k}\right)-u_{\eps}\left(x_{\eps}\right)\right)+\frac{1}{2}\left(u_{\eps}\left(x_{\eps}-\eps e_{k}\right)-u_{\eps}\left(x_{\eps}\right)\right).\label{eq:dynamic_programming_again}
\end{align}
By applying (\ref{eq:the_mythical_x_eps}), the first expression in
the parenthesis on the right hand side can be bounded by:
\begin{align*}
u_{\eps}\left(x_{\eps}+\eps e_{k}\right)-u_{\eps}\left(x_{\eps}\right) & =u_{\eps}\left(x_{\eps}+\eps e_{k}\right)-\varphi\left(x_{\eps}+\eps e_{k}\right)-u_{\eps}\left(x_{\eps}\right)+\varphi\left(x_{\eps}+\eps e_{k}\right)\\
 & \geq u_{\eps}\left(x_{\eps}\right)-\varphi\left(x_{\eps}\right)-u_{\eps}\left(x_{\eps}\right)+\varphi\left(x_{\eps}+\eps e_{k}\right)\\
 & =\varphi\left(x_{\eps}+\eps e_{k}\right)-\varphi\left(x_{\eps}\right).
\end{align*}
Similarly, 
\[
u_{\eps}\left(x_{\eps}-\eps e_{k}\right)-u_{\eps}\left(x_{\eps}\right)\geq\varphi\left(x_{\eps}-\eps e_{k}\right)-\varphi\left(x_{\eps}\right).
\]
Plugging this back into (\ref{eq:dynamic_programming_again}), we
have
\begin{align*}
-\eps^{2} & \geq\frac{\varphi\left(x_{\eps}+\eps e_{k}\right)+\varphi\left(x_{\eps}-\eps e_{k}\right)}{2}-\varphi\left(x_{\eps}\right)\\
 & =\frac{1}{2}\eps^{2}\frac{\partial^{2}}{\partial x_{k}^{2}}\varphi\left(x_{\eps}\right)+o\left(\eps^{2}\right).
\end{align*}
Dividing by $\eps^{2}$, taking the limit $\eps\to0$ gives us 
\[
\frac{\partial^{2}}{\partial x_{k}^{2}}\varphi\left(x_{0}\right)\leq-2.
\]
Since this is true for some $k$, it is true in particular for the
smallest second derivative of $\varphi$. Thus 
\[
\axis\varphi\left(x_{0}\right)\leq-2,
\]
which means that $u$ is a supersolution to the Dirichlet boundary
problem.
\end{proof}
The axis-aligned Laplacian is a non-linear operator. It is, however,
monotone in the Hessian $\grad^{2}u$.
\begin{fact}
\label{fact:axis_aligned_laplacian_is_monotone}Let $u,v$ be twice-differentiable
functions. If $\grad^{2}u\geq\grad^{2}v$ (i.e. the matrix $\grad^{2}u-\grad^{2}v$
is positive semidefinite), then $\axis u\geq\axis v$.
\end{fact}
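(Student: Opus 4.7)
The plan is to reduce the matrix inequality $\grad^2 u \succeq \grad^2 v$ to a pointwise comparison of diagonal entries, and then exploit the variational form of $\axis$. Specifically, since $\grad^2 u - \grad^2 v$ is positive semidefinite, for every standard basis vector $e_k$ we have
\[
e_k^{\top}\left(\grad^2 u - \grad^2 v\right) e_k \geq 0,
\]
which is precisely $\frac{\partial^2 u}{\partial x_k^2} \geq \frac{\partial^2 v}{\partial x_k^2}$ for each $k \in [n]$. This is the only analytic content needed; the remainder is a short manipulation of minima.

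Having established the coordinate-wise inequality, I would pick an index $k^{\ast}$ achieving $\axis u = \frac{\partial^2 u}{\partial x_{k^{\ast}}^2}$. Then
\[
\axis u = \frac{\partial^2 u}{\partial x_{k^{\ast}}^2} \geq \frac{\partial^2 v}{\partial x_{k^{\ast}}^2} \geq \min_{k} \frac{\partial^2 v}{\partial x_k^2} = \axis v,
\]
which is the desired conclusion. There is no real obstacle here: the only structural observation is that although $\axis$ is nonlinear, it is defined as a minimum over a fixed finite family of linear functionals of the Hessian (the diagonal second derivatives), and each of those linear functionals is monotone under PSD ordering. Thus monotonicity of the minimum follows from monotonicity of each summand evaluated at the minimizing index.
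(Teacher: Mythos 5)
Your proof is correct and follows exactly the same route as the paper: apply the PSD hypothesis to each standard basis vector to get the coordinate-wise inequality on second derivatives, then chain through the index minimizing $\frac{\partial^{2}u}{\partial x_{k}^{2}}$. Nothing is missing.
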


\begin{proof}
If $\grad^{2}\left(u-v\right)$ is positive semidefinite, then for
all $i\in\left[n\right]$, 
\[
0\leq\left\langle e_{i},\grad^{2}\left(u-v\right)e_{i}\right\rangle =\frac{\partial^{2}}{\partial x_{i}^{2}}u-\frac{\partial^{2}}{\partial x_{i}^{2}}v.
\]
In particular, for $i^{*}=\mathrm{argmin}_{i}\frac{\partial^{2}}{\partial x_{i}^{2}}u$
we get $\axis u\geq\frac{\partial^{2}}{\partial x_{i^{*}}^{2}}v\geq\axis v$.
\end{proof}
This gives hope that solutions to the Dirichlet boundary-value problem
$\axis u\left(x\right)=f\left(x\right)$ are unique. This is indeed
true, if the function $f$ does not change sign, and follows from
a general comparison principle. 
\begin{thm}
\label{thm:comparison_principle}Let $\Omega\in\r^{n}$ be a bounded
domain. Let $f:\Omega\to\r$ be continuous with $\sup_{\Omega}f<0$.
Suppose that $u_{1},u_{2}$ are continuous functions such that $\axis u_{1}\leq f\leq\axis u_{2}$
in $\Omega$ and $u_{1}\geq u_{2}$ on $\partial\Omega$. Then $u_{1}\geq u_{2}$
in $\Omega$. 
\end{thm}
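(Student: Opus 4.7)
The plan is to argue by contradiction using the standard doubling-variables technique from viscosity solution theory, combined with a quadratic perturbation that exploits the strict inequality $\sup_{\Omega} f<0$. Assume $M:=\sup_{\bar\Omega}(u_2-u_1)>0$; I will derive a contradiction. Set $R:=\max_{x\in\bar\Omega}\norm x_2$, pick $\delta>0$ small enough that $\delta R^2/2<M$, and define $v(x):=u_2(x)+\tfrac{\delta}{2}\norm x_2^2$. Since $\tfrac{\delta}{2}\norm x_2^2$ has scalar Hessian $\delta I$, a direct check from Definition \ref{def:viscosity_solutions} shows that $v$ is a viscosity subsolution of $\axis v=f+\delta$. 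On $\partial\Omega$ we have $v-u_1\leq\delta R^2/2<M\leq\sup_{\bar\Omega}(v-u_1)$, so the supremum of $v-u_1$ is attained strictly inside $\Omega$.

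I would then double the variables. For $\alpha>0$, consider
\[
\Phi_\alpha(x,y):=v(x)-u_1(y)-\frac{\norm{x-y}_2^2}{2\alpha}
\]
on the compact set $\bar\Omega\times\bar\Omega$, where it attains a maximum at some $(\hat x_\alpha,\hat y_\alpha)$. Standard penalization estimates (see e.g.\ \cite{crandall_ishii_lions_users_guide_to_viscosity}) give $\norm{\hat x_\alpha-\hat y_\alpha}_2\to 0$ as $\alpha\to 0$, and any limit point of $\{\hat x_\alpha\}$ maximizes $v-u_1$ and so lies strictly inside $\Omega$. Hence for small enough $\alpha$ the pair $(\hat x_\alpha,\hat y_\alpha)$ sits inside $\Omega\times\Omega$. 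Next I would invoke Ishii's lemma (Theorem 3.2 of \cite{crandall_ishii_lions_users_guide_to_viscosity}) to obtain symmetric matrices $X_\alpha,Y_\alpha$ with $X_\alpha\leq Y_\alpha$, such that $X_\alpha$ lies in the closure of the second-order superjet of $v$ at $\hat x_\alpha$ and $Y_\alpha$ lies in the closure of the second-order subjet of $u_1$ at $\hat y_\alpha$.

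The viscosity inequalities then close out the argument. The subsolution property of $v$ gives $\min_k (X_\alpha)_{kk}\geq f(\hat x_\alpha)+\delta$; the supersolution property of $u_1$ gives $\min_k (Y_\alpha)_{kk}\leq f(\hat y_\alpha)$. The matrix inequality $X_\alpha\leq Y_\alpha$ implies $(X_\alpha)_{kk}\leq (Y_\alpha)_{kk}$ for every $k$ (test against $e_k$), which by Fact \ref{fact:axis_aligned_laplacian_is_monotone} yields $\min_k(X_\alpha)_{kk}\leq\min_k(Y_\alpha)_{kk}$. Chaining the inequalities, $f(\hat y_\alpha)-f(\hat x_\alpha)\geq\delta$ for all small $\alpha$. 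But $\hat x_\alpha-\hat y_\alpha\to 0$ and $f$ is uniformly continuous on $\bar\Omega$, so the left side tends to $0$, a contradiction.

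The main obstacle is the correct invocation of Ishii's lemma. Testing the penalty $\norm{x-y}_2^2/(2\alpha)$ separately against $v$ and $u_1$ only produces the toothless bounds $1/\alpha\geq f(\hat x_\alpha)+\delta$ and $-1/\alpha\leq f(\hat y_\alpha)$, which cannot be combined since $\axis$ is nonlinear. Ishii's lemma is precisely what produces \emph{coupled} matrices $X_\alpha\leq Y_\alpha$ in the two jets, and the degenerate-elliptic monotonicity of $\axis$ (Fact \ref{fact:axis_aligned_laplacian_is_monotone}) is what converts this matrix ordering into the scalar inequality $\min_k(X_\alpha)_{kk}\leq\min_k(Y_\alpha)_{kk}$ that clinches the contradiction.
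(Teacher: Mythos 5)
Your proof is correct, and it takes a genuinely different route to the strict perturbation than the paper does. The paper factors the argument into two pieces: a standalone strict comparison lemma (Lemma \ref{lem:strict_comparison_principle}, for $\axis u_{1}\leq f_{1}<f_{2}\leq\axis u_{2}$), proved by exactly your doubling-of-variables/Ishii/monotonicity argument, followed by a \emph{multiplicative} perturbation $u_{\delta}=\left(1-\delta\right)u_{2}-\delta\max_{\partial\Omega}\abs{u_{2}}$, which uses the positive homogeneity $\axis\left(\alpha u\right)=\alpha\axis u$ together with $\sup_{\Omega}f<0$ to manufacture the strict gap $\axis u_{\delta}>f$. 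You instead manufacture the gap by the \emph{additive} isotropic quadratic $v=u_{2}+\frac{\delta}{2}\norm x_{2}^{2}$, using that $\axis$ shifts by exactly $\delta$ under such a perturbation, and fold everything into a single doubling argument. A notable consequence: your argument never actually uses the hypothesis $\sup_{\Omega}f<0$ (despite your framing that the quadratic "exploits" it) --- the gap $\delta$ comes entirely from the perturbation --- so you have in fact proved the comparison principle for arbitrary continuous $f$, a strictly more general statement. The sign condition in the paper is an artifact of following Lu--Wang's scheme for the infinity Laplacian, where the quadratic trick is unavailable because $\Delta_{\infty}$ does not shift additively under isotropic quadratics; for $\axis$ your route is cleaner. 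Two small repairs: first, $f$ is only assumed continuous on $\Omega$, not on $\bar{\Omega}$, so the final limit should be taken along a subsequence of $\left(\hat{x}_{\alpha},\hat{y}_{\alpha}\right)$ converging to a common \emph{interior} maximizer of $v-u_{1}$ (which you have already shown exists) and then use continuity of $f$ at that point, rather than uniform continuity on $\bar{\Omega}$; second, Fact \ref{fact:axis_aligned_laplacian_is_monotone} is stated for Hessians of twice-differentiable functions, so you should either restate it for symmetric matrices or just note directly that $X\leq Y$ gives $X_{kk}\leq Y_{kk}$ for all $k$ and hence $\min_{k}X_{kk}\leq\min_{k}Y_{kk}$, which is what the paper's equation (\ref{eq:strict_comparison_different_points}) does in effect.
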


Theorems \ref{thm:obeying_axis_aligned_laplacian} and \ref{thm:comparison_principle}
immediately yield Theorem \ref{thm:intro_axis_aligned_laplacian}
as a corollary. 

The proof of Theorem \ref{thm:comparison_principle} follows the scheme
of Lu and Wang \cite{lu_wang_inhomogeneous_laplace_equation}. We
start by showing a strict comparison principle.
\begin{lem}
\label{lem:strict_comparison_principle}Let $f_{1},f_{2}:\Omega\to\r$
be continuous functions such that $f_{1}<f_{2}$. Let $u_{1},u_{2}$
be continuous functions such that $\axis u_{1}\leq f_{1}<f_{2}\leq\axis u_{2}$
in $\Omega$ and $u_{1}\geq u_{2}$ on $\partial\Omega$. Then $u_{1}\geq u_{2}$
in $\Omega$. 
\end{lem}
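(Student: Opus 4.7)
The plan is to argue by contradiction using the standard doubling-of-variables technique from the viscosity-solutions literature. Suppose, toward a contradiction, that $u_2(z)>u_1(z)$ for some $z\in\Omega$. Since $u_1,u_2$ are continuous on the compact set $\overline{\Omega}$ and $u_1\geq u_2$ on $\partial\Omega$, the function $u_2-u_1$ attains a strictly positive maximum $M:=\max_{\overline{\Omega}}(u_2-u_1)>0$, and every maximizer lies in the interior. Since $u_1,u_2$ are not assumed smooth, we cannot apply the viscosity inequalities directly at a maximizer; instead, we decouple the two arguments by penalization.

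For each $\alpha>0$ consider
\[
\Phi_\alpha(x,y) \;=\; u_2(x) - u_1(y) - \frac{\alpha}{2}\,|x-y|^{2}
\]
on $\overline{\Omega}\times\overline{\Omega}$, and let $(x_\alpha,y_\alpha)$ be a maximizer. A standard argument (Proposition~3.7 of \cite{crandall_ishii_lions_users_guide_to_viscosity}) shows that, along a subsequence, $x_\alpha,y_\alpha\to x^{*}$ for some maximizer $x^{*}$ of $u_2-u_1$, that $\alpha|x_\alpha-y_\alpha|^{2}\to 0$, and that $\Phi_\alpha(x_\alpha,y_\alpha)\to M$. In particular, for all sufficiently large $\alpha$, both $x_\alpha$ and $y_\alpha$ lie in $\Omega$, so the viscosity conditions on $u_1$ and $u_2$ are in force there.

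Now apply Ishii's lemma (Theorem~3.2 of \cite{crandall_ishii_lions_users_guide_to_viscosity}): there exist symmetric matrices $X_\alpha,Y_\alpha\in\r^{n\times n}$ with $(\alpha(x_\alpha-y_\alpha),X_\alpha)\in\overline{J}^{\,2,+}u_2(x_\alpha)$ and $(\alpha(x_\alpha-y_\alpha),Y_\alpha)\in\overline{J}^{\,2,-}u_1(y_\alpha)$, satisfying
\[
\begin{pmatrix} X_\alpha & 0 \\ 0 & -Y_\alpha \end{pmatrix} \;\leq\; 3\alpha \begin{pmatrix} I & -I \\ -I & I \end{pmatrix}.
\]
Testing the right-hand side against vectors of the form $(\xi,\xi)$ yields $X_\alpha\leq Y_\alpha$ as symmetric matrices. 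Reading the viscosity inequalities $\axis u_2\geq f_2$ and $\axis u_1\leq f_1$ through the jet characterization (a smooth test function touching from above at $x_\alpha$ has Hessian $X_\alpha$, etc.) gives
\[
\min_k (X_\alpha)_{kk}\geq f_2(x_\alpha), \qquad \min_k (Y_\alpha)_{kk}\leq f_1(y_\alpha).
\]

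To finish, from $X_\alpha\leq Y_\alpha$ we get $(X_\alpha)_{kk}\leq (Y_\alpha)_{kk}$ for every $k$ -- the same diagonal monotonicity exploited in Fact~\ref{fact:axis_aligned_laplacian_is_monotone}. If $k^{*}$ attains the minimum of $(Y_\alpha)_{kk}$ then $\min_k (X_\alpha)_{kk}\leq (X_\alpha)_{k^{*}k^{*}}\leq (Y_\alpha)_{k^{*}k^{*}}=\min_k (Y_\alpha)_{kk}$, so $f_2(x_\alpha)\leq f_1(y_\alpha)$. Letting $\alpha\to\infty$ and using continuity of $f_1,f_2$ with $x_\alpha,y_\alpha\to x^{*}$ gives $f_2(x^{*})\leq f_1(x^{*})$, contradicting $f_1<f_2$. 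The only genuinely technical ingredient is the invocation of Ishii's lemma; everything else is immediate from the fact that $\axis$ depends only on the diagonal of the Hessian and is monotone in it, which is precisely what makes the strict version $f_1<f_2$ close the argument without any further delicacy.
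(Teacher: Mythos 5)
Your proof is correct and follows essentially the same route as the paper's: doubling of variables with quadratic penalization, the Crandall--Ishii--Lions convergence lemma for the maximizers, Ishii's lemma to produce ordered matrices in the closed semijets, and the diagonal monotonicity of $\axis$ to derive $f_2(x^*)\leq f_1(x^*)$, contradicting $f_1<f_2$. The only differences are notational (penalization parameter $\alpha\to\infty$ versus $\eps\to0$, and jets versus sequences of test functions).
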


\begin{proof}
Suppose for the sake of contradiction that there exists $x^{*}\in\Omega$
such that $u_{1}\left(x^{*}\right)<u_{2}\left(x^{*}\right)$. Let
$\eps>0$, define $M_{\eps}=\max_{x,y\in\Omega^{2}}u_{2}\left(x\right)-u_{1}\left(y\right)-\frac{1}{2\eps}\norm{x-y}_{2}^{2}$,
and let $x_{\eps}$ and $y_{\eps}$ be the maximizers. By Lemma 3.1
of \cite{crandall_ishii_lions_users_guide_to_viscosity}, $\lim_{\eps\to0}M_{\eps}=\max_{\bar{\Omega}}\left\{ u_{2}-u_{1}\right\} >0$,
and $\lim_{\eps\to0}\norm{x_{\eps}-y_{\eps}}_{2}=0$. By Lemma 3.2
of \cite{crandall_ishii_lions_users_guide_to_viscosity}, there exist
symmetric $n\times n$ matrices $X$ and $Y$ with $X\leq Y$ such
that:
\begin{enumerate}
\item There exists a sequence of smooth functions $\varphi_{k}$ and points
$x_{k}$ such that $\varphi_{k}-u_{2}$ has a local minimum at $x_{k}$,
$x_{k}\to x_{\eps}$, and $\grad^{2}\varphi_{k}\left(x_{k}\right)\to X$.
\item There exists a sequence of smooth functions $\psi_{k}$ and points
$y_{k}$ such that $\psi_{k}-u_{1}$ has a local maximum at $y_{k}$,
$y_{k}\to y_{\eps}$, and $\grad^{2}\psi_{k}\left(y_{k}\right)\to Y$. 
\end{enumerate}
By Definition \ref{def:viscosity_solutions} of viscosity solutions,
$f_{2}\left(x_{k}\right)\leq\axis\varphi_{k}\left(x_{k}\right)$ and
$f_{1}\left(y_{k}\right)\geq\axis\psi_{k}\left(y_{k}\right)$ for
every $k$. Taking the limit $k\to\infty$, since $\min_{i}X_{ii}$,
$f_{1}$ and $f_{2}$ are all continuous functions, we have 
\begin{equation}
f_{2}\left(x_{\eps}\right)\leq\min_{i}X_{ii}\overset{\text{Fact \ref{fact:axis_aligned_laplacian_is_monotone}}}{\leq}\min_{i}Y_{ii}\leq f_{1}\left(y_{\eps}\right).\label{eq:strict_comparison_different_points}
\end{equation}
Since $\Omega$ is bounded and the maximum of $u_{2}-u_{1}$ is obtained
in the interior of $\Omega$, there are subsequences of $x_{\eps}$
and $y_{\eps}$, which we also denote $x_{\eps}$ and $y_{\eps}$,
which converge to some $x_{0}\in\Omega$. Sending $\eps\to0$, continuity
of $f_{1},f_{2}$ together with equation (\ref{eq:strict_comparison_different_points})
then give
\[
f_{2}\left(x_{0}\right)\leq f_{1}\left(x_{0}\right),
\]
contradicting the assumption of the lemma. 
\end{proof}
\begin{proof}[Proof of Theorem \ref{thm:comparison_principle}]
For any $0<\delta<1$, define $u_{\delta}=\left(1-\delta\right)u_{2}-\delta\max_{\partial\Omega}\abs{u_{2}}$.
Since $\axis\alpha f=\alpha\axis f$ for all $\alpha>0$, we have
$\axis u_{\delta}=\left(1-\delta\right)\axis u_{2}\geq\left(1-\delta\right)f>f\geq\axis u_{1}$.
Also, $u_{\delta}\leq u_{2}\leq u_{1}$ on $\partial\Omega$. By Lemma
\ref{lem:strict_comparison_principle}, $u_{1}\geq u_{\delta}$ in
$\Omega$. Sending $\delta\to0$ gives the desired result. 
\end{proof}

\section{An example: The OR function\label{sec:or_example}}

\subsection{$n$ bits}

Let $f\left(x\right)$ be the $n$-bit OR function, which returns
$1$ if and only if one of the bits $x_{i}$ is equal to $1$. This
is a symmetric function - it depends only on the number of bits in
the input. There is thus essentially only one decision tree algorithm:
read bits at random until the value of the function is computed. Every
bit has value $1$ with probability $1/2$, so apart from the $n$-th
read bit, every bit has a probability of $1/2$ of ending the computation.
Thus, as $n\to\infty$, the number of bits queried tends towards a
geometric random variable with parameter $2$, and the expected number
of bits queried is $2$. 

For fractional query algorithms, there are many more algorithms to
choose from. Since the OR function needs only a single bit to be set
to $1$, the natural algorithm is to always update the largest bit.
Updating smaller bits is intuitively wasteful, because either they
reach the value $-1$ (in which case the other bits need to be evaluated
anyway), or they take longer to reach the value $1$ than the largest
bit. This intuition holds true in the axis-aligned algorithm setting.
\begin{thm}
Let $\eps=2^{-k}$, and let $S_{\mathrm{max}}:\left[-1,1\right]^{n}\to\left[n\right]$
be given by $S_{\mathrm{max}}\left(x\right)=\mathrm{argmax}_{i}\left(x\right)$.
Then $S_{\mathrm{max}}$ is optimal, i.e. for all other decision strategies
$T$, we have $C\left(S_{\mathrm{max}}\right)\leq C\left(T\right)$. 
\end{thm}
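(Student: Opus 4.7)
Write $u_\eps^\star(x)$ for the expected cost of the strategy $S_\mathrm{max}$ when started from $x \in [-1,1]^n\cap\eps\z^n$. Since $u_\eps$ is the infimum of $C(x, T)$ over all strategies $T$, we trivially have $u_\eps \leq u_\eps^\star$, so the content of the theorem is the reverse inequality. The plan is to show that $u_\eps^\star$ satisfies the dynamic programming equation \eqref{eq:dynamic_programming_equation}; since that equation has a unique solution on the finite $\eps\z^n$-lattice (obtained by inducting inward from the termination set $\{x : \exists i,\, x_i = 1\}\cup\{(-1,\dots,-1)\}$, where $u_\eps\equiv 0$), this will force $u_\eps^\star = u_\eps$. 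By the very definition of $S_\mathrm{max}$, we already have
\[
u_\eps^\star(x) \;=\; \eps^2 + \tfrac{1}{2}\bigl(u_\eps^\star(x+\eps e_{i^\star}) + u_\eps^\star(x-\eps e_{i^\star})\bigr),\qquad i^\star(x) := \mathrm{argmax}_i\, x_i,
\]
so it suffices to prove the following exchange inequality at every non-terminal $x$:
\[
u_\eps^\star(x+\eps e_{i^\star}) + u_\eps^\star(x-\eps e_{i^\star}) \;\leq\; u_\eps^\star(x+\eps e_j) + u_\eps^\star(x-\eps e_j), \qquad j \in [n]. \qquad (\star)
\]

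Two easy structural facts should simplify $(\star)$. First, since OR is invariant under permutations of the coordinates and $S_\mathrm{max}$ is permutation-equivariant, $u_\eps^\star$ is a symmetric function of its coordinates; applying the transposition $\sigma_{i^\star j}$, the right-hand side of $(\star)$ equals $u_\eps^\star(\sigma_{i^\star j}(x)+\eps e_{i^\star})+u_\eps^\star(\sigma_{i^\star j}(x)-\eps e_{i^\star})$, so $(\star)$ becomes a comparison of the discrete second difference of $u_\eps^\star$ in direction $e_{i^\star}$ at two base points that differ only by swapping the values of coordinates $i^\star$ and $j$. Second, $u_\eps^\star$ should be monotone nonincreasing in each coordinate, since raising a bit makes OR more likely to take value $+1$ and terminate the process sooner.

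To prove $(\star)$, I would use a coupling / policy-improvement argument. Consider two processes started at $x$: process $A$ moves first in direction $i^\star$ and subsequently follows $S_\mathrm{max}$, while process $B$ moves first in direction $j$ and subsequently follows $S_\mathrm{max}$. By construction, $u_\eps^\star(x+\eps e_{i^\star}) + u_\eps^\star(x-\eps e_{i^\star}) = 2(C(x,A)-\eps^2)$, and analogously for $B$, so $(\star)$ is equivalent to $C(x,A)\leq C(x,B)$. The plan is to build a coupling under which $A(t)$ is obtained from $B(t)$ by transposing coordinates $i^\star$ and $j$, up to some random time at which the two processes ``merge''. Combined with the fact that $x_{i^\star}\geq x_j$ makes $A$ more advanced towards the favorable termination set $\{\exists i: X_i = 1\}$, this coupling would yield $\e[\tau_A]\leq\e[\tau_B]$ and hence $C(x,A)\leq C(x,B)$.

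The main obstacle is executing the coupling robustly. Because $S_\mathrm{max}$ is adaptive, the coordinate it picks at each step depends on the current state, and the permutation relating $A$ and $B$ need not be preserved by a single step (for instance when $A$'s new argmax lies outside the swapped pair but $B$'s does not). The proposed remedy is an inductive argument on the potential $\Phi(x) := u_\eps^\star(x)/\eps^2$, the expected number of remaining steps, which takes only finitely many values on the finite lattice and strictly decreases in expectation with each move. The base case $\Phi=0$ is the terminal set, on which $(\star)$ is trivial, and the inductive step uses the recursion for $u_\eps^\star$ together with the symmetry reduction above to re-couple the two processes whenever the naive permutation coupling breaks down. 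Once $(\star)$ is established, plugging it into the DP recursion for $u_\eps^\star$ immediately yields $u_\eps^\star = u_\eps$, and hence $S_\mathrm{max}$ is optimal.
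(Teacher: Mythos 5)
Your overall framework---show that the value $u_\eps^\star$ of $S_{\mathrm{max}}$ satisfies the dynamic programming equation and conclude $u_\eps^\star=u_\eps$---is a legitimate verification strategy, but all of the difficulty is concentrated in the exchange inequality $(\star)$, and that is exactly where the proposal has a genuine gap. The coupling you describe, $A(t)=\sigma_{i^\star j}(B(t))$, fails already at $t=1$: $A(1)=x\pm\eps e_{i^\star}$ while $\sigma_{i^\star j}(B(1))=\sigma_{i^\star j}(x)\pm\eps e_{i^\star}$, and these agree only when $x_{i^\star}=x_j$ (in which case $(\star)$ is trivial by symmetry anyway). You acknowledge that the adaptivity of $S_{\mathrm{max}}$ breaks the coupling, but the proposed repair---induction on $\Phi(x)=u_\eps^\star(x)/\eps^2$---is not well-founded: the recursion at $x$ refers to the neighbours $x\pm\eps e_i$, and $u_\eps^\star(x-\eps e_i)$ is typically \emph{larger} than $u_\eps^\star(x)$ (pushing a bit toward $-1$ increases the expected remaining work for OR), so you cannot assume $(\star)$ at the neighbours when proving it at $x$; nor is there a finite bound on the number of remaining steps to induct on, since the walk terminates only almost surely. ``Re-couple whenever the naive permutation coupling breaks down'' is precisely the content that needs to be supplied. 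As written, $(\star)$ is asserted, not proved. (A smaller point: uniqueness of the solution to the lattice Bellman equation is not obtained by ``inducting inward from the termination set''---the recursion is not triangular---but by a standard verification/optional-stopping argument using the strictly positive per-step cost.)

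For contrast, the paper avoids the Bellman verification entirely and argues by an exchange argument on strategies. Given any $T$, at the first time $t_1$ it fails to update a maximal bit (of value $\alpha$), one builds $Q(T)$ that updates that bit once and thereafter mimics $T$ until the time $t_2$ when $T$ itself touches a bit of value $\alpha$; in between, the two states differ only in one coordinate being one update ahead, and the specific structure of OR (to output $-1$ every bit, including the untouched maximal one, must be fully revealed; to output $+1$ some bit must reach $1$, which during this window cannot happen for $T$ without happening for $Q$) shows $T$ cannot terminate before $Q(T)$; after $t_2$ the two states are permutations of each other. Hence $C(Q(T))\leq C(T)$, and iterating $Q$ on an optimal strategy and passing to the limit gives $C(S_{\mathrm{max}})\leq C(S^*)$. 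If you want to rescue your route, the cleanest way to prove $(\star)$ is to run exactly this kind of exchange/coupling on the two one-step-deviation policies $A$ and $B$---at which point you have essentially reproduced the paper's argument and the Bellman machinery becomes unnecessary.
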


\begin{proof}[Proof sketch]
Given a strategy $T$ with process $X\left(t\right)$, define a new
strategy $Q=Q\left(T\right)$ with process $Y\left(t\right)$, which
runs according to $T$ up to the first time $t_{1}$ that $T$ doesn't
update the largest bit. Denote a largest bit by $i_{0}$ and let $\alpha=X_{i_{0}}\left(t_{1}\right)$.
$Q$ does update this bit, and then makes the same choices that $T$
would make assuming the largest bit wasn't updated. It continues so
until the time $t_{2}$ where $T$ finally updates the original bit,
or another bit of value $\alpha$. From this point on, it again runs
according to $T$. 

The claim is that $Q$ always does better than $S$ in terms of expected
runtime: until time $t_{1}$, both strategies give identical processes.
During the interval $t_{1}<t<t_{2}$, $T$ cannot finish before $Q$
does: in order to get $f\left(x\right)=-1$, all bits need to be updated,
and in particular bit $i_{0}$; in order to get $f\left(x\right)=1$,
a bit with value $\alpha$ needs to be updated. After time $t_{2}$,
if $Q$ didn't already finish, we have that $X\left(t\right)$ and
$Y\left(t\right)$ are permutations of each other, and so in expectation
$Q$ and $T$ perform the same. Thus $C\left(Q\left(T\right)\right)\leq C\left(T\right)$.

Let $S^{*}$ be an optimal strategy. The strategy $S_{m}=Q^{m}\left(S^{*}\right)$
is identical with $S_{\mathrm{max}}$ for the first $m$ steps. We
thus have 
\begin{align*}
C\left(S_{\mathrm{max}}\right) & \leq C\left(S_{m}\right)+\p\left[\text{algorithm runs longer than \ensuremath{m} steps }\right]\cdot n\\
 & =C\left(S^{*}\right)+\p\left[\text{algorithm runs longer than \ensuremath{m} steps }\right]\cdot n.
\end{align*}
The probability that the algorithm runs longer than $m$ steps goes
to $0$ as $m\to\infty$, and we have 
\[
C\left(S_{\mathrm{max}}\right)\leq C\left(S^{*}\right),
\]
so picking the maximum is also an optimal strategy. 
\end{proof}
\begin{rem}
For continuous-time processes, Jacka, Warren and Windridge \cite[Section 6]{jacka_warren_windridge_minimizing_time_to_a_decision}
prove that when $n=2$, picking the maximum entry is the optimal strategy
in a stronger sense: this strategy actually stochastically dominates
all other strategies. They conjecture that the same is true for general
$n$; this conjecture is strengthened by the above result (continuous-time
processes harbor subtle difficulties of measurability when there are
two variables with the same values). 
\end{rem}

\begin{prop}
The best fractional query algorithm needs to query only one bit: $C\left(S^{*}\right)\to1$
as $n\to\infty$.
\end{prop}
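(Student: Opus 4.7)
The plan is to establish matching asymptotic bounds $C(S^*) = 1 + o(1)$. The lower bound exploits the structure of $f$ directly. Under the uniform distribution, $\p\left[f = -1\right] = 2^{-n}$, and since the conditional distribution of $X_i(\infty)$ given $X_i(\tau)$ is supported on $\{-1,+1\}$ with mean $X_i(\tau)$, the zero-error hypothesis forces some $X_i(\tau) = +1$ whenever the algorithm outputs $1$ and every $X_i(\tau) = -1$ whenever it outputs $-1$. Hence
\[
C(Q) = \sum_{i=1}^{n}\e\left[X_i(\tau)^2\right] \geq 1\cdot\p\left[f=1\right] + n\cdot\p\left[f=-1\right] = 1 + (n-1)2^{-n},
\]
so $\liminf_{n\to\infty} C(S^*) \geq 1$.

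For the upper bound, I would exhibit, for each small dyadic $\delta > 0$, an explicit axis-aligned strategy $T_\delta$ with $C(T_\delta) \leq 1 + \delta + n(1+\delta)^{-n}$. The strategy processes bits sequentially: for each bit $i$ in turn, repeatedly query bit $i$ until $X_i$ hits either $+1$ (output $f = 1$ and halt) or $-\delta$ (move on to bit $i+1$); if all $n$ bits end this first phase at $-\delta$, a fallback phase brings each remaining bit to $\pm 1$ until $f$ is determined. Since the martingale exit problem from $[-\delta,1]$ starting at $0$ hits $+1$ with probability $\delta/(1+\delta)$, the index $K$ of the first successful bit is geometric with this parameter, giving $\e[K-1] = 1/\delta$. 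Conditioned on $K = k \leq n$, the final state has $X_k(\tau) = +1$, $X_i(\tau) = -\delta$ for $i < k$, and $X_i(\tau) = 0$ for $i > k$, so $\sum_i X_i(\tau)^2 = 1 + (k-1)\delta^2$. Taking expectation and bounding the fallback contribution crudely by $n$ yields
\[
C(T_\delta) \leq 1 + \delta^2\,\e[K-1] + n\cdot\p[K>n] = 1 + \delta + n(1+\delta)^{-n}.
\]
Choosing $\delta = n^{-1/2}$ makes both error terms vanish, so $C(S^*) \leq C(T_\delta) = 1 + o(1)$.

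Combining the two bounds gives $C(S^*) \to 1$. I expect the main technical subtlety to be matching the discrete axis-aligned formulation (with $\eps = 2^{-k}$ and a grid-valued threshold $\delta$) to the martingale exit calculations above; this is routine since the discrete exit probabilities and variances from $[-\delta,1]$ agree exactly with their continuous counterparts once $\delta$ is chosen on the dyadic grid. The lower bound itself is immediate from the zero-error hypothesis and uses none of the fractional query machinery.
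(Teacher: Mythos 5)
Your proof is correct, and the upper-bound construction is genuinely different from the paper's. The paper's strategy is \emph{parallel with shrinking thresholds}: it sets $\eps=1/n$ and in each round pushes \emph{every} remaining coordinate from $-t/n$ until it exits $[-(t+1)/n,\,1]$, so that each round succeeds with probability bounded away from $0$ (at least $1-e^{-1/3}$), the number of rounds is $O(1)$ in expectation, and after $t$ rounds the accumulated cost is at most $1+t^{2}/n$; conveniently, after $n$ rounds all bits sit exactly at $-1$, so no fallback phase is needed. Your strategy is \emph{sequential with a fixed threshold} $-\delta$: each probed bit costs $\delta^{2}$, the number of probes is geometric with mean about $1/\delta$, giving total overhead $\delta$ plus an exponentially rare fallback of cost at most $n$; the parameter choice $\delta\approx n^{-1/2}$ (rounded to the dyadic grid, as you note) then does the job. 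Both arguments are sound and of comparable length; yours has the small advantage that the threshold analysis is a single gambler's-ruin computation, while the paper's avoids any fallback case. You also supply a quantitative certificate-based lower bound $C(Q)\geq 1+(n-1)2^{-n}$ where the paper contents itself with the remark that at least one bit must be queried; this is a worthwhile addition, though to make it fully rigorous you should phrase the zero-error consequence jointly rather than coordinatewise (the conditional law of $X(\infty)$ given $\mathcal{F}_{\tau}$ is the \emph{product} measure with mean $X(\tau)$, by Property \ref{enu:linear_multivariates_are_martingales_condition}, and it is the positivity of the product $\prod_{i}\frac{1-X_{i}(\tau)}{2}$ that forces some coordinate to equal $1$ when the output is $1$), exactly as in the certificate argument of Section \ref{subsec:proof_of_bsw}.
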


\begin{proof}
We give a strategy $S$ whose cost tends to $1$ as $n\to\infty$.
Let $\eps=1/n$. Suppose that at time $t$, all coordinates are equal
to $-t/n$. For every coordinate $i$, run $X_{i}$ until it exits
the interval $\left[-\frac{t+1}{n},1\right]$. We call this a single
iteration. At the end of the iteration, either we have found a bit
whose value is $1$, or all bits are set to $-\frac{t+1}{n}$. The
probability for a single bit to exit at $1$ is $\frac{1}{n+t+1}$,
and the probability to exit at $-\frac{t+1}{n}$ is $\frac{n+t}{n+t+1}$.
The probability that all coordinates exit at $-\left(t+1\right)/n$
is then 
\[
\left(1-\frac{1}{n+t+1}\right)^{n}\leq\left(1-\frac{1}{3n}\right)^{n}\leq e^{-1/3}.
\]
After $t$ iterations, either all bits have value $-t/n$, or one
bit has value $1$ and all other bits have value $-t/n$ or $-\left(t-1\right)/n$.
If we stop after $t$ iterations, the expected cost is then bounded
by 
\[
c_{t}\leq1+\sum_{i=1}^{n}\left(\frac{t}{n}\right)^{2}=1+\frac{t^{2}}{n}.
\]
Denoting $p=e^{-1/3}$, the expected cost of the algorithm then satisfies
\[
C\left(S\right)\leq\sum_{t=1}^{n}\left(1-p\right)p^{t}\left(1+\frac{t^{2}}{n}\right)\leq1+\frac{1}{n}\sum_{t=1}^{n}\left(1-p\right)p^{t}t^{2}.
\]
The last term on the right hand side goes to $0$ as $n\to\infty$,
and we get the desired result. 
\end{proof}
Of course, at least one bit needs to be queried.

\subsection{$2$ bits}

Below we give a derivation of $u\left(x\right)$ for the $2$-bit
parity function. In fact, it suffices to calculate the value of $u\left(x\right)$
on the diagonal $x_{1}=x_{2}$: for all other points, we know what
the optimal direction is, and the value of $u$ is just a linear combination
of the value on the diagonal and the value on the boundary (which
is $0$). For example, for $x_{2}>x_{1}$, we have
\begin{equation}
u\left(x_{1},x_{2}\right)=\frac{x_{2}-x_{1}}{1-x_{1}}\left(0+\left(1-x_{2}\right)^{2}\right)+\frac{1-x_{2}}{1-x_{1}}\left(g\left(x_{1}\right)+\left(x_{2}-x_{1}\right)^{2}\right),\label{eq:u_for_or_2}
\end{equation}
where $g\left(x\right)=u\left(x,x\right)$. To obtain $g$, denote
$g_{\eps}=u_{\eps}\left(x,x\right)$. For axis-aligned processes with
jump size $\eps$, the optimal strategy always picks the largest entry,
and this gives a recurrence relation for $g_{\eps}$, with the following
strategy. Suppose that $X\left(t\right)=\left(x,x\right)$. Update
coordinate $1$, until either $X\left(t\right)=\left(1,x\right)$,
or $X\left(t\right)=\left(x-\eps,x\right)$. Then, update coordinate
$2$, until either $X\left(t\right)=\left(x-\eps,1\right)$, or $X\left(t\right)=\left(x-\eps,x-\eps\right)$.
A short calculation gives the relation
\[
\frac{g_{\eps}\left(x\right)-g_{\eps}\left(x-\eps\right)}{\eps}+\frac{\eps g_{\eps}\left(x\right)}{\left(1-x\right)^{2}}+\frac{2g_{\eps}\left(x\right)}{1-x}=2\left(1-x+\eps\right)+\eps\frac{1-x+\eps}{1-x}.
\]
This suggests that $g\left(x\right)$ satisfies the differential equation
\begin{equation}
g'\left(x\right)+g\left(x\right)\frac{2}{1-x}=2\left(1-x\right).\label{eq:differential_equation_for_diagonal}
\end{equation}
Formally, we have not shown that $\frac{g_{\eps}\left(x\right)-g_{\eps}\left(x-\eps\right)}{\eps}$
converges to $g'\left(x\right)$ as $\eps\to0$, or indeed that $g$
is even differentiable. However, due to the uniqueness guaranteed
by Theorem \ref{thm:intro_axis_aligned_laplacian}, if we find a continuous
function $v$ which satisfies the boundary conditions for OR and has
$\axis v=-2$ in $\left(-1,1\right)^{n}$, then necessarily $u=v$. 

The solution to the differential equation (\ref{eq:differential_equation_for_diagonal})
(with appropriate boundary conditions) is given by 
\begin{equation}
g\left(x\right)=2\left(1-x\right)^{2}\log2+4x\log\left(1-x\right)-2x^{2}\log\left(1-x\right)-2\log\left(1-x\right).\label{eq:g_under_mild_assumption}
\end{equation}
The function $u\left(x_{1},x_{2}\right)$ obtained by plugging in
(\ref{eq:g_under_mild_assumption}) into (\ref{eq:u_for_or_2}) is
given in Figure \ref{fig:or_analytic_solution}. It can be shown to
satisfy $\axis u\left(x_{1},x_{2}\right)=-2$ as needed.

Some sample paths of $X\left(t\right)$ for $\eps=2^{-7}$ are given
in Figure \ref{fig:or_sample_paths}.

\begin{figure}
\begin{centering}
\includegraphics[scale=0.5]{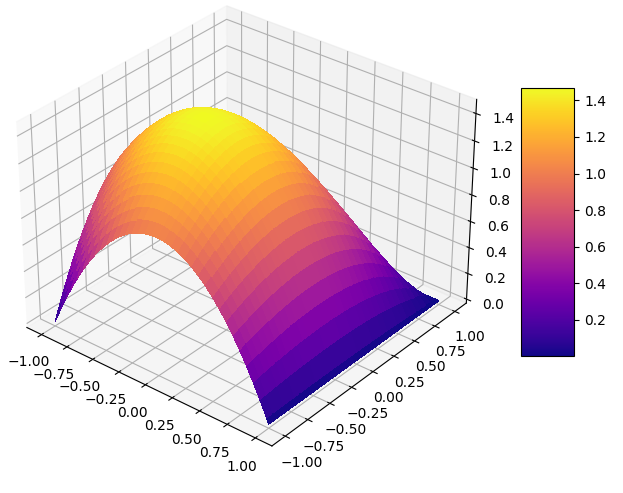}
\par\end{centering}
\caption{\label{fig:or_analytic_solution}The limiting cost $u\left(x,y\right)$
for the OR function on two bits.}
\end{figure}

\subsection{A heuristic for the OR of two functions}

Let $n=n_{1}+n_{2}$ and $g_{i}:\left\{ -1,1\right\} ^{n_{i}}\to\left\{ -1,1\right\} $
be two given functions. Let $f\left(x_{1},x_{2}\right)=OR\left(g_{1}\left(x_{1}\right),g_{2}\left(x_{2}\right)\right)$,
where $x_{i}\in\left\{ -1,1\right\} ^{n_{i}}$. For a given $\eps>0$,
what is the optimal direction-choosing strategy $S$ for $f$?

The following may be a useful heuristic for choosing whether to update
$g_{1}$ or to update $g_{2}$. Let $S_{1}$ and $S_{2}$ be the optimal
strategies for $g_{1}$ and $g_{2}$, and let $u_{1}\left(x\right)$
and $u_{2}\left(y\right)$ be the costs for running $S_{1}$ on $x$
and $S_{2}$ on $y$, respectively. Consider the (definitely non-optimal)
strategy which chooses a function $g_{i}$ and completely evaluates
it using $S_{i}$. The expected cost for doing so is $u_{i}\left(x_{i}\right)$.
If eventually $g_{i}=1$ (which happens with probability $\frac{1+g_{i}\left(x_{i}\right)}{2}$),
then the algorithm is finished. Otherwise, the other function needs
to be computed. In total, the cost for picking $g_{i}$ this way is,
for $\left\{ i,j\right\} =\left\{ 1,2\right\} $, 
\[
c\left(g_{i}\right)=u_{i}\left(x_{i}\right)+\frac{1-g_{i}\left(x_{i}\right)}{2}u_{j}\left(x_{j}\right).
\]
A short calculation shows that this simple strategy should then pick
$i=1$ (i.e. $c\left(g_{1}\right)\leq c\left(g_{2}\right)$) only
if 
\begin{equation}
\frac{u_{1}\left(x_{1}\right)}{1+g_{1}\left(x_{1}\right)}\leq\frac{u_{2}\left(x_{2}\right)}{1+g_{2}\left(x_{2}\right)}.\label{eq:or_heuristic}
\end{equation}
This suggests a heuristic for functions of the form $f=OR\left(g_{1},g_{2}\right)$:
at each step, pick either $g_{1}$ or $g_{2}$ according to the condition
(\ref{eq:or_heuristic}), and update a single bit according to $S_{1}$
or $S_{2}$. 

One possible application for this heuristic is the iterated majority
function. Let $k\in\n$ and $n=3^{k}$. The iterated majority function
$f_{k}$ is recursively defined as follows:
\[
f_{k}=\begin{cases}
\mathrm{maj}\left(x_{1},x_{2},x_{3}\right) & k=1\\
\mathrm{maj}\left(f_{k-1}\left(x_{1},\ldots,x_{3^{k-1}}\right),f_{k-1}\left(x_{3^{k-1}+1},\ldots,x_{2\cdot3^{k-1}}\right),f_{k-1}\left(x_{2\cdot3^{k-1}+1},\ldots,x_{3^{k}}\right)\right) & \text{otherwise,}
\end{cases}
\]
where $\mathrm{maj}:\left\{ -1,1\right\} ^{3}\to\left\{ -1,1\right\} $
returns the most frequent bit in its input. The function $f_{k}$
can be represented by a complete ternary tree of depth $k$, where
the leaves are the input bits $x_{1},\ldots,x_{n}$, and each internal
node has value equal to the majority of its three children. The value
of $f_{k}$ is the value of the root.

The best decision tree complexity of iterated majority is still unknown.
The simplest non-trivial algorithm is as follows: Pick two random
subtrees, and recursively compute their value. If they are equal,
the algorithm terminates. If not, then the third subtree needs to
be computed as well. The expected number of bits queried with this
algorithm is $2.5^{k}$. However, better algorithms exist, which do
not evaluate entire subtrees at once. For example, the algorithm given
by Jayram, Kumar and Sivakumar \cite[Appendix B]{jayram_kumar_sivakumar_two_applications_of_information_complexity}
requires reading only approximately $\left(\frac{13+\sqrt{713}}{16}\right)^{n}\approx2.4813^{k}$
bits on average, under the uniform input. This algorithm recursively
reads one random subtree, but can jump between the two remaining subtrees
until the function value has been determined. Once a single subtree
has been read, the iterated majority turns into either an OR (if the
subtree's value was $1$) or an AND function (if the subtree's value
was $-1$) between the two remaining subtrees. It is then possible
to apply the above heuristic to choose which of the two trees to update.
Using easier-to-compute criterion 
\[
\frac{n-\norm{x_{1}}_{2}^{2}}{1+g_{1}\left(x_{1}\right)}\leq\frac{n-\norm{x_{2}}_{2}^{2}}{1+g_{2}\left(x_{2}\right)}
\]
rather than (\ref{eq:or_heuristic}), we have performed numerical
simulations of this strategy for $k=1,\ldots,9$, yielding an estimated
average number of bits of order $\leq2.45^{k}$. 

\section{Fractional random-turn games and the influence process\label{sec:fractional_random_turn_games}}

Consider the following random-turn two-player game, introduced by
Peres, Schramm, Sheffield and Wilson in \cite{peres_schramm_sheffield_wilson_random_turn_hex}.
Let $f:\left\{ -1,1\right\} ^{n}\to\r$, and let $x\left(0\right)=\left(0,\ldots,0\right)\in\r^{n}$.
At each time $t\in\n$, a coin is flipped. If the result is heads,
player I picks an index $i$ according to some strategy and sets $x_{i}\left(t\right)=1$.
If the result is tails, player II picks an index $i$ and sets $x_{i}\left(t\right)=-1$.
After $n$ turns, we have $x\left(n\right)\in\left\{ -1,1\right\} ^{n}$.
Player I then gains $f\left(x\right)$, while player II loses $f\left(x\right)$.
The goal of each player is to maximize their expected payoff, and
the value of the game is $v=\e\left[f\left(x\left(n\right)\right)\right]$,
when both players play optimally. 
\begin{thm}[Theorem 2.1 in \cite{peres_schramm_sheffield_wilson_random_turn_hex}]
Let $\mu_{1/2}$ be the uniform measure on the hypercube. The value
of a random-turn game is $\e_{x\sim\mu_{1/2}}f\left(x\right)$. Moreover,
any optimal strategy for one of the players is also an optimal strategy
for the other player.
\end{thm}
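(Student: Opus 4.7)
The plan is a backward induction on the number of remaining turns, driven by a simple ``conservation law'' for the conditional expectation of $f$ that collapses the $\min$--$\max$ structure of the random-turn game into a single expectation.

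Let $V(x)$ denote the value of the subgame starting from a partial assignment $x \in \{-1,0,1\}^n$, where $0$ marks a coordinate that has not yet been set. For terminal $x \in \{-1,1\}^n$ we have $V(x) = f(x)$. For every other $x$, since the coin flip at each turn is independent of the players' strategies, a standard Bellman / backward-induction argument yields
\[
V(x) \;=\; \tfrac{1}{2}\max_{i:\,x_i=0} V(x^{i\to 1}) \;+\; \tfrac{1}{2}\min_{i:\,x_i=0} V(x^{i\to -1}).
\]
I would then prove by induction on the number of zero coordinates of $x$ that
\[
V(x) \;=\; \e_{y\sim\mu_{1/2}}\!\bigl[f(y)\,\bigm|\, y_j=x_j \text{ for all } j \text{ with } x_j\neq 0\bigr].
\]
The base case (no zeros) is trivial.

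For the inductive step, the crucial observation is the following conservation law: by the tower property of conditional expectation together with the induction hypothesis applied to $x^{i\to 1}$ and $x^{i\to -1}$,
\[
V(x^{i\to 1}) + V(x^{i\to -1}) \;=\; 2\,\e\!\left[f \mid x\right],
\]
for \emph{every} coordinate $i$ with $x_i=0$. Crucially, the right-hand side is independent of $i$. Consequently
\[
\min_{i:\,x_i=0} V(x^{i\to -1}) \;=\; 2\,\e[f\mid x] \;-\; \max_{i:\,x_i=0} V(x^{i\to 1}),
\]
and moreover an index $i^\ast$ maximizes $V(x^{i\to 1})$ if and only if it minimizes $V(x^{i\to -1})$. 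Substituting this identity into the Bellman equation collapses the $\tfrac{1}{2}\max+\tfrac{1}{2}\min$ into exactly $\e[f\mid x]$, closing the induction. Evaluating at the root $x=(0,\ldots,0)$ gives $V = \e_{\mu_{1/2}}[f]$, which is the first claim.

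The second claim follows for free from the same observation: the set of moves optimal for player I at position $x$, namely $\arg\max_i V(x^{i\to 1})$, coincides with the set of moves optimal for player II, namely $\arg\min_i V(x^{i\to -1})$. I do not anticipate any real obstacle here; the state space is finite, the induction is routine, and the only ``clever'' ingredient is the conservation law, which is an immediate byproduct of the inductive hypothesis that $V$ matches the conditional expectation one step deeper.
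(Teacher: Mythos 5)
Your proof is correct and is essentially the standard argument of Peres--Schramm--Sheffield--Wilson: the paper itself states this theorem as a citation without reproducing a proof, and the cited proof is exactly your backward induction with the conservation law $V(x^{i\to 1})+V(x^{i\to -1})=2\,\e[f\mid x]$ collapsing the $\max$/$\min$ and identifying the two players' optimal move sets. No gaps.
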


\begin{lem}[Lemma 3.1 in \cite{peres_schramm_sheffield_wilson_random_turn_hex}]
Let $f:\left\{ -1,1\right\} ^{n}\to\left\{ -1,1\right\} $ be monotone.
Let $T$ be the set of bits that have already been fixed at time $t$,
and let $f_{\mid T}$ be the restriction of $f$ to those bits. Then
a move is optimal if and only if it picks a variable $x_{i}$ such
that $\Inf_{i}\left(f_{\mid T}\right)$ is maximal. 
\end{lem}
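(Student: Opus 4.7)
The plan is to combine the theorem quoted immediately above --- that the value of any random-turn game on a function $g$ equals $\e g$ --- with a standard identity for influences of monotone Boolean functions, applied inductively through the game tree.

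First, I would observe that once the bits in $T$ have been fixed, the remaining game is itself a random-turn game played on the restriction $f_{\mid T}:\left\{ -1,1\right\} ^{\left[n\right]\setminus T}\to\left\{ -1,1\right\} $, which remains monotone. Applying the previous theorem to this subgame, its value is $\e f_{\mid T}$. By backward induction on $\left|T\right|$, the value of the game at the state obtained from $T$ by further fixing $x_{i}=b$ equals $\e_{y}\left[f_{\mid T}\left(y^{i\to b}\right)\right]$, where the expectation is uniform over the remaining coordinates.

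Next, I would establish the monotonicity identity
\[
\e_{y}\left[f_{\mid T}\left(y^{i\to +1}\right)\right]=\e f_{\mid T}+\Inf_{i}\left(f_{\mid T}\right),\qquad\e_{y}\left[f_{\mid T}\left(y^{i\to -1}\right)\right]=\e f_{\mid T}-\Inf_{i}\left(f_{\mid T}\right).
\]
These follow from $\e f_{\mid T}=\tfrac{1}{2}\e_{y}\left[f_{\mid T}\left(y^{i\to +1}\right)\right]+\tfrac{1}{2}\e_{y}\left[f_{\mid T}\left(y^{i\to -1}\right)\right]$ together with the observation that monotonicity of $f_{\mid T}$ and its $\pm1$-valuedness force $\tfrac{1}{2}\left(f_{\mid T}\left(y^{i\to +1}\right)-f_{\mid T}\left(y^{i\to -1}\right)\right)\in\left\{ 0,1\right\} $, whose expectation is precisely $\Inf_{i}\left(f_{\mid T}\right)$.

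The conclusion is then immediate. When it is Player I's turn at state $T$, she wishes to pick $i$ maximizing the resulting subgame value $\e f_{\mid T}+\Inf_{i}\left(f_{\mid T}\right)$; since $\e f_{\mid T}$ does not depend on $i$, this is equivalent to maximizing $\Inf_{i}\left(f_{\mid T}\right)$. When it is Player II's turn, he wishes to minimize $\e f_{\mid T}-\Inf_{i}\left(f_{\mid T}\right)$, which is again equivalent to maximizing $\Inf_{i}\left(f_{\mid T}\right)$; as a byproduct, this recovers the ``any optimal strategy for one player is optimal for the other'' clause of the previous theorem. I do not anticipate a substantive obstacle: the theorem does the heavy lifting in providing the subgame values, and the monotonicity identity is an elementary computation.
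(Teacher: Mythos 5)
Your proof is correct, and the paper itself does not reprove this lemma --- it is quoted from Peres--Schramm--Sheffield--Wilson, whose argument is essentially the one you give: the subgame value is the conditional expectation $\e f_{\mid T}$, and for a monotone $\pm1$-valued function the shift in that expectation from fixing $x_{i}=\pm1$ is exactly $\pm\Inf_{i}\left(f_{\mid T}\right)$, so both players maximize the influence. Nothing further is needed.
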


In the language of decision trees, attempting to maximize the expected
gain gives rise to a decision tree which always picks a variable with
maximal influence. 

These results can be generalized to fractional query algorithms by
considering \emph{fractional random-turn two-player games}. In these
games, if the coin flip turns up heads, player I picks an index $i$
and sets $x_{i}\left(t\right)=x_{i}\left(t\right)+\eps$, and if the
result is tails, player II picks an index $i$ and sets $x_{i}\left(t\right)=x_{i}\left(t\right)-\eps$,
with the constraint that once a bit reaches the values $\pm1$, it
cannot be picked any more. The following can be proved using the same
proof techniques as in \cite{peres_schramm_sheffield_wilson_random_turn_hex}.
\begin{thm}
Let $\eps=1/M$ for some integer $M>0$. Let $p\in\left[-1,1\right]^{n}\intersect\frac{1}{M}\z^{n}$.
The value of a fractional random-turn game with $x\left(0\right)=p$
is $\e_{x\sim\mu_{p}}f\left(x\right)=f\left(p\right)$. Moreover,
any optimal strategy for one of the players is also an optimal strategy
for the other player. 
\end{thm}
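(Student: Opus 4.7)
The plan is to mimic the original Peres--Schramm--Sheffield--Wilson argument in the fractional setting, with the harmonic (multilinear) extension of $f$ playing the role that the Boolean function value played in the discrete setting. Since $f(p)=\e_{x\sim\mu_p}f(x)$ whenever $f$ is written multilinearly and $\mu_p$ is the product measure with mean $p$, the theorem reduces to showing that the value $V_\eps(p)$ of the fractional random-turn game started at $p$ equals $f(p)$, and then reading off the optimal strategies from the dynamic programming equation.

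First I would verify that the game terminates in finite expected time. Between coin-flips that affect it, each free coordinate performs a lazy simple random walk on $\{-1,-1+\eps,\ldots,1\}$ with absorbing endpoints, which is absorbed in $O(\eps^{-2})$ expected steps; since at every step some still-free coordinate is touched, a crude union bound over the $n$ bits shows the game ends almost surely in finite expected time. This justifies the standard backward-induction characterization of the value: on non-terminal states $p$,
\[
V_\eps(p)=\tfrac{1}{2}\max_{i:\,p_i\in(-1,1)}V_\eps(p+\eps e_i)+\tfrac{1}{2}\min_{j:\,p_j\in(-1,1)}V_\eps(p-\eps e_j),
\]
with boundary data $V_\eps=f$ on $\{-1,1\}^n$. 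A straightforward comparison argument shows that this equation together with the boundary data has a unique solution.

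Next I would verify that $f$ itself satisfies the same recursion. Multilinearity gives $f(p\pm\eps e_i)=f(p)\pm\eps\,\partial_i f(p)$, and crucially $\partial_i f(p)$ does not depend on $p_i$, so the two players face index sets that depend on $p$ only through which coordinates are still free. Setting $M(p)=\max_{i:\,p_i\in(-1,1)}\partial_i f(p)$,
\[
\max_{i:\,p_i\in(-1,1)}f(p+\eps e_i)=f(p)+\eps M(p),\qquad \min_{j:\,p_j\in(-1,1)}f(p-\eps e_j)=f(p)-\eps M(p),
\]
and averaging yields $f(p)$. By uniqueness, $V_\eps=f$, which establishes the value claim.

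For the second assertion, the dynamic programming equation shows that a strategy for player I is optimal iff it always picks from $\mathrm{argmax}_{i:\,p_i\in(-1,1)}V_\eps(p+\eps e_i)=\mathrm{argmax}_{i:\,p_i\in(-1,1)}\partial_i f(p)$, and a strategy for player II is optimal iff it always picks from $\mathrm{argmin}_{j:\,p_j\in(-1,1)}V_\eps(p-\eps e_j)=\mathrm{argmax}_{j:\,p_j\in(-1,1)}\partial_j f(p)$; these two sets coincide, so any optimal strategy for one player is automatically optimal for the other. The main technical obstacle is the termination and uniqueness step: one must handle boundary effects (where some coordinates have been frozen) carefully enough that the dynamic programming equation genuinely pins the value function down. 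Once this is in hand, the multilinearity computation that makes $f$ solve the recursion is essentially immediate.
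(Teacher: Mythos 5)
The paper offers no proof of this theorem at all --- it only remarks that the result ``can be proved using the same proof techniques as'' Peres--Schramm--Sheffield--Wilson --- so your proposal is essentially the intended argument made explicit: termination, the Bellman equation, multilinearity showing the harmonic extension solves it, and the observation that $\mathrm{argmax}_{i}\partial_{i}f\left(p\right)$ serves both players at once. That skeleton is right, and the multilinearity computation $f\left(p\pm\eps e_{i}\right)=f\left(p\right)\pm\eps\partial_{i}f\left(p\right)$ is exactly the correct replacement for PSSW's conditional-expectation identity. However, your termination argument is wrong as stated: under an arbitrary pair of strategies a free coordinate is \emph{not} a lazy simple random walk (if player I always picks coordinate $1$ and player II always picks coordinate $2$, coordinate $1$ is only ever incremented), and a union bound does not rule out a coordinate being touched infinitely often with increments and decrements balancing. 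The clean fix is global rather than coordinatewise: while the game is running, $\sum_{i}x_{i}\left(t\right)$ moves by $+\eps$ or $-\eps$ with probability $\frac{1}{2}$ each no matter which coordinates the players choose, so $\left(\sum_{i}x_{i}\left(t\wedge\tau\right)\right)^{2}-\eps^{2}\left(t\wedge\tau\right)$ is a martingale bounded by $n^{2}$, giving $\e\tau\leq n^{2}/\eps^{2}$ for every strategy pair.

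The second point to address explicitly is uniqueness of the solution to the dynamic programming equation, which you assert but which is precisely where the fractional game departs from PSSW: their proof is backward induction on the number of remaining moves, unavailable here because the horizon is unbounded. The comparison argument does work and you should record it: if $D=V-W$ vanishes on $\left\{ -1,1\right\} ^{n}$, then $D\left(p\right)\leq\frac{1}{2}\max_{i}D\left(p+\eps e_{i}\right)+\frac{1}{2}\max_{j}D\left(p-\eps e_{j}\right)$, so from any maximizer of $D$ one can step to an ``up''-neighbor where the maximum is still attained; each such step increases $\sum_{i}p_{i}$ by $\eps$, so after finitely many steps one reaches a corner, forcing $\max D\leq0$ (and symmetrically $\min D\geq0$). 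Alternatively, and closer in spirit to PSSW, you can bypass uniqueness entirely: if player I always picks $i\in\mathrm{argmax}_{i}\partial_{i}f\left(x\left(t\right)\right)$ then $f\left(x\left(t\right)\right)$ is a bounded submartingale whatever player II does, since the expected increment is $\frac{\eps}{2}\left(\partial_{i_{I}}f-\partial_{i_{II}}f\right)\geq0$; optional stopping with the termination bound gives value at least $f\left(p\right)$, the symmetric supermartingale argument for player II gives the reverse inequality, and the coincidence of the two argmax sets yields the final claim about shared optimal strategies.
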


\begin{lem}
Let $\eps=1/M$ for some integer $M>0$, and let $f:\left\{ -1,1\right\} ^{n}\to\left\{ -1,1\right\} $
be monotone. A move is optimal if and only if it selects a bit with
maximal $\partial_{i}f\left(x\left(t\right)\right)$. 
\end{lem}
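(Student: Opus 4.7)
The plan is to combine the preceding theorem---which identifies the game value with $f$ itself, $v(x)=f(x)$---with the fact that the harmonic extension of a Boolean function is multilinear in each coordinate. Multilinearity yields the exact first-order identity
\[
f(x \pm \eps e_{i}) - f(x) = \pm \eps\, \partial_{i} f(x)
\]
for any $i$ with $x_{i} \pm \eps \in [-1,1]$; this is the continuous analogue of the flip identity used in PSSW's proof of the discrete version, and it is what makes the argument essentially immediate.

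I would then write the standard two-player Bellman equation for the value,
\[
v(x) = \tfrac{1}{2} \max_{i \in I_{+}(x)} v(x + \eps e_{i}) + \tfrac{1}{2} \min_{j \in I_{-}(x)} v(x - \eps e_{j}),
\]
where $I_{+}(x) = \{i : x_{i} < 1\}$ and $I_{-}(x) = \{j : x_{j} > -1\}$ are the coordinates still eligible to be moved up or down, with the characterization that a Player I heads move is optimal if and only if it attains the max, and symmetrically for Player II on tails. Substituting $v = f$ and the multilinear identity, maximizing $f(x) + \eps\, \partial_{i} f(x)$ (Player I) and minimizing $f(x) - \eps\, \partial_{j} f(x)$ (Player II) both collapse to choosing a coordinate that maximizes $\partial_{i} f(x(t))$, giving the ``if'' direction for both players simultaneously. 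The ``only if'' direction follows because any move with $\partial_{i} f(x) < \max_{k} \partial_{k} f(x)$ yields a next-position value strictly inferior to the optimal one, and applying $v = f$ one step later propagates that gap to a strictly worse expected payoff against optimal opposing play.

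Monotonicity of $f$ enters only to ensure the argmax is non-negative and sign-matched with the PSSW ``maximum-influence'' formulation: $\partial_{i} f(y) = (f(y^{i \to 1}) - f(y^{i \to -1}))/2$ is a non-negative Boolean function of $n-1$ bits, and since any multilinear $g$ satisfies $g(y) = \e_{z \sim \mu_{y}}[g(z)]$, its harmonic extension is non-negative on $[-1,1]^{n-1}$, so $\partial_{i} f \geq 0$ on $[-1,1]^{n}$. The only potentially delicate step is bookkeeping the frozen coordinates correctly in $I_{\pm}$, but this is handled by the same convention already implicit in the preceding value theorem and introduces no extra difficulty; the main conceptual point is simply that, thanks to multilinearity of the harmonic extension, the one-step Bellman analysis is genuinely first-order in the jump size and so reduces without any error term to comparing the directional derivatives $\partial_{i} f$.
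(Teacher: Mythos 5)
Your argument is correct and is exactly the adaptation of PSSW's Lemma 3.1 that the paper has in mind (the paper gives no explicit proof, deferring to the ``same proof techniques'' as PSSW): multilinearity of the harmonic extension gives $f(x\pm\eps e_{i})-f(x)=\pm\eps\,\partial_{i}f(x)$ exactly, and substituting the value $v=f$ into the one-step Bellman recursion reduces both players' choices to maximizing $\partial_{i}f(x(t))$. The only nit is that your index sets $I_{\pm}$ should both equal $\{i:-1<x_{i}<1\}$, since the paper's convention freezes a bit once it hits \emph{either} endpoint; as you note, this is pure bookkeeping.
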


When both players use the same optimal strategy, $x\left(t\right)$
is an axis-aligned jump process. However, it does not necessarily
minimize the cost $\e\norm{x\left(\tau\right)}_{2}^{2}$. For example,
for decision trees, Simkin \cite{simkin_random_turn_and_richman_games}
showed that the influence-based decision tree for iterated majority
on 9 bits reads $396/64$ bits in expectation, while the best tree
reads only $393/64$ bits. For fractional algorithms, Jacka, Warren
and Windridge \cite{jacka_warren_windridge_minimizing_time_to_a_decision}
considered evaluating the value of $3$-majority, when $X\left(t\right)$
is $3$-dimensional Brownian motion, and only one entry can be moved
at a time. They show that the strategy which always moves the middle
bit takes the least amount of time in expectation. Since $\e B\left(t\right)^{2}=t$
for Brownian motion, this is the same as minimizing the cost of a
fractional query algorithm. However, the derivatives of $3$-majority
$f\left(x_{1},x_{2},x_{3}\right)=\frac{1}{2}\left(x_{1}+x_{2}+x_{3}-x_{1}x_{2}x_{3}\right)$
are given by, for $\left\{ i,j,k\right\} =\left\{ 1,2,3\right\} $,
\[
\partial_{i}f\left(x_{1},x_{2},x_{3}\right)=\frac{1}{2}-\frac{1}{2}x_{j}x_{k}.
\]
When $0<x_{1}<x_{2}<x_{3}$, we have $x_{1}x_{2}<x_{1}x_{3}<x_{2}x_{3}$,
and so $\partial_{3}f>\partial_{2}f>\partial_{1}f$. In this case
the influence-based process would pick the largest entry, not the
middle one. 

\bibliographystyle{plain}
\bibliography{noise_sensitivity_from_fractional_query_algorithm}

\end{document}